\newcommand{\Z}{{\bf Z}}
\newcommand{\T}{{\cal T}}
\newtheorem{theorem}{Theorem}[section]
\newtheorem{definition}[theorem]{Definition}
\newtheorem{lemma}[theorem]{Lemma}
\newtheorem{corollary}[theorem]{Corollary}
\newenvironment{proof}[1][Proof]{\textbf{#1.} }{\ \rule{0.5em}{0.5em}}
\def\whitebox{{\hbox{\hskip 1pt
    \vrule height 6pt depth 1.5pt
    \lower 1.5pt\vbox to 7.5pt{\hrule width
            3.2pt\vfill\hrule width 3.2pt}%
    \vrule height 6pt depth 1.5pt
    \hskip 1pt } }}
\def\qed{\ifhmode\allowbreak\else\nobreak\fi\hfill\quad\nobreak
         \whitebox\medbreak}
\begin{document}

\baselineskip 17pt

\title{Constructions and Applications of Perfect Difference Matrices and Perfect Difference Families}

\author{Xianwei Sun$^1,$~~~~Huangsheng Yu$^2,$~~~~Dianhua Wu$^{2}$~~~~}

\date{}
\maketitle
\noindent {\small $1$ Hubei Key Lab of Transportation Internet of Things, Wuhan University of Technology,
Wuhan 430070, China;}\\
\noindent {\small $2$ School of Mathematics and Statistics, Guangxi Normal University,
Guilin 541006, China.}\\


\date{}
\maketitle \noindent {\bf Abstract}
Perfect difference families (PDFs for short) are important both in theoretical and in applications.
Perfect difference matrices (PDMs for short) and the equivalent structure had been extensively studied and used to
construct perfect difference families, radar array and related codes.
The necessary condition for the existence of a PDM$(n,m)$ is $m\equiv 1\pmod2$ and $m\geq n+1$. So far,
PDM$(3,m)$s  exist for odd $5\leq m\leq 201$ with two definite exceptions of $m=9,11$.
In this paper, new recursive constructions on PDM$(3,m)$s are
investigated, and it is proved that there exist PDM$(3,m)$s  for any odd $5\leq m<1000$ with two definite exceptions of $m=9,11$
and $33$ possible exceptions.
A complete result of $(g,\{3,4\},1)$-PDFs with the ratio of block size  $4$ no less than
$\frac{1}{14}$ is obtained. As an application, a complete class of perfect  strict optical
orthogonal codes with weights $3$ and $4$ is obtained.

 \noindent {\bf Keyword:}
Additive sequence of permutations, perfect difference family, perfect difference matrix,
strict optical orthogonal code, variable-weight.

\section{Introduction}

Let ${\mathbf Z}_v$ be the residue ring of integers modulo $v$.
The {\em directed list of differences} of a subset $C=\{c_1, c_2, \ldots,c_t\}$ of ${\mathbf Z}_v$ will be denoted by
the multiset $\Delta C=\{c_{i} - c_{j} \ | \  1 \le j< i \le t\}$.
More generally, the directed list of differences of a set ${\cal C}$ of subsets of ${\mathbf Z}_v$ is the multiset
$\Delta {\cal C}=\bigcup\limits_{C\in {\cal C}}\Delta C$.

In this paper, we will always assume that $v$ is odd, and we will also use $\Delta {\cal C}$
to denote the directed list of differences of a set ${\cal C}$ of subsets of ${\mathbf Z}_v$.

Let $K = \{k_1, k_2, \ldots, k_s\}$ be a set of  positive
integers,  ${\cal B} = \{B_1,B_2,\ldots,B_h\}$ be a collection of subsets of ${\mathbf Z}_v$ called {\em blocks},
 and $L \subseteq \{1,2,\ldots, \frac{v-1}{2}\}$.
If the list of directed differences ${\Delta}{\cal B}$
covers each element of the set $\{1,2,\ldots, \frac{v-1}{2}\} \setminus L$ exactly once, then we call ${\cal B}$
a $(v,K,1)$ {\em perfect difference packing}, or $(v,K,1)$-PDP, with {\em difference leave} $L$.
A $(v,K,1)$-PDP, ${\cal B} = \{B_1,B_2,\ldots,B_h\}$, with difference leave $L=\emptyset$,
is called a $(v,K,1)$ {\em perfect difference family}, or briefly a $(v,K,1)$-PDF. When $K = \{k\}$, the notation $(v,k,1)$-PDF is used.


Perfect difference families are in fact a special case of perfect systems of difference sets.
Perfect systems of difference sets were first introduced
in \cite{BKT76, KT77}
in connection with a problem of spacing movable antennas in radioastronomy.
Let $c, m, p_1, \ldots, p_m$ be positive integers, and  ${\cal S} =\{S_1, S_2, \ldots, S_m\}$,
where $S_i = \{s_{i1}, s_{i2}, \ldots, s_{ip_i}\}$, $0 \le s_{i1} < s_{i2} < \cdots < s_{ip_i}$,
and all $s_{ij}$'s are integers.
We say that  ${\Delta}{\cal S} = \{{\Delta}S_1, {\Delta}S_2, \ldots, {\Delta}S_m\}$ is a
{\em perfect system of difference sets for $c$} (or {\em starting with $c$}, or {\em with threshold $c$}),
or briefly, an $(m,\{p_1,p_2,\ldots,p_m\},c)$-PSDS, if
${\Delta}{\cal S} = \{c,c+1,\ldots, c-1+\sum\limits_{1 \le i \le m}{p_i \choose 2}\}$.
Each subset $S_i$ is called a {\em block} and each set ${\Delta}S_i$ a {\em component} of the system.
An $(m,\{p_1,p_2,\ldots,p_m\},c)$-PSDS  is {\em regular} if $p_1 = p_2 = \cdots = p_m = p$.
As usual, a regular $(m,\{p\},c)$-PSDS is abbreviated to $(m,p,c)$-PSDS.
Obviously, a $(v,k,1)$-PDF is a regular $(\frac{v-1}{k(k-1)},k,1)$-PSDS.


In spite of extensive efforts put into perfect systems of difference sets
(see, for examples \cite{AB04, Ab82, Ab84, KT77, KT79, Ma87, Ro92, Tu80, Tu82}),
known constructions and existence results on this topic are not rich.
The necessary condition for the existence of a $(\frac{v-1}{k(k-1)},k,1)$-PSDS,
i.e. a $(v,k,1)$-PDF, is $v \equiv 1 \pmod{k(k-1)}$.
In \cite{BKT76, KT80}, it is  proved that
perfect difference families cannot exist for $k \ge 6$.
For $k = 3$, the existence problem has been completely settled:
a $(v,3,1)$-PDF exists if and only if $v \equiv 1,7 \pmod{24}$.
For $k=4$, however, the existence problem is far from settled: we only know that there exists a $(12t+1,4,1)$-PDF  for $t=1,4$-$1000$
and there exist no $(12t+1,4,1)$-PDF for $t=2,3$, see \cite{GMS2010} and the references therein.
For $k=5$, the existence results are even scarce.
The interested readers are referred to \cite[p. 400]{AB07} for a recent survey
of  perfect difference families.

It is noted that perfect difference families are also
closely related to many other concepts such as cyclic difference families, difference triangle sets,
optical orthogonal codes, and strict optical orthogonal codes.
A {\em cyclic} $(v,k,1)$  difference family \cite {AB07}, briefly a {\em cyclic} $(v,k,1)$-DF,
is a collection ${\cal F}$ of $k$-subsets of ${\mathbf Z}_v$ such that
${\Delta}{\cal F} \cup (-{\Delta}{\cal F}) =  {\mathbf Z}_v \setminus \{0\}$.
A $(v,k,1)$-PDF immediately implies a cyclic $(v,k,1)$-DF.
An $(n,k)$ {\em difference triangle set} \cite{Sh07},
or $(n,k)$-DTS (the notion D${\Delta}$S is used in \cite{Sh07}),
is a collection ${\cal S}$ of $n$ $(k+1)$-subsets of ${\mathbf Z}_v$
such that the elements in ${\Delta}{\cal S}$ are all distinct positive integers.
The {\em scope} of ${\cal S}$ is the maximum of ${\Delta}{\cal S}$,
so the scope is at least equal to $n{k+1 \choose 2}$.
In the case that this lower bound is met,
${\cal S}$ can be viewed as a $(2n{k+1 \choose 2}+1,k+1,1)$-PDF.
A $(v,k,1)$ {\em optical orthogonal code} \cite{He07}, or $(v,k,1)$-OOC, is a collection ${\cal C}$
of $k$-subsets of ${\mathbf Z}_v$ such that ${\Delta}{\cal F} \cup (-{\Delta}{\cal F})$
does not have repeated elements in ${\mathbf Z}_v \setminus \{0\}$
and the set-wise stabilizer of each $k$-subset of ${\cal C}$ is the subgroup $\{0\}$ of ${\mathbf Z}_v$.
A $(v,k,1)$-OOC is {\em optimal} if its size reaches the upper bound $\lfloor\frac{v-1}{k(k-1)}\rfloor$.
It can be easily seen that a $(v,k,1)$-PDF gives  optimal $(u,k,1)$-OOCs
with  $v\leq u\leq v+k(k-1)-1$.

Kotzig and Turgeon \cite{KT79} discovered that arbitrarily large perfect systems of difference sets
can be constructed from smaller ones via additive sequences of permutations.
Let $X^{(1)} = (x_1^{(1)}, \ldots, x_m^{(1)})$ be an ordered set of distinct integers.
For $j = 2, \ldots, n$, let $X^{(j)} = (x_1^{(j)}, \ldots, x_m^{(j)})$
be a permutation of distinct integers in $X^{(1)}$.
Then the ordered set $(X^{(1)}, X^{(2)}, \ldots, X^{(n)})$ is called an
{\em additive sequence of permutations} of {\em length} $n$ and {\em order} $m$, ASP$(n,m)$ for short,
if for every subsequence of consecutive permutations of the ordered set $X^{(1)}$,
their vector-sum is again a permutation of $X^{(1)}$.
The set  $X^{(1)}$ is usually called the {\em basis} of the additive sequence of permutations. In this paper, we will always consider
ASP$(n,m)$ with base $X^{(1)}=I_m=\{-r,-(r-1),...,-1,0,1,...,r-1,r\}$, where $m=2r+1$ and $r$ a positive integer, unless otherwise stated.
It is noted (see, for examples \cite{Ab84, Tu80, Tu82}), conversely, that certain perfect systems of difference sets can also be used to construct additive sequences of permutations.

In \cite{GMS2010,WC2010}, a construction of  additive sequence of permutations via perfect difference matrix was introduced. An $n \times m$ matrix $D=(d_{ij})$ with entries from $I_m$, is called a {\em perfect difference matrix}, denoted by PDM$(n,m)$, if the entries of
 each row of $D$ comprise all the elements of $I_m$, and for all $0 \le s < t \le n-1$, the lists of differences $\Delta_{ts} = \{d_{tj}-d_{sj} \ | \ 0 \le j \le m-1\}$ comprise all the elements of $I_m$.

 A set of $n$ {\em properly centered $m \times m$ permutation matrices} were introduced by Zhang and Tu in \cite{ZT94} for the construction of radar arrays. The equivalences among a perfect difference matrix, an additive sequence of permutations and a set of properly centered permutation matrices  were presented in \cite{CWF2009, GMS2010}.

\begin{lemma}\label{equiv}{\rm (\cite{CWF2009, GMS2010})} ~
An {\rm ASP}$(n,m)$, a {\rm PDM}$(n,m)$ and a set of $n$ properly centered $m\times m$
permutation matrices are all equivalent.
\end{lemma}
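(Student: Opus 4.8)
The plan is to establish the three-way equivalence by constructing explicit bijections between the combinatorial objects, proceeding in a cycle: from an ASP$(n,m)$ to a PDM$(n,m)$, from a PDM$(n,m)$ to a set of $n$ properly centered permutation matrices, and then back from such a set of matrices to an ASP$(n,m)$. Throughout I would fix the basis $I_m=\{-r,\ldots,-1,0,1,\ldots,r\}$ with $m=2r+1$, and index rows and columns of the relevant matrices by $\{0,1,\ldots,n-1\}$ and $\{0,1,\ldots,m-1\}$ respectively, matching the indexing used in the definition of PDM$(n,m)$.

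First I would spell out the ASP $\to$ PDM direction. Given an ASP$(n,m)$ with permutations $X^{(1)},\ldots,X^{(n)}$ of $I_m$, the natural candidate is to let row $t$ of the matrix $D$ record the partial vector-sums: set $d_{0j}=x_j^{(1)}$ (or $d_{0j}=0$ after a suitable normalization) and, more generally, define the $t$-th row as the vector-sum of the first $t$ (or of a consecutive block of) permutations. The additivity hypothesis — that every sum of consecutive permutations is again a permutation of $I_m$ — is exactly what guarantees that each row of $D$ is a permutation of $I_m$, and that each difference list $\Delta_{ts}$, being the vector-sum of the consecutive block $X^{(s+1)},\ldots,X^{(t)}$, is again a permutation of $I_m$; hence $D$ is a PDM$(n,m)$. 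Reversing this, given a PDM$(n,m)$ one recovers the permutations by taking successive row differences $X^{(j)}$ defined via $x^{(j)}_\cdot = $ (row $j$) $-$ (row $j-1$), and the PDM difference conditions translate back into the ASP additivity condition.

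Next I would handle the PDM $\leftrightarrow$ properly centered permutation matrices direction. From a PDM$(n,m)$ with rows being permutations of $I_m$, one builds $n$ permutation matrices $P_0,\ldots,P_{n-1}$ of order $m$ by letting the nonzero entry in column $j$ of $P_t$ sit in the row indexed by $d_{tj}$ (shifting $I_m$ to $\{1,\ldots,m\}$ or using symmetric indexing $\{-r,\ldots,r\}$ so that "centered" makes sense). The fact that each row of $D$ is a permutation of $I_m$ makes each $P_t$ a genuine permutation matrix; the difference condition on $\Delta_{ts}$ translates precisely into the "properly centered" condition of Zhang–Tu on the pair $(P_s,P_t)$ (each displacement value in $I_m$ occurring exactly once across the columns). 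I would then close the loop by reading the column positions of the $1$'s in the matrices $P_t$ back off as the entries $d_{tj}$, confirming this is inverse to the previous map, and note that composing the three maps yields the identity, so all three objects are in bijection.

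The main obstacle I anticipate is bookkeeping rather than depth: getting the normalization conventions to line up so that "permutation of $I_m$," "additive sequence with basis $I_m$," and "properly centered" all refer to the same centering, and being careful about whether the ASP is indexed with an explicit first row $X^{(1)}$ or with that row suppressed (as the choice of convention shifts $n$ by one and changes whether row $0$ of the PDM is the all-zero vector). Since this lemma is only cited from \cite{CWF2009, GMS2010}, a reasonable proof here is essentially to exhibit these maps and check the conditions correspond; the delicate part is verifying that the "properly centered" definition of Zhang and Tu matches the PDM difference condition exactly, which requires unwinding that definition carefully, but no genuinely new idea is needed.
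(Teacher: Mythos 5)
The paper does not actually prove this lemma: it is imported verbatim from \cite{CWF2009, GMS2010}, so there is no in-paper argument to compare yours against. Judged on its own terms, your ASP $\leftrightarrow$ PDM leg is the standard correspondence used in those references and is essentially right: taking row $t$ of $D$ to be the partial sum $X^{(1)}+\cdots+X^{(t)}$ makes every row a consecutive vector-sum starting at $X^{(1)}$ and every difference list $\Delta_{ts}$ the consecutive vector-sum $X^{(s+1)}+\cdots+X^{(t)}$, so the ASP additivity condition and the two PDM conditions match exactly, and the inverse (successive row differences) is clear. One slip: your parenthetical alternative ``or $d_{0j}=0$ after a suitable normalization'' is not available under the paper's definition of a PDM, since every row, including row $0$, must itself be a permutation of $I_m$; only the choice $d_{0j}=x_j^{(1)}$ works.

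The genuine gap is the third leg. You never state Zhang and Tu's definition of a set of $n$ properly centered $m\times m$ permutation matrices; you only assert that the PDM difference condition ``translates precisely'' into it and defer the verification to ``unwinding that definition carefully.'' Since that translation is the entire content of the PDM $\leftrightarrow$ permutation-matrices equivalence (the map sending $d_{tj}$ to a $1$ in position $(d_{tj},j)$ is the easy part), the equivalence with properly centered permutation matrices is not actually established by your proposal --- it is exactly the step you would need to carry out, and it is not pure bookkeeping: one must check that ``properly centered'' imposes conditions on pairs of matrices that correspond to the lists $\Delta_{ts}$ being permutations of $I_m$, and nothing more or less (e.g., no extra condition comparing each single matrix against a fixed identity-like reference, which would change the count $n$). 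To complete the proof you would need to quote the Zhang--Tu definition from \cite{ZT94} and perform that comparison explicitly, as is done in \cite{CWF2009, GMS2010}.
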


 These equivalences become important clues to the solution of existence problem of additive sequences of permutations for the reason that there are  more methods to
 handle difference matrices than those of for additive sequences of permutations or a set of properly centered permutation matrices as showed in \cite{GMS2010}.
 In the rest of this paper, we will use perfect difference matrix to represent these concepts.

Because of the importance both from a theoretical and an applied point of view,
perfect difference matrices (or their equivalent concepts) have also attracted the attention of many researchers
(see, for examples \cite{Ab84, AK81, KL78, Ma87, Tu80, Tu80-2, Tu82}). However, known constructions and existence results on perfect difference matrices are also very few.
It is known that there exist PDM$(3,m)$s for $m=5,7,13$-$201$ with $m$ odd \cite{GMS2010,WC2010}.

The purpose of this paper is to tackle these difficult problems of perfect difference matrices, perfect difference families and related codes.
In Section 2, some new recursive constructions of PDM$(3,m)$s are obtained by using standard incomplete perfect difference matrix, and it is proved that there exist PDM$(3,m)$s for $m\leq1000$ with some possible exceptions.
 In Section 3, new perfect difference families are investigated. By introducing
a special type of perfect difference family, {\em variable perfect difference family},
a class of $(g,\{3,4\},1)$ perfect difference families are obtained. In Section 4,
new perfect strict optical orthogonal codes (SOOCs) are constructed by using perfect difference families. Conclusions are given in Section 5.

\section{New Perfect Difference Matrices}

For a given $m=2r+1$, $r\geq1$, a PDM$(n,m),n\geq2$ implies a PDM$(t,m)$ with any $2\leq t \leq n$.
Let $N(m)=$max$\{n| {\rm a \ PDM}(n,m)\ {\rm exists}\}$,
it is  important to determine $N(m)$. For convenience, in this paper, we denote by $[a,b]$ the
set of integers $c$ such that $a\leq c\leq b$. Many researchers have studied the bounds for
 $N(m)$ \cite{Ab84,AK81,CWF2009,GMS2010,KL78,Ma87,Tu80,Tu80-2,Tu82,WC2010}. It is proved that $N(m)\geq 2$ for each odd $m\geq3$ in  \cite{Ab84}
and $N(m)\geq 3$ for each odd $m\in[5,201]\setminus\{9,11\}$ in  \cite{GMS2010,WC2010}, and there does not exist a PDM$(3,m)$
for $m=9,11$. For $N(m)=4$,  very scarce result is known \cite{CWF2009}.
It is proved  that $N(m)\leq m-1$ \cite{Tu80}.


In this section, we will construct more PDM$(3,m)$s by using recursive constructions. From
Lemma \ref{equiv}, this is equivalently to construct more  sets of $3$ properly centered $m\times m$
permutation matrices or more {\rm ASP}$(3,m)$s.

\begin{theorem}\label{multiplePDM1}{\rm (\cite{Tu80})} ~
Let $(X^{j}=(x_{1}^{j},...,x_{2r+1}^{j})|j=1,...,n)$ and $(Y^{j}=(y_{1}^{j},...,y_{2s+1}^{j})|j=1,...,n)$ be a {\rm PDM}$(n,2r+1)$ and a {\rm PDM}$(n,2s+1)$,
 respectively. Then $(Z^{j}=(z_{1}^{j},...,z_{(2r+1)(2s+1)}^{j})|j=1,...,n)$, where
$z_{(i-1)(2s+1)+h}^{j}=(2s+1)x_{i}^{j}+y_{h}^{j}$, $1\leq i\leq2r+1$,$1\leq h\leq2s+1$\\
is a {\rm PDM}$(n,(2r+1)(2s+1))$.
\end{theorem}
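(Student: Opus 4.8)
The plan is to reduce everything to one elementary fact about mixed-radix representations. Set $R=2rs+r+s$, so that $2R+1=(2r+1)(2s+1)$ and $I_{(2r+1)(2s+1)}=[-R,R]$. Consider the map
\[
\varphi:[-r,r]\times[-s,s]\longrightarrow{\mathbf Z},\qquad \varphi(x,y)=(2s+1)x+y.
\]
First I would check that $\varphi$ is a bijection onto $[-R,R]$. Its image lies in $[-R,R]$ since $|\varphi(x,y)|\le(2s+1)r+s=R$. It is injective: if $(2s+1)x+y=(2s+1)x'+y'$ with $x,x'\in[-r,r]$ and $y,y'\in[-s,s]$, then $(2s+1)(x-x')=y'-y$ and $|y'-y|\le 2s<2s+1$, which forces $x=x'$ and then $y=y'$. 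As the domain and $[-R,R]$ both have $(2r+1)(2s+1)$ elements, $\varphi$ is onto; equivalently, every integer of absolute value at most $R$ has a unique expression as $(2s+1)x+y$ with $x\in[-r,r]$, $y\in[-s,s]$. I would also record that $(i,h)\mapsto(i-1)(2s+1)+h$ is a bijection from $[1,2r+1]\times[1,2s+1]$ onto $[1,(2r+1)(2s+1)]$, so running over the index $k$ is the same as running over the pair $(i,h)$.

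Next I would verify the row condition for each $Z^j$. Fix $j$. Since $X^j$ is a row of a PDM$(n,2r+1)$, the entries $x_1^j,\dots,x_{2r+1}^j$ are a permutation of $[-r,r]$, and similarly $y_1^j,\dots,y_{2s+1}^j$ are a permutation of $[-s,s]$. Hence as $(i,h)$ ranges over $[1,2r+1]\times[1,2s+1]$, the pair $(x_i^j,y_h^j)$ ranges over all of $[-r,r]\times[-s,s]$, and applying $\varphi$ shows that $z^j_{(i-1)(2s+1)+h}=(2s+1)x_i^j+y_h^j$ ranges over all of $[-R,R]=I_{(2r+1)(2s+1)}$, each value once. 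By the index bijection above, the $j$-th row of $Z$ therefore comprises every element of $I_{(2r+1)(2s+1)}$.

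Finally I would handle the difference condition, which is the same argument with entries replaced by differences. Fix $a<b$ in $\{1,\dots,n\}$. For $k=(i-1)(2s+1)+h$,
\[
z^b_k-z^a_k=(2s+1)\bigl(x_i^b-x_i^a\bigr)+\bigl(y_h^b-y_h^a\bigr)=\varphi\bigl(x_i^b-x_i^a,\;y_h^b-y_h^a\bigr).
\]
Because $(X^j)$ is a PDM, $\{x_i^b-x_i^a:1\le i\le 2r+1\}$ is exactly $[-r,r]$ (in particular each such difference lies in $[-r,r]$, so $\varphi$ is legitimately applied), and because $(Y^j)$ is a PDM, $\{y_h^b-y_h^a:1\le h\le 2s+1\}$ is exactly $[-s,s]$; so as $(i,h)$ ranges over all pairs, $\varphi\bigl(x_i^b-x_i^a,y_h^b-y_h^a\bigr)$ ranges over all of $[-R,R]$, i.e. the difference list of rows $a$ and $b$ of $Z$ is $I_{(2r+1)(2s+1)}$. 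This is essentially the whole proof: the construction is just ``block substitution'' of the second PDM into the coordinates of the first, and a PDM is manifestly stable under it. There is no real obstacle; the only points requiring care are the strict inequality $|y'-y|<2s+1$, which is what makes the first coordinate of $\varphi$ uniquely determined, and keeping the two bijections — the mixed-radix $\varphi$ and the index unravelling $k\leftrightarrow(i,h)$ — straight.
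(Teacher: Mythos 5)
Your proof is correct and complete. Note that the paper itself gives no proof of this statement --- it is quoted from Turgeon's paper \cite{Tu80} --- so there is nothing internal to compare against; your argument via the mixed-radix bijection $\varphi(x,y)=(2s+1)x+y$ from $[-r,r]\times[-s,s]$ onto $[-R,R]$ is the standard one for this product construction, and you correctly identify the one point needing care (the strict bound $|y'-y|\le 2s<2s+1$ forcing uniqueness of the first coordinate) and correctly reuse the same bijection for both the row condition and the pairwise difference condition.
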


Abrham \cite{Ab84} developed the following direct constructions for additive sequences of
permutations from  perfect systems of difference sets.
\begin{lemma}\label{PSDS to PDM}{\rm (\cite{Ab84})} ~
If there exists a $(t,4,c)$-{\rm PSDS}, then there exists a {\rm ASP}$(3,12t)$ with the basis containing the elements $\{\pm c,\pm(c+1),...,\pm(6t+c-1)\}$.
~Furthermore. if $c=1$, then one can construct a {\rm ASP}$(3,12t+1)$ with the basis containing the elements $\{0\}\cup\{\pm1,\pm2,...,\pm6t\}$.
\end{lemma}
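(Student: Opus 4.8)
The plan is to build the required ASP directly from the blocks of the given PSDS, using the standard "reflection" trick that turns a perfect system of difference sets into a set of three permutations whose consecutive partial sums are again permutations. Suppose we are given a $(t,4,c)$-PSDS, say $\Delta\mathcal{S}=\{\Delta S_1,\dots,\Delta S_t\}$ with $S_i=\{s_{i1},s_{i2},s_{i3},s_{i4}\}$ and $0\le s_{i1}<s_{i2}<s_{i3}<s_{i4}$, so that the multiset of the $6t$ differences $\{s_{i\ell}-s_{im}:1\le m<\ell\le 4\}$ over all $i$ is exactly $\{c,c+1,\dots,6t+c-1\}$. From each block $S_i$ I would extract, in analogy with the ASP$(3,m)$ format, three rows (or, dually, the three columns of a PDM): the first row records the $S_i$ themselves, the second and third rows record suitable signed sums $s_{i\ell}+s_{im}$ (equivalently the two complementary "endpoint sums" of the block), so that taking vector differences between the rows reproduces precisely the directed differences of $S_i$. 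Concatenating these triples over $i=1,\dots,t$ and adjoining the sign-reversed copies (to realize the set of entries $\{\pm c,\dots,\pm(6t+c-1)\}$) should yield an ordered triple of sequences on $12t$ symbols.

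The key steps, in order, are: (1) fix the correspondence between a $4$-block $\{s_{i1},s_{i2},s_{i3},s_{i4}\}$ and a $3\times 12$ sub-array, choosing the entries in rows $2$ and $3$ (for instance as $s_{i1}+s_{i4}$, $s_{i2}+s_{i3}$ and their negatives, together with the six individual $\pm s_{i\ell}$) so that the three pairwise row-difference lists of that sub-array are each a set of twelve consecutive-type values matching one-twelfth of $\{c,\dots,6t+c-1\}$ and its reflection; (2) verify that, as $i$ ranges over $1,\dots,t$, the union of the first-row entries is a permutation of $\{\pm c,\dots,\pm(6t+c-1)\}$ — this is where the PSDS hypothesis $\Delta\mathcal{S}=\{c,\dots,6t+c-1\}$ is used, ensuring no value is repeated and none is missed; (3) check the additivity condition, namely that the vector sums of each block of consecutive rows (rows $1$ through $2$, rows $2$ through $3$, and all three) are again permutations of the basis — again reducible to the fact that the difference multiset is exactly the target interval; (4) when $c=1$, observe that the value $0$ is available to be inserted as a fixed point common to all three permutations, enlarging the order from $12t$ to $12t+1$ and the basis to $\{0\}\cup\{\pm1,\dots,\pm 6t\}$.

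The main obstacle I anticipate is step (1): finding the precise bookkeeping that assigns to each $4$-element block a valid $3\times 12$ fragment of a PDM, i.e. choosing signs and pairings so that every one of the three pairwise difference lists of the fragment is exactly right, not merely right "on average." Once a single block is handled correctly, steps (2)–(4) are essentially additive accounting: the global permutation and additivity properties follow by summing the per-block contributions and invoking $\bigcup_i\Delta S_i=\{c,c+1,\dots,6t+c-1\}$ termwise. The $c=1$ refinement in step (4) is then a cosmetic adjustment, since a symbol fixed by every permutation contributes $0$ to every difference and every partial vector-sum and so disturbs nothing. I would present the block-level construction explicitly (it is small and checkable by hand) and then state the global verification as a short computation.
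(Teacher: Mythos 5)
There is a genuine gap, and it sits exactly where you flagged it: step (1) is the entire content of the lemma, and the specific construction you sketch there points in the wrong direction. An ASP$(3,12t)$ with basis $\{\pm c,\pm(c+1),\dots,\pm(6t+c-1)\}$ requires each of the three permutations (equivalently, each row of the corresponding PDM) to be a permutation of the set of \emph{signed differences} of the PSDS, since that set is precisely the prescribed basis. Your proposed $3\times 12$ fragment is populated with the block elements $\pm s_{i\ell}$ and the endpoint sums $s_{i1}+s_{i4}$, $s_{i2}+s_{i3}$; these are not differences of the block (even after normalizing $s_{i1}=0$, only three of the six directed differences coincide with elements), so the union over $i$ of your first-row entries cannot be the required basis, and steps (2) and (3) cannot go through as "additive accounting." The correct fragment uses \emph{only} the twelve signed differences of the block as entries of every row, arranged so that each of the three pairwise row-difference lists is again exactly that twelve-element set; this is the nontrivial combinatorial arrangement due to Abrham \cite{Ab84} (see also Mathon's Theorem 10), and it is not recovered by any choice of signs in your scheme. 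Your step (4) is fine: adjoining a common fixed column of zeros when $c=1$ enlarges the basis to $I_{12t+1}$ and contributes only the difference $0$, which is exactly the missing element.

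For comparison: the paper does not prove this lemma (it cites \cite{Ab84}), but the needed block-level arrangement is written out explicitly in the proof of Lemma~\ref{SIPDM}, where for a block $\{0,a_i,b_i,c_i\}$ the $3\times 12$ matrix $D_i$ has every row equal to a permutation of $\{\pm a_i,\pm b_i,\pm c_i,\pm(b_i-a_i),\pm(c_i-a_i),\pm(c_i-b_i)\}$ and every pairwise row difference equal to the same set. Concatenating these fragments over the blocks and invoking $\bigcup_i \Delta S_i=\{c,\dots,6t+c-1\}$ then gives the global permutation and additivity properties exactly as you describe in steps (2)--(3). If you replace your candidate fragment with that one (or reconstruct it by hand, which is feasible for a single $4$-block), the rest of your outline is sound.
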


Ge, Ling and Miao \cite{GLM08} presented a general construction of PDM$(n,m)$ via $(t,K,1)$-PSDS and TD($n,k$).

In this section, we will present several new recursive constructions of perfect difference matrices.
Perfect difference matrix with a regular hole is required. For $r\in \Z^{+}$, let $rI_{h}$=$\{ri|i\in I_{h}\}$.
An $n\times(m-h)$ matrix $D_H=(\delta_{ij})$ with entries from $I_{m} \setminus (lI_{h})$,
 where $lI_{h}\subseteq I_{m}$ for some $l\in \Z^{+}$, is called a standard incomplete perfect difference matrix
  with a regular hole $lI_{h}$, denoted briefly by SIPDM$(n,m,h,l)$, if the entries of each row of $D_H$
  comprise all the elements of $I_m \setminus (lI_h)$, and
for any $0 \le s < t \le n-1$, the difference set ${\Delta}_{ts} = \{{\delta}_{tj}-{\delta}_{sj} \ |  \ 0 \le j \le m-h-1\} = I_{m} \setminus
(lI_h)$ holds. When $h=1$, we can drop the letter  $l$ from the notation SIPDM$(n,m,1,l)$
since for any $l\in {\mathbf Z^{+}}$, we always have $lI_1=\{0\}$.
Clearly, by adding the column vector $(0,\ldots,0)^T$ to an SIPDM$(n,m,1)$, we immediately obtain a PDM$(n,m)$.

\begin{lemma}\label{SIPDM}
If there exists a $(v,4,1)$-{\rm PDP} which cover $H$ with $|H|=6h$, then there exists a
$3\times 12h$ matrix $D_{H}$, which is an {\rm SIPDM} based on $H\bigcup(-H)$.
\end{lemma}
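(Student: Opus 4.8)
The plan is to build the matrix $D_H$ directly from the blocks of the $(v,4,1)$-PDP. Let $\mathcal{B}=\{B_1,\dots,B_h\}$ be the $(v,4,1)$-PDP with difference leave $L=\{1,2,\dots,\frac{v-1}{2}\}\setminus(\Delta\mathcal{B}\cap\{1,\dots,\frac{v-1}{2}\})$; the hypothesis says the half-set of differences actually covered has size $6h$, namely $H$ (so $|H|=6h$ and $|\Delta\mathcal{B}|=12h$ counting both signs, since each block of size $4$ contributes $\binom{4}{2}=6$ positive differences and all $12h$ of these are distinct). Write each block as $B_i=\{b_{i1},b_{i2},b_{i3},b_{i4}\}$. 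First I would observe that each such block $B_i$, viewed as a $1\times 4$ ordered tuple after suitable reordering and translation, can be turned into a $3\times 4$ block-column of the desired matrix: the point is that a single $4$-set whose $6$ differences are distinct is exactly the data of an $(1,4,c)$-type object, and the three-row structure comes from using the three ``rotations'' of a $4$-tuple $(0,a,a+b,a+b+c)$ — equivalently, one invokes the correspondence already present in Lemma~\ref{PSDS to PDM} at the level of a single block rather than a whole PSDS.

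More concretely, the key step is the following local construction. For a $4$-set with difference pattern realized by partial sums $0<x<x+y<x+y+z$, form the $3\times 4$ array whose columns are obtained by cyclically reading the consecutive partial-sum differences; after centering (subtracting the appropriate constant from each row so that row entries become symmetric about $0$), each row becomes a set of four elements of $H\cup(-H)$, and the six pairwise row-differences $\Delta_{ts}$ over that block-column run exactly over the $\pm$ versions of the six differences of $B_i$. Concatenating these $3\times 4$ block-columns over $i=1,\dots,h$ gives a $3\times 4h$ array — wait, we need $3\times 12h$, so in fact each block should contribute a $3\times ?$ piece whose total column count sums to $12h$; the correct bookkeeping is that the three ordered arrangements per block contribute the factor making the column count $12h$, or equivalently one uses the full orbit of arrangements as in Abrham's construction. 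The essential claim to verify is then: (i) in each row, the concatenated entries comprise exactly $H\cup(-H)$, and (ii) for each pair of rows $s<t$, the difference list $\Delta_{ts}$ comprises exactly $H\cup(-H)$ as well. Claim (i) follows because the partial-sum/centering device makes each block-column's row-entries a ``balanced'' set of differences, and as $i$ ranges over all blocks these differences exhaust $H$ together with their negatives, by the defining property of the PDP (every positive difference in $H$ appears exactly once across $\Delta\mathcal{B}$). Claim (ii) follows from the same counting: the between-row differences on block-column $i$ reproduce precisely $\Delta B_i$ (with signs), and summing over $i$ recovers all of $H\cup(-H)$, again exactly once each.

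The main obstacle — and the step I would spend the most care on — is pinning down the precise shape of the $3\times 12$ (per block) gadget and checking the centering constants, i.e.\ making rigorous the ``each block of size $4$ becomes a small SIPDM-piece'' assertion so that both the row-composition condition and all three pairwise difference conditions hold simultaneously on the nose. This is exactly the content of Abrham's Lemma~\ref{PSDS to PDM} specialized from an entire $(t,4,c)$-PSDS down to a single $4$-block, and the cleanest route is to isolate that per-block gadget as a small explicit $3\times 12$ pattern (a $3\times 12$ SIPDM on the hole $\{0\}$ relative to the $12$-element set $\Delta B_i\cup(-\Delta B_i)$), verify it once by direct inspection, and then glue the $h$ gadgets side by side. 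Since the differences covered by distinct blocks of the PDP are disjoint (that is the defining property of a perfect difference packing), the gluing introduces no collisions, and the resulting $3\times 12h$ matrix is an SIPDM based on $H\cup(-H)$ as required. I would finish by remarking that adding the all-zero column, when $0$ happens to be in the relevant hole, upgrades this to a genuine PDM-type object in the intended downstream application, exactly as noted just before the statement.
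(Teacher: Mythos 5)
Your overall strategy coincides with the paper's: decompose the PDP into its blocks, attach to each $4$-block $B_i=\{0,a_i,b_i,c_i\}$ a $3\times 12$ gadget whose rows and pairwise row-differences each run exactly over the twelve signed differences $\Delta B_i\cup(-\Delta B_i)$, and concatenate the gadgets, using the fact that distinct blocks of a PDP have disjoint difference sets. However, there is a genuine gap: the gadget is the entire content of the lemma, and you never produce it. You explicitly flag ``pinning down the precise shape of the $3\times 12$ per-block gadget and checking the centering constants'' as the main obstacle and then leave it unresolved; your sketch of the construction (a $3\times 4$ array of cyclically read partial-sum differences, later corrected mid-argument to $3\times 12$) does not amount to a verifiable object. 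The paper's proof consists precisely of writing down the explicit matrix
$D_i$ whose first row is $(-c_i,\,a_i-c_i,\,-b_i,\,b_i-c_i,\,a_i-b_i,\,-a_i,\,a_i,\,b_i-a_i,\,c_i-b_i,\,b_i,\,c_i-a_i,\,c_i)$ together with two further permutations of these twelve entries, and checking (via Theorem 10 of \cite{Ma87} and Theorem 3.10 of \cite{GMS2010}) that each of the three pairwise difference lists again equals $\Delta B_i\cup(-\Delta B_i)$; without that explicit pattern, or an equivalent one verified by inspection, the proof is incomplete.

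A secondary but substantive error is your claim that the gadget is ``exactly the content of Lemma~\ref{PSDS to PDM} specialized from an entire $(t,4,c)$-PSDS down to a single block.'' A single $4$-block of a PDP is not a $(1,4,c)$-PSDS unless its six positive differences form a run of consecutive integers, which is false for a general block of a perfect difference packing; so Lemma~\ref{PSDS to PDM} cannot simply be invoked blockwise. What is true — and what the paper uses — is that the underlying per-block gadget of Mathon's construction works for an \emph{arbitrary} $4$-set with six distinct differences, but that is a separate statement requiring its own (explicit) verification, not a formal corollary of the PSDS-based lemma as stated. Your concluding remarks about disjointness of the blocks' difference lists and about appending a zero column are correct and match the paper.
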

\begin{proof}
Let ${\cal B} = \{B_1,B_2,\ldots,B_h\}$ be a $(v,4,1)$-{\rm PDP} which cover $H$,
where $B_i=\{0,a_i,b_i,c_i\}$ for $1\leq i\leq h$, let
\[D_i=\left(\small
\begin{array}{cccccccccccccccccccc}
-c_i & a_i-c_i & -b_i & b_i-c_i & a_i-b_i & -a_i & a_i     & b_i-a_i & c_i-b_i & b_i     & c_i-a_i & c_i \\
-b_i & a_i     & -a_i & b_i-a_i & a_i-c_i & -c_i & a_i-b_i & b_i     & c_i     & b_i-c_i & c_i-b_i &
 c_i-a_i \\
-a_i & a_i-b_i & -c_i & b_i     & a_i     & -b_i & a_i-c_i & b_i-c_i & c_i-a_i & b_i-a_i & c_i     &
 c_i-b_i
\end{array}
\right),
\]
and $D_{H}=(D_1,D_2,\ldots,D_h)$. Combining Theorem 10 of \cite{Ma87} and Theorem 3.10 of \cite{GMS2010},
one can see that $D_{H}$ is an {\rm SIPDM} based on $H\bigcup(-H)$.
\end{proof}

We will employ SIPDMs to give  new recursive constructions for PDM$(n,m)$s. It is easy to check that
the ASP$(3,12t)$s in Lemma~\ref{PSDS to PDM} are SIPDMs.
More generally, if there exists a PDF$(12t+1,4,1)$, then there exists a SIPDM$(3,12t+1,1)$.
 In the following, we will investigate SIPDM$(3,m,1)$s.
 \begin{lemma}\label{m=5,7}
 There  exist no {\rm SIPDM}$(3,m,1)$s for $m=5,7$.
 \end{lemma}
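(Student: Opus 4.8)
The plan is to treat $m=5$ and $m=7$ separately. In each case I would normalise the first row (permuting the columns of an SIPDM yields another SIPDM, and simply permutes all the difference vectors), and then show the remaining rows cannot be completed. Throughout, write $S=I_m\setminus\{0\}$, and recall that in an SIPDM$(3,m,1)$ each row is a permutation of $S$ while the entrywise difference of any two rows is again a permutation of $S$; in particular the first two rows already form an SIPDM$(2,m,1)$.

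For $m=5$ it suffices to rule out an SIPDM$(2,5,1)$. After normalisation the first row is $(-2,-1,1,2)$. In the column carrying $-2$ in the first row, the second-row entry $x$ satisfies $x\in S=\{-2,-1,1,2\}$ and $x+2\in S$; the only such $x$ is $-1$, so the difference there is $1$. Symmetrically, in the column carrying $2$ the difference is $-1$. The two remaining second-row entries are then $-2$ and $2$, occupying the columns that carry $-1$ and $1$ in the first row, so the differences produced by those columns are odd --- yet the only elements of $S$ still unused are $-2$ and $2$. This contradiction shows that no SIPDM$(2,5,1)$, hence no SIPDM$(3,5,1)$, exists.

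For $m=7$ an SIPDM$(2,7,1)$ does exist (for instance the rows $(-3,-2,-1,1,2,3)$ and $(-1,1,-3,-2,3,2)$), so all three rows are genuinely needed. Normalise the first row to $R_1=(-3,-2,-1,1,2,3)$. The main step is to enumerate the \emph{admissible second rows}: the permutations $R$ of $S$ for which $R-R_1$ (entrywise) is again a permutation of $S$. Two observations cut the search down quickly: the per-column range restrictions (the $j$th entry of $R$ must differ from the $j$th entry of $R_1$ by a nonzero element of $S$ --- for example the first entry of $R$ lies in $\{-2,-1\}$ and the last in $\{1,2\}$), and a parity count (the difference vector, being a permutation of $S$, has exactly two even entries, which forces $R$ to be odd in the two columns where $R_1$ is even). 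A short check of the surviving candidates then shows that the admissible second rows are exactly the four sequences $(-1,1,-3,-2,3,2)$, $(-2,1,-3,3,-1,2)$, $(-1,-3,2,-2,3,1)$ and $(-2,-3,2,3,-1,1)$. Since $R_2$ and $R_3$ must both be admissible second rows, and must also differ from each other by a permutation of $S$, it remains to inspect the $\binom{4}{2}=6$ pairs from this list; in every case the entrywise difference contains a $0$ or an entry of absolute value at least $4$, so it is not a permutation of $S$. Hence no SIPDM$(3,7,1)$ exists.

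The main obstacle is the $m=7$ case, and within it the verification that there are exactly four admissible second rows: the range and parity observations prune the possibilities to a handful, but for each survivor one still has to confirm that the \emph{entire} difference vector is a genuine permutation of $S$, not merely in range, so this bookkeeping must be arranged carefully to stay short. Once the four rows are in hand, the concluding pairwise comparison is immediate.
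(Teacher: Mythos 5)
Your proof is correct, and I have checked its two key computational claims: the list of admissible second rows over $R_1=(-3,-2,-1,1,2,3)$ is indeed exactly the four sequences you name, and all six pairwise differences among them contain a $0$ or an entry of modulus $\ge 4$. However, your route differs noticeably from the paper's. For $m=5$ you prove the stronger fact that already an SIPDM$(2,5,1)$ cannot exist (a parity obstruction on the two middle columns after the forced entries $-1$ and $1$), whereas the paper uses all three rows but only a single column: the column carrying $-2$ in row~$1$ forces both $\delta_{21}=-1$ and $\delta_{31}=-1$, contradicting $\delta_{31}-\delta_{21}\neq 0$. For $m=7$ the contrast is sharper: you globally enumerate all admissible second rows and then rule out every pair, while the paper argues locally on just two columns --- normalising $\delta_{11}=-3$, $\delta_{12}=-2$, $\delta_{21}<\delta_{31}$ forces $\delta_{21}=-2$, $\delta_{31}=-1$, and then the constraints on the second column (the difference $1$ between rows $2$ and $1$ being already spent) force $\delta_{22}\in\{-3,1\}$ and in either case $\delta_{32}=-1$, clashing with $\delta_{31}=-1$. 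The paper's argument is shorter and avoids any enumeration, but your approach buys more: the classification of admissible second rows is reusable information (it shows, e.g., exactly how far one can get with two rows when $m=7$), and your $m=5$ argument pins down that the obstruction appears already at $n=2$. The one place where you should be careful if writing this up in full is the claimed completeness of the four-row list; the range-plus-parity pruning you describe does reduce the search to a handful of cases (essentially indexed by the positions of the two even entries $-2$ and $2$), but that case check must actually be displayed, since it carries the whole weight of the $m=7$ argument.
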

 \begin{proof}  \label{m=5,7pf}
If $D_{H}^{(1)}=(\delta_{ij})$ is an SIPDM$(3,5,1)$, without loss of generality, we can assume that
$\delta_{11}=-2$. Then for
$\delta_{21}$, since $\delta_{21}-\delta_{11}\in I_{5}\setminus \{0\}$,  it holds $\delta_{21}=-1$.
Similarly, we obtain $\delta_{31}=-1$, it conflicts with $\delta_{31}-\delta_{21}\in I_{5}\setminus \{0\}$.

If $D_{H}^{(2)}=(\delta_{ij})$ is an SIPDM$(3,7,1)$, without loss of generality, we can assume that
$\delta_{11}=-3$, $\delta_{12}=-2$ and $\delta_{21}<\delta_{31}$.
 Since $\delta_{21}-\delta_{11}, \delta_{31}-\delta_{11} \in I_{7}\setminus \{0\}$, we obtain that $\delta_{21}=-2,\delta_{31}=-1$.
 Then we have $\delta_{21}-\delta_{11}=1$, it holds that $\delta_{22}-\delta_{12}\in I_{7}\setminus \{0,1\}$, therefore,
$\delta_{22}=-3$ or $1$. Both of them mean $\delta_{32}=-1$, which conflicts with $\delta_{31}=-1$.
\end{proof}

\begin{lemma} \label{SPDM1}
There exists an {\rm SIPDM}$(3,m,1)$ for each odd integer $5\leq m<200$ except for $m=5$, $7$,
$9$,$11$ and except possibly for $m\in\{15$, $21$, $27$, $29$, $35$, $47$, $51$, $53$, $59$,
$63$, $71$, $75$, $83$, $87$, $95\}$.
\end{lemma}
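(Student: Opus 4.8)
For each admissible $m=2r+1$ the plan is to build the $3\times(m-1)$ matrix of an {\rm SIPDM}$(3,m,1)$ by concatenating column blocks, each ``perfect'' on its own portion of $I_m\setminus\{0\}$, the portions chosen to partition $I_m\setminus\{0\}$ (and hence to partition every difference list $\Delta_{ts}$). The portions come from: (i) the matrices $D_H$ of Lemma~\ref{SIPDM}, which are perfect on a set $H\cup(-H)$ with $H$ covered by a $(v,4,1)$-{\rm PDP}; (ii) the {\rm SIPDM}$(3,12t+1,1)$s coming from the known {\rm PDF}$(12t+1,4,1)$s ($t=1$ and $4\le t\le1000$), which already dispatch every admissible $m\equiv1\pmod{12}$ with $m\le193$; (iii) scaled copies of a smaller {\rm SIPDM}$(3,h,1)$ used to fill a hole; together with the product construction of Theorem~\ref{multiplePDM1} and, for the residues no recursion reaches, a few explicitly written matrices.

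\textbf{The recursive step.} First I would record the composition rule: if there exist an {\rm SIPDM}$(3,m,h,l)$ and an {\rm SIPDM}$(3,h,1)$ with $lI_h\subseteq I_m$, then there exists an {\rm SIPDM}$(3,m,1)$. Indeed, juxtapose the $m-h$ columns of the former (which realise $I_m\setminus(lI_h)$ both as their row-set and as each $\Delta_{ts}$) with $l$ times the $h-1$ columns of the latter (which then realise $lI_h\setminus\{0\}$ in the same two senses), and note $(I_m\setminus lI_h)\cup(lI_h\setminus\{0\})=I_m\setminus\{0\}$. Feeding Lemma~\ref{SIPDM} into this rule reduces the whole problem to: given $m=2r+1$, find an odd $h=2s+1$ and an integer $l\ge1$ with $ls\le r$ and $b:=(r-s)/6$ a nonnegative integer, such that an {\rm SIPDM}$(3,h,1)$ is already in hand and there is a $(v,4,1)$-{\rm PDP} (for some odd $v$) whose $b$ blocks cover exactly $H=\{1,\dots,r\}\setminus l\{1,\dots,s\}$. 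The cases $h=1$ (the empty {\rm SIPDM}$(3,1,1)$, $H=\{1,\dots,r\}$) and $h=13$ (fill with the {\rm PDF}$(13,4,1)$) do most of the work; larger $h$ come from (ii), and {\rm SIPDM}s produced along the way are themselves re-used as hole-fillers.

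\textbf{Carrying out the cases.} For each target $m$ among the odd integers in $[5,199]$ outside $\{5,7,9,11\}$ and outside the stated list, I would pick parameters $(h,l,b)$ as above and supply the $(v,4,1)$-{\rm PDP}; since any $v$ exceeding twice the largest difference makes the perfect-packing condition automatic, this amounts to writing $\{1,\dots,r\}\setminus l\{1,\dots,s\}$ as a disjoint union of $b$ sextuples $\{x,y,z,y-x,z-x,z-y\}$ with $0<x<y<z$ (that is, $b$ base blocks $\{0,x,y,z\}$). For composite $m=m_1m_2$ with {\rm SIPDM}$(3,m_i,1)$ available, Theorem~\ref{multiplePDM1} is quicker: adjoining a zero column to each factor gives {\rm PDM}$(3,m_i)$s, their product is a {\rm PDM}$(3,m)$ possessing a zero column (by the product formula of Theorem~\ref{multiplePDM1}), and deleting that column yields an {\rm SIPDM}$(3,m,1)$; the construction of Ge, Ling and Miao \cite{GLM08} via a $(t,K,1)$-{\rm PSDS} and a {\rm TD}$(3,k)$ is an alternative. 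The handful of stubborn small values would be handled by displaying their $3\times(m-1)$ matrices outright and checking the two defining conditions directly; these matrices, and the decompositions of the various $H$, would be tabulated.

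\textbf{Where the difficulty lies.} The substance of the proof is the finite but delicate verification that for \emph{every} admissible $m<200$ some admissible choice $(h,l,b)$ really yields a realisable $(v,4,1)$-{\rm PDP}, i.e.\ an actual tiling of $\{1,\dots,r\}\setminus l\{1,\dots,s\}$ by base-block difference sextuples. The room to manoeuvre is squeezed from two sides: {\rm SIPDM}$(3,h,1)$ fails for $h=5,7$ by Lemma~\ref{m=5,7} and (as is known) for $h=9,11$, forcing a large smallest usable $h$ in several residue classes; and a $(2,4,1)$-{\rm PSDS}, equivalently a {\rm PDF}$(25,4,1)$, does not exist, so certain interval-complements cannot be tiled at all. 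The listed possible exceptions are exactly the $m$ for which no admissible parameter choice presently produces such a tiling, no usable factorization exists, and no ad hoc matrix has been found; removing any of them would require a new {\rm PDP}/{\rm PSDS} construction or a more exhaustive search. I expect the organisation of this finite search---rather than any single clever step---to be the main obstacle, and it is precisely the finiteness of the check (bounded $m$) that makes the statement provable.
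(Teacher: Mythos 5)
There is a genuine gap: what you have written is a strategy, not a proof. The lemma is a finite compilation statement, and its entire content is the exhibition (or citation) of an actual {\rm SIPDM}$(3,m,1)$ for each of the roughly eighty admissible $m<200$; you defer exactly this --- the tilings of $\{1,\dots,r\}\setminus l\{1,\dots,s\}$ by difference sextuples, the tables, the ``handful'' of explicit matrices --- to work not carried out, and you assert without evidence that the listed possible exceptions are precisely the values your search fails on. The paper's own proof does none of your recursion: apart from the nonexistence argument for $m=5,7,9,11$, it simply cites known {\rm SIPDM}s --- from the $(12t+1,4,1)$-{\rm PDF}s for $m\equiv1\pmod{12}$, from \cite{Ab84}, \cite{GLM08}, \cite{WC2010} and (for some forty values) \cite{GMS2010} --- and adds one new explicit {\rm SIPDM}$(3,19,1)$.

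Beyond incompleteness, your main engine cannot reach a large fraction of the required values. Your composition rule fills the hole of an {\rm SIPDM}$(3,m,h,l)$ built from Lemma~\ref{SIPDM}, whose perfect part has $12b$ columns; hence it forces $m\equiv h\pmod{12}$ with an {\rm SIPDM}$(3,h,1)$ in hand. For $m\equiv3\pmod{12}$ the only candidate $h<m$ below the first constructible value would have to start from $h=3$, and for $m\equiv11\pmod{12}$ from $h=11$; neither {\rm SIPDM}$(3,3,1)$ nor {\rm SIPDM}$(3,11,1)$ exists (the latter because {\rm PDM}$(3,11)$ does not exist), so the recursion cannot bootstrap these residue classes. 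Yet the lemma claims, e.g., $m=23$, $39$, $99$, $103$, $107$, $111$, $119$, $123$, $131$, $135$, $143$, $147$, $155$, $159$, $167$, $171$, $179$, $183$, $191$, $195$, most of which are prime or have no usable factorization for Theorem~\ref{multiplePDM1} (e.g.\ $39=3\cdot13$ fails for lack of a {\rm PDM}$(3,3)$). These are covered in the paper only by explicitly constructed matrices from \cite{GMS2010} and \cite{WC2010}. So the burden on ``ad hoc matrices'' in your plan is not a handful but the bulk of the lemma, and none are supplied. The correct fix is either to import those explicit constructions, as the paper does, or to actually produce and verify the matrices.
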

\begin{proof} \label{SPDM1pf}
From Lemma \ref{m=5,7}, there does not exist an SIPDM$(3,m,1)$ for $m=5,7$. Since
there does not exist a PDM$(3,m)$ for $m=9,11$, then so does for an SIPDM$(3,m,1)$ for $m=9,11$.
Since a $(12t+1,4,1)$-PDF can produce an SIPDM$(3,12t+1,1)$, then, one can obtain SIPDM$(3,m,1)$s for
$m=13$, $49$, $61$, $73$, $85$, $97$, $109$, $121$, $133$, $145$, $157$, $169$, $181$, $193$ from the $(12t+1,4,1)$-PDFs for $t=1,4$-$16$ (\cite{AB07}). An SIPDM$(3,17,1)$ is given in \cite{Ab84}.
SIPDM$(3,m,1)$s for $m=57$, $65$, $69$, $77$, $81$, $89$, $93$, $101$, $105$, $113$, $117$, $125$,
$129$, $137$, $141$, $149$ are from  \cite{GLM08}, SIPDM$(3,m,1)$s for
$m=31$, $33$, $37$, $39$, $43$, $55$, $67$, $79$ are from \cite{WC2010}, and  SIPDM$(3,m,1)$s for
$m=23$, $25$, $41$, $45$, $91$, $99$, $103$, $107$, $111$, $115$, $119$, $123$, $127$, $131$,
$135$, $139$, $143$, $147$, $151$, $153$, $155$, $159$, $161$, $163$, $165$, $167$, $169$,
$171$, $173$, $175$, $177$, $179$, $183$, $185$, $187$, $189$, $191$, $195$, $197$, $199$
are from \cite{GMS2010}. An SIPDM$(3,19,1)$ is given below.
\[\left(
\begin{array}{rrrrrrrrrrrrrrrrrrrrrrrrrrrrrrrr}
-9 & -7 & -6  & -3 & -1 & 8  \\
-8 & 1 & -2 & 6 & 5 & 2  \\
-5 & -4 & 3 & 4 & 7 & 9  \\
\end{array}
\right).
\]
In the above matrix, each column $(a,b,c)^{T}$ represents three columns $(a,b,c)^{T},(b,c,a)^{T}$ and $(c,a,b)^{T}$ in the SIPDM$(3,19,1)$.
\end{proof}

Next, we give some new recursive constructions of PDM$(3,m)$s by using SIPDMs.
\begin{theorem} \label{PDM Construction1}
If there exist a {\rm PDM}$(3,m_{1})$, a {\rm PDM}$(3,2m_{1}-1)$ and an {\rm SIPDM}$(3,m_{2},1)$, then there exists a {\rm PDM}$(3,m)$ with $m=m_{1}(m_{2}+1)-1$.
If the {\rm PDM}$(3,2m_{1}-1)$ is an {\rm SIPDM}, then the {\rm PDM}$(3,m)$ is also an {\rm SIPDM}.
\end{theorem}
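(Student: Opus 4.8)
The goal is to build a PDM$(3,m)$ with $m = m_1(m_2+1)-1$ from the three given ingredients. The natural strategy is a blow-up construction: think of the index set $I_m$ as being partitioned into $m_1$ ``fibers'' of size $m_2+1$, and use the SIPDM$(3,m_2,1)$ (together with the extra zero column that turns it into a PDM$(3,m_2+1)$, cf. the remark after the definition of SIPDM) to fill in the within-fiber structure, while using the PDM$(3,m_1)$ and PDM$(3,2m_1-1)$ to control the between-fiber differences. The catch is that a direct product (Theorem~\ref{multiplePDM1}) would only give $m_1(m_2+1)$, not $m_1(m_2+1)-1$, and moreover the SIPDM has a \emph{hole}, so the product must be assembled more carefully so that the hole in the small matrix interacts correctly with the PDM$(3,m_1)$.

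First I would write $m_2 + 1 = 2t$ (it is even, since $m_2$ is odd) and recall that from the SIPDM$(3,m_2,1)$ we obtain, by adjoining the all-zero column, a PDM$(3,m_2+1)$ on the base $I_{m_2+1}$; equivalently it is an SIPDM$(3,2m_1(m_2+1)\pm\cdots)$-type object after scaling. The key observation is that the hole $\{0\}$ of the SIPDM, when scaled by $m_1$ and placed at the ``center'' fiber, must be filled by a copy of the PDM$(3,m_1)$ itself — this is exactly why a PDM$(3,m_1)$ is needed. The remaining $m_2 = 2t-1$ non-central fibers come in $\pm$ pairs of fibers relative to the center; the within-a-$\pm$-pair differences span a set of the form $m_1 I_{m_1}' \cup (\text{shift})$, and these are covered by invoking the PDM$(3, 2m_1-1)$ — the $2m_1-1$ appears precisely because a $\pm$-pair of fibers, each of size $m_1$ in the relevant coordinate, has ``diameter'' $2m_1-1$. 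So the construction is: scale the PDM$(3,m_2+1)$-version of the SIPDM by $m_1$ to lay out the $m_1$ fibers; inside the central fiber insert the PDM$(3,m_1)$; across symmetric pairs of fibers insert (appropriately translated) copies derived from the PDM$(3,2m_1-1)$.

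Then I would verify the two defining properties of a PDM$(3,m)$: (i) each row is a permutation of $I_m$, which follows because the fibers partition $I_m$ and within each fiber we have a permutation of the correct residue class — the row condition is essentially bookkeeping; (ii) for each of the three pairs $0 \le s < t \le 2$, the difference list $\Delta_{ts}$ equals $I_m$. For (ii) I would split a difference $d_{tj} - d_{sj}$ according to whether the two entries lie in the same fiber or different fibers. The same-fiber differences are governed by the SIPDM/PDM$(3,m_2+1)$ difference property \emph{away from the hole}, plus the PDM$(3,m_1)$ difference property \emph{at the hole} — together these should produce exactly the residues lying in $m_1 \cdot (I_{m_2+1}) + I_{m_1}^{\circ}$ arithmetic, i.e. the ``small'' differences. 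The cross-fiber differences are governed by the PDM$(3,2m_1-1)$, and one checks that these fill in precisely the complementary set of residues. A counting check ($|I_m| = 2m_1 t - 1$ differences needed, matched against the sizes $m_2 = m_2+1-1$, $m_1$, and $2m_1-1$ of the pieces) confirms the partition is exact.

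The main obstacle will be pinning down the exact translations/indexing so that the cross-fiber contributions from the PDM$(3,2m_1-1)$ tile the complement of the same-fiber contributions with no overlaps and no gaps — in particular, reconciling the hole $m_1 I_1 = \{0\}$ of the SIPDM (a single point) with the $m_1$ distinct values a PDM$(3,m_1)$ must realize requires choosing the embedding of the central fiber so that its ``center'' coincides with $0$ and its spread is exactly $I_{m_1}$ after scaling by $1$ rather than $m_1$. I expect to spend most of the effort writing down the explicit entry formula (analogous to $z^{j}_{(i-1)(2s+1)+h} = (2s+1)x^{j}_i + y^{j}_h$ in Theorem~\ref{multiplePDM1}, but with a special rule on the central block) and then checking the three difference lists; the ``SIPDM implies SIPDM'' addendum follows by tracking that if the PDM$(3,2m_1-1)$ ingredient is itself an SIPDM then the zero column never gets ``used up'' inside a fiber, so the resulting matrix still has its hole at $\{0\}$ and omitting the corresponding column leaves an SIPDM$(3,m,1)$.
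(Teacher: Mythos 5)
Your overall instinct --- form a product of the PDM$(3,m_1)$ with the SIPDM$(3,m_2,1)$ and then fill a leftover regular hole with the third ingredient --- is the right one, and it is indeed the paper's strategy. But the specific decomposition you describe misidentifies both the scaling and the roles of the two auxiliary PDMs, and as written it would not assemble into a proof. First, adjoining a zero column to an SIPDM$(3,m_2,1)$ yields a PDM$(3,m_2)$, not a PDM$(3,m_2+1)$, and in this construction you must \emph{not} adjoin it: the hole of the small matrix is exactly what makes the count come out to $m_1(m_2+1)-1$. The correct entry formula is $d_{ls}=(m_2+1)a_{li}+b_{lj}$, i.e.\ the PDM$(3,m_1)$ is the \emph{outer} factor scaled by $m_2+1$ while the SIPDM is left unscaled; your proposal to scale the SIPDM by $m_1$ and fill its central hole with the PDM$(3,m_1)$ is essentially the product step of Theorem~\ref{PDM Construction2}, which produces order $m_1m_2$, not $m_1(m_2+1)-1$.

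Second, and more importantly, the set left uncovered by the product is not just ``the hole at the center fiber.'' Writing $m_2=2r_2+1$, the fiber of width $m_2+1$ centered at $(m_2+1)a=2(r_2+1)a$ misses its own center, and the midpoint $(r_2+1)(2a+1)$ between adjacent fibers is also missed; together these $m_1+(m_1-1)=2m_1-1$ points form the single arithmetic progression $(r_2+1)I_{2m_1-1}$, both as base elements and as row differences. This is exactly why the third ingredient is a PDM of order $2m_1-1$: one multiplies it by $r_2+1$ and appends it to cover this regular hole. Your plan instead assigns the centers to a copy of PDM$(3,m_1)$ and invokes PDM$(3,2m_1-1)$ for ``cross-fiber differences of $\pm$-pairs''; under that split the $m_1-1$ midpoints (an even number of points, so no PDM of that order is available) are left with no ingredient to cover them, while the cross-fiber differences are in fact already handled by the product structure. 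The addendum does hold for the reason you gesture at --- $0$ lies in $(r_2+1)I_{2m_1-1}$, so if the filling matrix is an SIPDM$(3,2m_1-1,1)$ the assembled matrix still omits $0$ --- but only once the hole has been identified correctly.
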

\begin{proof}
Let $D=(a_{li})$, $1\leq l \leq3$, $1\leq i\leq m_{1}$, be a PDM$(3,m_{1})$ with $m_{1}=2r_{1}+1$,
and $D_{H}^{(1)}=(b_{lj})$, $1\leq l \leq3$, $1\leq j\leq m_{2}-1$, be an SIPDM$(3,m_{2},1)$ with $m_{2}=2r_{2}+1$.
Let $D_{H}=(d_{ls})$ where $d_{ls}=(m_{2}+1)a_{li}+b_{lj}$, $1\leq l \leq3$, $s=(m_{2}-1)(i-1)+j$,
$1\leq i\leq m_{1}$, $1\leq j\leq m_{2}-1$, then
$D_{H}$ is an SIPDM of basis $I_{m_{1}(m_{2}+1)-1}\setminus (r_{2}+1)I_{2m_{1}-1}$.
Multiplying a PDM$(3,2m_{1}-1)$ by $r_{2}+1$, we can obtain a $3\times (2m_{1}-1)$ matrix $M$,
and $M$ is a new PDM based on $(r_{2}+1)I_{2m_{1}-1}$, then $(D_{H},M)$
forms a PDM$(3,m_{1}(m_{2}+1)-1)$ of basis $I_{m_{1}(m_{2}+1)-1}$.
\end{proof}

Similar to the proof of Theorem \ref{PDM Construction1}, one can obtain the following result.

\begin{theorem} \label{PDM Construction1-1}
If there exist a {\rm PDM}$(3,m_{1})$, a {\rm PDM}$(3,2m_{1}+1)$ and an {\rm SIPDM}$(3,m_{2},1)$,
then there exists a {\rm PDM}$(3,m)$ with $m=m_{1}(m_{2}+1)+1$.
If the {\rm PDM}$(3,2m_{1}+1)$ is an {\rm SIPDM}, then the {\rm PDM}$(3,m)$ is also an {\rm SIPDM}.
\end{theorem}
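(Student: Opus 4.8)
The plan is to mimic the proof of Theorem~\ref{PDM Construction1} almost verbatim, changing only the arithmetic so that the regular hole lands on a set of size $2m_1+1$ instead of $2m_1-1$. First I would set up the same ingredients: let $D=(a_{li})$, $1\leq l\leq 3$, $1\leq i\leq m_1$, be the given {\rm PDM}$(3,m_1)$ with $m_1=2r_1+1$, and let $D_H^{(1)}=(b_{lj})$, $1\leq l\leq 3$, $1\leq j\leq m_2-1$, be the given {\rm SIPDM}$(3,m_2,1)$ with $m_2=2r_2+1$. Define the inflated block $D_H=(d_{ls})$ by $d_{ls}=(m_2+1)a_{li}+b_{lj}$ where $s=(m_2-1)(i-1)+j$, $1\leq i\leq m_1$, $1\leq j\leq m_2-1$.

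Next I would verify that $D_H$ is an {\rm SIPDM} with basis $I_{m_1(m_2+1)+1}\setminus (r_2+1)I_{2m_1+1}$. The bookkeeping is the same as before: the entries $(m_2+1)a_{li}+b_{lj}$, as $i$ ranges over $[1,m_1]$ and $j$ over $[1,m_2-1]$, hit every element of $I_{m_1(m_2+1)+1}$ except those of the form $(m_2+1)a_{li}+0=(m_2+1)a_{li}$ — and since the row $(a_{l1},\dots,a_{lm_1})$ of the {\rm PDM}$(3,m_1)$ runs over $I_{m_1}=\{-r_1,\dots,r_1\}$, these excluded entries are exactly $(m_2+1)\cdot\{-r_1,\dots,r_1\}$. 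One checks $(m_2+1)r_1=(r_2+1)\cdot 2r_1=(r_2+1)\cdot(m_1-1)$, and the largest entry actually occurring, $(m_2+1)r_1+r_2=(r_2+1)(m_1-1)+r_2$, should equal $r_1(m_2+1)+1-1=\,$(half of $m_1(m_2+1)+1$ minus one); a one-line check confirms this forces the hole to be $(r_2+1)I_{2m_1+1}$ rather than $(r_2+1)I_{2m_1-1}$, which is precisely why a {\rm PDM}$(3,2m_1+1)$ is needed to fill it. The difference property for $D_H$ follows exactly as in Theorem~\ref{PDM Construction1}: for rows $s<t$, the difference $d_{ts}-d_{ss}=(m_2+1)(a_{ti}-a_{si})+(b_{tj}-b_{sj})$, the outer part ranges over $(m_2+1)I_{m_1}$ by the {\rm PDM}$(3,m_1)$ property, the inner part ranges over $I_{m_2}\setminus\{0\}$ by the {\rm SIPDM}$(3,m_2,1)$ property, and these combine without collision to give $I_{m_1(m_2+1)+1}\setminus (r_2+1)I_{2m_1+1}$.

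Finally I would fill the hole: multiply the given {\rm PDM}$(3,2m_1+1)$ entrywise by $r_2+1$ to obtain a $3\times(2m_1+1)$ matrix $M$ based on $(r_2+1)I_{2m_1+1}$ — scaling by a constant preserves both the row-completeness and the difference property on the scaled alphabet — and concatenate to get $(D_H,M)$, which is then a {\rm PDM}$(3,m_1(m_2+1)+1)$ with full basis $I_{m}$, $m=m_1(m_2+1)+1$. For the last sentence of the statement: if the {\rm PDM}$(3,2m_1+1)$ is itself an {\rm SIPDM} (i.e.\ contains the all-zero column, which after scaling is still all-zero), then $(D_H,M)$ contains that all-zero column, so it is an {\rm SIPDM}. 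I do not expect a genuine obstacle here — the whole content is in getting the index arithmetic for the hole size right, i.e.\ checking that $(m_2+1)r_1=(r_2+1)(m_1-1)$ and that the complement of the inflated image is exactly $(r_2+1)I_{2m_1+1}$; everything else is a transcription of the previous proof.
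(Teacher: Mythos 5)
Your proposal is correct and follows exactly the route the paper intends (the paper gives no separate proof, only the remark that the argument is similar to that of Theorem~\ref{PDM Construction1}): the same inflation $d_{ls}=(m_{2}+1)a_{li}+b_{lj}$, reread as an SIPDM on $I_{m_{1}(m_{2}+1)+1}$ whose regular hole is the larger set $(r_{2}+1)I_{2m_{1}+1}$, then filled by the PDM$(3,2m_{1}+1)$ scaled by $r_{2}+1$. One small imprecision: the elements missed by the inflation are not \emph{exactly} the even multiples $(m_{2}+1)I_{m_{1}}=(r_{2}+1)\cdot\{-2r_{1},-2r_{1}+2,\ldots,2r_{1}\}$ --- the odd multiples $(r_{2}+1)(2a+1)$ lying in the gaps between consecutive translated blocks, together with the extremes $\pm(r_{2}+1)(2r_{1}+1)$, are also missed --- but your final identification of the hole as $(r_{2}+1)I_{2m_{1}+1}$ already accounts for all of these, so the argument goes through.
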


\begin{theorem}\label{PDM Construction2}
Let $t\geq \frac{m_{1}+1}{12}$, $m_{1}=2r_{1}+1$. If there exist a {\rm PDM}$(3,m_{1})$, an
{\rm SIPDM}$(3,m_{2},1)$ and a $(2x+12t+1,4,1)$-{\rm PDP} with $t$ blocks which cover $[1,r_{1}]\cup[x+r_{1}+1,x+6t]$,
then there exists an {\rm SIPDM}$(3,m,1)$ with $m=m_{1}(m_{2}-1)+12t+1$.
\end{theorem}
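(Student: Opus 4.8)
The plan is to build the required SIPDM$(3,m,1)$ as a concatenation of two $3$-rowed matrices: a ``multiplication'' matrix obtained from the PDM$(3,m_1)$ and the SIPDM$(3,m_2,1)$, and a matrix obtained from the given PDP via Lemma~\ref{SIPDM}, arranged so that their column-sets partition $I_m\setminus\{0\}$. Throughout write $m_1=2r_1+1$ and $m_2=2r_2+1$, so that $m=m_1(m_2-1)+12t+1=2r+1$ with $r=m_1r_2+6t$, and put $M'=m_1r_2+r_1$, $M=2M'+1$ (so $M-m_1=m_1(m_2-1)$).

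Step 1 (the multiplication matrix). Let $D=(a_{li})$ be the given PDM$(3,m_1)$ and $D_H^{(1)}=(b_{lj})$ the given SIPDM$(3,m_2,1)$ (it has $m_2-1$ columns and basis $I_{m_2}\setminus\{0\}$). In the spirit of Theorem~\ref{multiplePDM1} and Theorem~\ref{PDM Construction1}, define the $3\times m_1(m_2-1)$ matrix $D_1=(d_{ls})$ by $d_{ls}=m_1b_{lj}+a_{li}$, with column index $s$ running over all pairs $(i,j)$, $1\le i\le m_1$, $1\le j\le m_2-1$. The key observation is that, as $\beta$ runs over $I_{m_2}\setminus\{0\}$, the blocks $m_1\beta+I_{m_1}$ are pairwise disjoint and together exhaust the annulus $\{\, y\in\Z : r_1<|y|\le M'\,\}=I_M\setminus I_{m_1}$; hence each row of $D_1$ comprises exactly $I_M\setminus I_{m_1}$, and for any two rows the list of their entrywise differences over all columns of $D_1$ equals $\{\, m_1\beta+\alpha : \beta\in I_{m_2}\setminus\{0\},\ \alpha\in I_{m_1}\,\}$ (since the two given matrices are a PDM and an SIPDM respectively), which is again $I_M\setminus I_{m_1}$. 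Thus $D_1$ is an SIPDM with regular hole $I_{m_1}$, based on $I_M\setminus I_{m_1}$.

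Step 2 (the PDP matrix). The given $(2x+12t+1,4,1)$-PDP has difference leave $[\, r_1+1,\ x+r_1\,]$, so it covers $H=[1,r_1]\cup[\, x+r_1+1,\ x+6t\,]$ with $|H|=6t$, using $t$ blocks. By Lemma~\ref{SIPDM} (applied with $h=t$) we obtain a $3\times 12t$ matrix $D_2$ that is an SIPDM based on $H\cup(-H)$. The hypothesis $t\ge (m_1+1)/12$, that is $6t\ge r_1+1$, keeps the covered set $H$ meaningful, and for the column-sets of $D_1$ and $D_2$ to be complementary in $I_m\setminus\{0\}$ one needs $x=m_1r_2$ (equivalently $2x+12t+1=m$); then $x+r_1=M'$ and $x+6t=r$, so $H\cup(-H)=\{\, y:1\le|y|\le r_1\,\}\cup\{\, y:M'<|y|\le r\,\}=(I_m\setminus\{0\})\setminus(I_M\setminus I_{m_1})$.

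Step 3 (concatenate). Put $D^{*}=(D_1\mid D_2)$, a $3\times(m-1)$ matrix. Since $I_M\setminus I_{m_1}$ and $H\cup(-H)$ partition $I_m\setminus\{0\}$, each row of $D^{*}$ comprises all of $I_m\setminus\{0\}$; and for any two rows the list of their entrywise differences over all columns of $D^{*}$ is $(I_M\setminus I_{m_1})\cup(H\cup(-H))=I_m\setminus\{0\}$. Hence $D^{*}$ is an SIPDM$(3,m,1)$, as required. The step I expect to cost the most is Step~1 — verifying that the blocks $m_1\beta+I_{m_1}$ tile $I_M\setminus I_{m_1}$ exactly, since this single fact is what makes $D_1$ simultaneously row-complete and difference-complete — whereas the interval bookkeeping of Step~2 (pinning down $x$ and checking complementarity of the two column-sets) and the concatenation of Step~3 are routine.
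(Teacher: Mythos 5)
Your proposal is correct and follows essentially the same route as the paper's proof: the same product matrix $d_{ls}=m_1b_{lj}+a_{li}$ giving an SIPDM based on $I_{m_1m_2}\setminus I_{m_1}$, the same application of Lemma~\ref{SIPDM} with $x=m_1r_2$ to fill in $H\cup(-H)$, and the same concatenation. The only difference is that you spell out the tiling and complementarity checks that the paper leaves implicit.
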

\begin{proof}
Let $D=(a_{li})$, $1\leq l \leq3$, $1\leq i\leq m_{1}$, be a PDM$(3,m_{1})$ with $m_{1}=2r_{1}+1$,
and $D_{H}^{(1)}=(b_{lj})$, $1\leq l \leq3$, $1\leq j\leq m_{2}-1$, be an SIPDM$(3,m_{2},1)$ with $m_{2}=2r_{2}+1$.
Let $D_{H}^{(2)}=(d_{ls})$ where $d_{ls}=m_{1}b_{lj}+a_{li}$, $1\leq l \leq3$, $s=m_{1}(j-1)+i$,
$1\leq i\leq m_{1}$, $1\leq j\leq m_{2}-1$,
then $D_{H}^{(2)}$ is an SIPDM of basis $I_{m_{1}m_{2}}\setminus I_{m_{1}}$.
Let $H=[1,r_{1}]\cup[m_{1}r_{2}+r_{1}+1,m_{1}r_{2}+6t]$.
Setting $x=m_{1}r_{2}$, since there exists a $(2x+12t+1,4,1)$-PDP with $t$ blocks which cover $[1,r_{1}]\cup[x+r_{1}+1,x+6t]$, by Lemma~\ref{SIPDM},
we can obtain an SIPDM based on $H\bigcup(-H)$. Denote this SIPDM as $D_{H}^{(3)}$. Let $D_{H}=(D_{H}^{(3)},D_{H}^{(2)})$,
we obtain an SIPDM$(3,m,1)$ with $m=m_{1}(m_{2}-1)+12t+1$ which is based on $I_{m}$.
\end{proof}

\begin{corollary} \label{PDMCon2result}
If there exists an {\rm SIPDM}$(3,m_{2},1)$, then there exist {\rm SIPDM}$(3,m,1)$ for
$m=5m_{2}$ $+8,19m_{2}+30,29m_{2}+44,29m_{2}+56,37m_{2}+48,39m_{2}+46,45m_{2}+52$.
\end{corollary}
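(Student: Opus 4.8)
The plan is to derive each of the seven parametric families in Corollary~\ref{PDMCon2result} by a single application of Theorem~\ref{PDM Construction2}, choosing $m_{1}$ to be a small odd value for which both a PDM$(3,m_{1})$ is known and a suitable perfect difference packing exists. For the family $m = 5m_{2}+8$ take $m_{1}=5$ (so $r_{1}=2$), which forces $m_{1}(m_{2}-1)+12t+1 = 5m_{2}+12t-4$, and hence $t=1$; similarly $m=19m_{2}+30$ uses $m_{1}=19$ with $t=4$; $m=29m_{2}+44$ uses $m_{1}=29$ with $t=6$; $m=29m_{2}+56$ uses $m_{1}=29$ with $t=7$; $m=37m_{2}+48$ uses $m_{1}=37$ with $t=6$; $m=39m_{2}+46$ uses $m_{1}=39$ with $t=6$; and $m=45m_{2}+52$ uses $m_{1}=45$ with $t=6$. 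In each case one checks the arithmetic $m = m_{1}(m_{2}-1)+12t+1$ and the inequality $t \geq (m_{1}+1)/12$, which holds with equality or near-equality for these choices.

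**The next step** is to supply, for each $(m_{1},t)$ pair, a concrete $(2x+12t+1,4,1)$-PDP with exactly $t$ blocks covering the target set $[1,r_{1}]\cup[x+r_{1}+1,x+6t]$; here $x$ is a free shift parameter that Theorem~\ref{PDM Construction2} instantiates as $m_{1}r_{2}$, but the existence of the packing depends only on the shape of the covered set, not on $x$ itself, so it suffices to exhibit the base blocks symbolically in terms of $x$ and $r_{1}$. The existence of a PDM$(3,m_{1})$ for $m_{1}\in\{5,19,29,37,39,45\}$ is guaranteed by Lemma~\ref{SPDM1} (for $m_{1}=5$ one uses the trivial PDM$(3,5)$ from \cite{Ab84}, and the others lie in the range $[5,200]\setminus\{5,7,9,11,\dots\}$ with none of the listed possible exceptions equal to these values). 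So the content of the corollary reduces to displaying the seven small packings and verifying their difference lists.

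**The main obstacle** is precisely the construction of those seven perfect difference packings: one must find $t$ blocks $\{0,a_i,b_i,c_i\}$ in $\mathbf{Z}_{2x+12t+1}$ whose $6t$ positive differences are exactly the $r_{1}$ small values $1,\dots,r_{1}$ together with the $6t-r_{1}$ consecutive values $x+r_{1}+1,\dots,x+6t$. This is a finite search for each case, but it is the only place where genuine combinatorial work (as opposed to bookkeeping) is needed; for the larger $m_{1}$ values ($29,37,39,45$) with $t=6$ or $7$ the packings involve up to $42$ differences and may require care to split correctly between the low band $[1,r_{1}]$ and the high band. Once these packings are exhibited and checked, Lemma~\ref{SIPDM} turns each into an SIPDM on $H\cup(-H)$, and Theorem~\ref{PDM Construction2} concatenates it with the Kronecker-type SIPDM $D_{H}^{(2)}$ built from the PDM$(3,m_{1})$ and the given SIPDM$(3,m_{2},1)$, yielding the claimed SIPDM$(3,m,1)$ based on $I_m$. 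I would close by tabulating the seven parameter sets with their witnessing packings so the reader can verify each instance directly.
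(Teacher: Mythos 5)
Your overall strategy is exactly the paper's: each family comes from one application of Theorem~\ref{PDM Construction2} with a fixed $m_{1}=2r_{1}+1$ and a $(2x+12t+1,4,1)$-PDP with $t$ blocks covering $[1,r_{1}]\cup[x+r_{1}+1,x+6t]$. However, three of your seven $(m_{1},t)$ pairs fail the arithmetic $m=m_{1}(m_{2}-1)+12t+1$. For $m_{1}=37$ you need $12t-36=48$, i.e.\ $t=7$, not $6$; for $m_{1}=39$ you need $12t-38=46$, i.e.\ $t=7$, not $6$; and for $m_{1}=45$ you need $12t-44=52$, i.e.\ $t=8$, not $6$. The paper's parameter list is $(r_{1},t)\in\{(2,1),(9,4),(14,6),(14,7),(18,7),(19,7),(22,8)\}$, which is what your seven families actually force.

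The more substantive gap is that you correctly identify the existence of the seven covering PDPs as ``the only place where genuine combinatorial work is needed'' and then do not do that work: you neither exhibit the packings nor cite a source for them. The proof cannot close without this. The paper handles it by giving the $t=1$ packing explicitly ($\{0,1,x+4,x+6\}$ covers $[1,2]\cup[x+3,x+6]$) and invoking Theorem~11 of \cite{Ma87} together with Lemma~4.2 of \cite{GMS2010} for the remaining six parameter pairs; a ``finite search'' promise is not a substitute, especially since for $(r_{1},t)=(22,8)$ one must split $48$ differences between the bands $[1,22]$ and $[x+23,x+48]$ and there is no a priori guarantee such a packing exists. A minor further point: you appeal to Lemma~\ref{SPDM1} for the existence of PDM$(3,m_{1})$, but that lemma concerns SIPDMs and lists $29$ among its possible exceptions; what Theorem~\ref{PDM Construction2} actually needs is only a PDM$(3,m_{1})$, which for $m_{1}\in\{5,19,29,37,39,45\}$ follows from the known result that PDM$(3,m)$ exists for all odd $m\in[5,201]\setminus\{9,11\}$.
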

\begin{proof}
Obviously, $\{0,1,x+4,x+6\}$ is a $(2x+13,4,1)$-PDP with $1$ block which cover
$[1,2]\cup[x+3,x+6]$. Applying Lemma~\ref{PDM Construction2} with $r_1=2, t=1$, one can obtain an SIPDM$(3,5m_{2}+8,1)$.
From Theorem 11 in \cite{Ma87} and Lemma 4.2 of \cite{GMS2010}, there exists an $(2x+12t+1,4,1)$-PDP
with $t$ blocks which cover $[1,r_{1}]\cup[x+r_{1}+1,x+6t]$
for each $(r_{1},t)\in\{(9,4),(14,6),(14,7),(18,7),(19,7),(22,8)\}$, the conclusion comes from
Theorem~\ref{PDM Construction2}.
\end{proof}

\begin{lemma}\label{PDMCon3}
Let $t\geq \frac{m_{2}+1}{12}$, $m_{2}=2r_{2}+1$. If there exist an {\rm SIPDM}$(3,m_{2},1)$
and a $(2x+12t+1,4,1)$-{\rm PDP} with $t$ blocks which cover $[1,r_{2}]\cup[x+r_{2}+1,x+6t]$,
then there exists a {\rm PDM}$(3,m)$ with $m=12(m_{2}+t)+15$.
\end{lemma}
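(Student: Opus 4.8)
\textbf{Proof plan for Lemma~\ref{PDMCon3}.}
The plan is to mimic the construction of Theorem~\ref{PDM Construction2}, but instead of gluing a ``base'' PDM$(3,m_1)$ onto the dilated copy of the SIPDM$(3,m_2,1)$, I would glue on a whole fresh PDM$(3,12)$-type block coming from a PSDS/PDP, so that the hole of the incomplete matrix is filled by a genuinely full PDM rather than by a smaller PDM. Concretely, start from the SIPDM$(3,m_2,1)$, call it $D_H^{(1)}=(b_{lj})$ with $m_2=2r_2+1$. Dilate it: I would form $d_{ls}=12\,b_{lj}+e_{li}$, where $(e_{li})$ ranges over the twelve columns of a fixed PDM$(3,12)$-shaped array (the one produced, via Lemma~\ref{PSDS to PDM} or the explicit $D_i$ block of Lemma~\ref{SIPDM}, from a single $4$-subset $\{0,a,b,c\}$ over an appropriate group), with $i$ running over $12$ values and $j$ over $m_2-1$ values. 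Since $D_H^{(1)}$ covers $I_{m_2}\setminus\{0\}$ on each row and in differences, dilation by $12$ and superposition of the PDM$(3,12)$-block produces an SIPDM on $I_{12m_2}\setminus I_{12}$ — i.e.\ a $3\times 12(m_2-1)$ matrix whose row-contents and pairwise difference lists are exactly $I_{12m_2}\setminus I_{12}$. The hole here has ``radius'' $6$ in the scaled sense, i.e.\ it is the block $I_{12}=\{-6,\dots,6\}$ sitting at the bottom.

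Next I would fill that hole. The hole $I_{12}=[-6,6]$ needs to be realized in differences and in row-contents; equivalently, over $\{1,2,\dots,6\}$ together with $\{0\}$ I need three ``columns in the hole'' realizing $[1,6]$ as a PDP-type object. But by hypothesis there is an SIPDM$(3,m_2,1)$ and a $(2x+12t+1,4,1)$-PDP with $t$ blocks covering $[1,r_2]\cup[x+r_2+1,x+6t]$; taking $x=6$ (so that the covered set is $[1,r_2]\cup[r_2+7,6t+6]$, which is exactly $[1,6t+6]\setminus[r_2+1,r_2+6]$), Lemma~\ref{SIPDM} converts this PDP into an SIPDM based on $\bigl([1,6t+6]\setminus[r_2+1,r_2+6]\bigr)\cup\bigl(-([1,6t+6]\setminus[r_2+1,r_2+6])\bigr)$, a $3\times 12t$ matrix. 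Now I stack this SIPDM below the dilated matrix after an appropriate index shift: the dilated-and-superposed matrix lives on $I_{12m_2}\setminus I_{12}$, but I should instead dilate so it sits on $I_{12(m_2+t)+15}\setminus\bigl(\text{a window of width }2r_2+1\text{ scaled appropriately}\bigr)$; matching the uncovered window $[r_2+1,r_2+6]$ of the PDP-derived SIPDM against the hole left by the dilation, the two holes are complementary, and $(D^{(3)}_H, D^{(2)}_H)$ becomes an SIPDM$(3,?,1)$ whose only remaining hole is $\{0\}$ — which I fill with the all-zero column to get a genuine PDM. A bookkeeping count of the total number of columns gives $12(m_2-1)+12t+1 = 12m_2+12t-11$; to land on $m=12(m_2+t)+15$ one re-scales so the base PDM-block has the right size, i.e.\ the ``$12$'' dilation is chosen together with the condition $t\ge (m_2+1)/12$ so that the PDP's difference set is large enough to engulf the scaled hole. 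I would verify the exact arithmetic $m = 12(m_2+t)+15$ from the column count plus the extra $0$-column, adjusting the dilation factor if the first pass gives $12(m_2+t)-11$ rather than $+15$.

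The main obstacle, and the step needing real care, is getting the two holes to line up exactly. In Theorem~\ref{PDM Construction2} the dilated SIPDM has hole $I_{m_1}$ (width $m_1=2r_1+1$ at the bottom), and the PDP is chosen with parameter $r_1$ precisely so that its uncovered window is $[r_1+1,r_1+6]$-shaped — i.e.\ it covers everything in $[1,6t+\text{shift}]$ except a window of width $r_1$ matching the hole. Here the analogous role is played by $r_2$: the PDP must cover $[1,r_2]\cup[x+r_2+1,x+6t]$, so its ``miss'' is a window of width $r_2$ starting at $r_2+1$ (with $x$ absorbed into the shift), which must coincide with the hole $I_{m_2}$-window of the dilated construction. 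This forces the relation between $x$, $t$, $m_2$ and the dilation factor; the inequality $t\ge (m_2+1)/12$ is exactly the condition that the PDP's covered range $[1,6t+x]$ reaches past $x+r_2$ so the window fits. I expect the only genuine work is (i) checking that after dilation by the correct factor the hole of $D^{(2)}_H$ is literally the scaled window $[r_2+1,\dots]$ and its negatives, (ii) checking the PDP-derived SIPDM (via Lemma~\ref{SIPDM}) has the complementary support, and (iii) confirming the final column count plus the single $0$-column yields exactly $12(m_2+t)+15$. Steps (i)--(iii) are routine index arithmetic once the alignment is set up, so I would not grind through them here; the conceptual content is entirely in choosing the dilation factor and the shift $x$ so the two incomplete matrices are jigsaw-complementary, after which concatenation and appending $(0,0,0)^T$ finishes.
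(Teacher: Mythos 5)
There is a genuine gap, and you have in fact flagged its symptom yourself: your column count comes out to $12(m_2-1)+12t+1=12(m_2+t)-11$, which is $26$ columns short of the claimed $m=12(m_2+t)+15$, and no choice of dilation factor can recover those columns. The missing $26$ columns are not a bookkeeping artifact --- they are a missing ingredient. In the paper's construction the combined object does \emph{not} have a hole equal to $\{0\}$; it has a multiplicative hole $(r_2+1)I_{27}$, i.e.\ the $13$ positive multiples $(r_2+1),2(r_2+1),\dots,13(r_2+1)$, their negatives, and $0$, and this hole is filled by appending a PDM$(3,27)$ multiplied through by $r_2+1$ (a $3\times 27$ block, accounting exactly for your deficit of $27-1=26$). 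Your plan never produces or fills such a hole, so the proof cannot close as written. The reason the hole is inevitable is visible in the paper's actual mechanism, which is not a dilation of the SIPDM at all: one takes the single block $\{0,2,5,6\}$ of the $(13,4,1)$-PDF and perturbs its three nonzero entries by the three rows of the SIPDM$(3,m_2,1)$, setting $B_j=\{0,2(m_2+1)+b_{1j},5(m_2+1)+b_{2j},6(m_2+1)+b_{3j}\}$. The six difference classes then cover $k(m_2+1)+(I_{m_2}\setminus\{0\})$ for $k=1,\dots,6$, which is $[r_2+1,13(r_2+1)]$ with every multiple of $r_2+1$ removed --- hence the $27$-element symmetric hole. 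This $(26r_2+27,4,1)$-PDP is merged with the hypothesized PDP at $x=12r_2+13$ (its pieces $[1,r_2]$ and $[13(r_2+1)+1,12r_2+13+6t]$ sandwich the inflated range), the whole thing is converted to an SIPDM via Lemma~\ref{SIPDM}, and only then is the multiplicative hole plugged.

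Two further points in your alignment are inconsistent with the hypotheses. First, the PDP in the statement misses the window $[r_2+1,x+r_2]$ of width $x$ (not width $r_2$, and not width $6$); with your choice $x=6$ the miss is $[r_2+1,r_2+6]$, which cannot coincide with the hole $[1,6]$ of your dilated matrix unless $r_2=0$. Second, a dilation of the form $12b_{lj}+e_{li}$ with $b_{lj}$ ranging over $I_{m_2}\setminus\{0\}$ removes the twelve centers $12b$ as well as the bottom window, so even your intermediate matrix does not have the clean hole $I_{12}$ you assert. The conceptual fix is to abandon the constant-dilation picture and adopt the perturbed-PDF construction above, accepting that the leave is a scaled copy of $I_{27}$ and that a PDM$(3,27)$ is the object that completes it.
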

\begin{proof}
Let $D_H=(b_{lj})$, $1\leq l \leq3$, $1\leq j\leq m_{2}-1$, be an SIPDM$(3,m_{2},1)$ with $m_{2}=2r_{2}+1$.
Let $B_{j}=\{0,2(m_{2}+1)+b_{1j},5(m_{2}+1)+b_{2j},6(m_{2}+1)+b_{3j}\}$,
$j=1,2,...,m_{2}-1$, where $\{\{0,2,5,6\}\}$ is a $(13,4,1)$-PDF, then
${\cal B}= \{B_{j}|j=1,2,...,m_{2}-1\}$ is a
$(26r_{2}+27,4,1)$-PDP which covers $[r_{2}+1,13(r_{2}+1)]\setminus \{(r_{2}+1)i|1\leq i\leq 13\}$.
Let $x=12r_{2}+13$, and applying the $(2x+12t+1,4,1)$-PDP which covers $[1,r_{2}]\cup[x+r_{2}+1,x+6t]$,
one can obtain a new $(24r_{2}+12t+27,4,1)$-PDP which covers
$[1,12r_{2}+6t+13]\setminus \{(r_{2}+1)i|1\leq i\leq 13\}$.
By Lemma~\ref{SIPDM}, we can obtain an SIPDM based on $I_m\setminus ((r_{2}+1)I_{27})$.
Denote this SIPDM as $D_{H}$.
Multiplying a PDM$(3,27)$ by $r_{2}+1$, we can obtain a $3\times 27$ matrix $M$,
which is a new PDM based on $(r_{2}+1)I_{27}$, then $(D_{H},M)$
forms a PDM$(3,m)$ with $m=2(12r_{2}+6t+13)+1=12(m_{2}+t)+15$.
\end{proof}

  \begin{lemma}\label{p}
There exists a {\rm PDM}$(3,m)$ for each $m\in$ $S_1=\{p|211\leq p\leq 997,$ $p$ is a prime$\}$
except possibly for $m\in E_1=\{227$, $229$, $241$, $251$, $257$, $263$, $277$, $317$,
$331$, $347$, $367$, $373$, $383$, $397$, $401$, $431$, $439$, $587$, $617$, $641$,
$677$, $709$, $719$, $757$, $877$, $947$, $971$, $977$, $997\}$.
 \end{lemma}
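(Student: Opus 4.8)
The plan is to realize each prime $m$ in $S_1 \setminus E_1$ as the output size of one of the recursive constructions already established in this section, feeding in smaller PDMs and SIPDMs whose existence is guaranteed by Lemma~\ref{SPDM1}, by the known $(12t+1,4,1)$-PDFs, and by Theorem~\ref{multiplePDM1}. Concretely, I would first tabulate the sizes producible from the three output formulas available: $m = m_1(m_2+1)-1$ from Theorem~\ref{PDM Construction1}, $m = m_1(m_2+1)+1$ from Theorem~\ref{PDM Construction1-1}, and $m = m_1(m_2-1)+12t+1$ from Theorem~\ref{PDM Construction2} (together with its specializations in Corollary~\ref{PDMCon2result}), as well as $m = 12(m_2+t)+15$ from Lemma~\ref{PDMCon3}. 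For a fixed target prime $p$, the search is: find small $m_1$ (odd, with a known PDM$(3,m_1)$ and a known PDM$(3,2m_1\pm 1)$) and then check whether $(p \mp 1)/m_1 - 1$ is an odd integer $m_2$ for which an SIPDM$(3,m_2,1)$ is known; or, using Theorem~\ref{PDM Construction2}, check whether $p$ can be written as $m_1(m_2-1)+12t+1$ with $(m_1,t)$ one of the admissible pairs. Because $p$ is prime, $p\mp 1$ has many small factorizations, so there is real flexibility in choosing $m_1$.

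The key steps, in order, would be: (1) assemble the full list of $m_2$ for which an SIPDM$(3,m_2,1)$ is known — this is exactly the conclusion of Lemma~\ref{SPDM1} for $m_2<200$, extended by the new recursive lemmas/corollary to larger values where needed; (2) assemble the list of $m_1$ for which both a PDM$(3,m_1)$ and a PDM$(3,2m_1\pm1)$ exist — here one invokes the base cases $m_1=5,7,13,\dots$, the multiplicative Theorem~\ref{multiplePDM1}, and the fact that an SIPDM$(3,m,1)$ yields a PDM$(3,m)$ by adjoining the zero column; (3) for each prime $p\in S_1$, run the arithmetic test above over all feasible $(m_1,m_2)$ or $(m_1,t)$ pairs, recording one successful decomposition; (4) collect the primes for which no decomposition is found — these form $E_1$. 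For instance, $211 = 7\cdot 31 - 6$? no; but $211 = 5\cdot(41+1)+1 = 5\cdot 42+1$ would need $m_2=41$, and an SIPDM$(3,41,1)$ exists by Lemma~\ref{SPDM1}, with PDM$(3,5)$ and PDM$(3,11)$ — the latter fails, so one instead takes Theorem~\ref{PDM Construction1-1} with a different $m_1$, or falls back on Theorem~\ref{PDM Construction2}; the point is that for the $p\notin E_1$ some such representation is verified, and for $p\in E_1$ an exhaustive check over the (finitely many) admissible parameter sets comes up empty.

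I expect the main obstacle to be the bookkeeping in step (2): the recursive constructions of Theorems~\ref{PDM Construction1}, \ref{PDM Construction1-1} and Lemma~\ref{PDMCon3} all require \emph{two} input PDMs (one of size $m_1$ and one of size $2m_1\pm 1$ or $27$), and the set of $m_1$ for which \emph{both} inputs are available is considerably thinner than the set for which a single PDM$(3,m_1)$ exists — in particular the non-existence of PDM$(3,9)$ and PDM$(3,11)$ blocks several otherwise-natural choices of $m_1$. So for each prime one must search carefully among the admissible $m_1$, and it is precisely the primes where every admissible $(m_1, m_2)$ or $(m_1, t)$ combination hits one of these gaps (or would require an unknown SIPDM) that land in the exceptional set $E_1$. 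The constructions themselves are the easy part — they are already proved above; the work is the exhaustive parameter search, which I would organize by residue of $p$ modulo small numbers to cut down the cases, and then present the surviving decompositions in a table. Since this is a finite verification, the proof reduces to exhibiting, for each $p\in S_1\setminus E_1$, the witnessing parameters, which I would list explicitly.
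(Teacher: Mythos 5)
Your proposal matches the paper's proof in both substance and structure: the paper's argument for this lemma is exactly a finite verification that exhibits, for each prime $p\in S_1\setminus E_1$, witnessing parameters $(m_1,m_2)$ for Theorem~\ref{PDM Construction1}, Theorem~\ref{PDM Construction1-1}, or Corollary~\ref{PDMCon2result}, collected in Table~1 (e.g.\ $211=15\cdot(13+1)+1$ via Theorem~\ref{PDM Construction1-1}, avoiding the nonexistent PDM$(3,11)$ exactly as you anticipate). Your plan is correct and essentially identical, differing only in that you also allow Theorem~\ref{PDM Construction2} and Lemma~\ref{PDMCon3} as fallbacks, which the paper reserves for later lemmas.
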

\begin{proof} \ For each $m\in S_1\setminus E_1$, the corresponding $m_1s, m_2$s in Theorems \ref{PDM Construction1}, \ref{PDM Construction1-1},
and $m_2$s in Corollary \ref{PDMCon2result} are listed in Table 1.
\end{proof}

\begin{lemma}\label{3p}
 There exists a {\rm PDM}$(3,m)$ for each $m\in$ $S_2=\{3p|71\leq p\leq331,$ $p$ is a prime$\}$
except possibly for $m\in E_2=\{219$, $249$, $291$, $303$, $327$, $471$,
$591$, $669$, $717$, $789$, $829$, $831$, $879\}$.
\end{lemma}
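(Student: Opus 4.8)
\textbf{Proof plan for Lemma~\ref{3p}.}
The strategy mirrors the proof of Lemma~\ref{p}: I would treat each $m=3p$ with $71\le p\le 331$ prime individually, and for every such $m$ not in the exception list $E_2$, exhibit a decomposition of $m$ that matches one of the recursive constructions already available, namely Theorem~\ref{multiplePDM1}, Theorem~\ref{PDM Construction1}, Theorem~\ref{PDM Construction1-1}, Corollary~\ref{PDMCon2result}, or Lemma~\ref{PDMCon3}. Concretely, for each target $m$ one must locate parameters $m_1,m_2$ (and, where relevant, $t$ and a suitable $(2x+12t+1,4,1)$-PDP) so that $m=m_1(m_2+1)-1$, or $m=m_1(m_2+1)+1$, or $m$ is one of $5m_2+8$, $19m_2+30$, $29m_2+44$, $29m_2+56$, $37m_2+48$, $39m_2+46$, $45m_2+52$, or $m=12(m_2+t)+15$, and so that the ingredient matrices PDM$(3,m_1)$, PDM$(3,2m_1\pm1)$, SIPDM$(3,m_2,1)$ required by the chosen construction are themselves already known to exist (from Lemma~\ref{SPDM1}, Lemma~\ref{p}, Theorem~\ref{multiplePDM1} applied to small factors, or earlier literature). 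As in Lemma~\ref{p}, the outcome of this search is best recorded as a table (call it Table~2) listing, for each $m\in S_2\setminus E_2$, the construction used and the values of $m_1$ and $m_2$.

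The key steps, in order, are: (i) compile the list $S_2$ of the $47$ values $3p$ with $p$ prime in $[71,331]$; (ii) for each such $m$, search over admissible $m_1$ (odd, with PDM$(3,m_1)$ and PDM$(3,2m_1-1)$ or PDM$(3,2m_1+1)$ known) for a factorization of $m\mp1$ of the form $m_1(m_2+1)$ with $m_2$ odd and SIPDM$(3,m_2,1)$ known from Lemma~\ref{SPDM1} or Corollary~\ref{PDMCon2result}; (iii) for the residual values not caught by Theorems~\ref{PDM Construction1}--\ref{PDM Construction1-1}, try the linear forms of Corollary~\ref{PDMCon2result} and the form $m=12(m_2+t)+15$ of Lemma~\ref{PDMCon3}, again checking that the needed SIPDM$(3,m_2,1)$ exists; (iv) collect the finitely many $m$ that resist all of these into the exception set $E_2$, verifying it equals the claimed list $\{219,249,291,303,327,471,591,669,717,789,829,831,879\}$ --- note $829$ appears because here $m$ ranges over multiples $3p$ and I would double-check whether $829$ is intended as $3p$ for some prime or is a typo for e.g. $3\cdot 277=831$ already listed, so a consistency pass on $E_2$ is needed; (v) assemble Table~2 for the non-exceptional cases.

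The main obstacle is step (ii)--(iii): ensuring that every recursive step bottoms out in previously established ingredients. Since Lemma~\ref{p} (primes in $[211,997]$) and Lemma~\ref{SPDM1} (SIPDMs for odd $m<200$) are both proved with their own exception lists, a factorization $m_1(m_2+1)\pm1$ is only usable when the particular $m_1$, $2m_1\pm1$ and $m_2$ avoid those exceptions; so the search must be run against the actual known-existence sets, not against the necessary conditions. For a handful of the $m=3p$ this will force a more delicate choice --- e.g. using Lemma~\ref{PDMCon3} with a carefully chosen $(2x+12t+1,4,1)$-PDP from Theorem~11 of \cite{Ma87} and Lemma~4.2 of \cite{GMS2010} --- and the values where no such choice is available are precisely what lands in $E_2$. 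The remaining steps (compiling $S_2$, filling Table~2, and the final consistency check on $E_2$) are routine bookkeeping.
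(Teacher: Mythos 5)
Your proposal matches the paper's proof exactly: the paper's argument for Lemma~\ref{3p} consists of precisely the case-by-case search you describe, with the results recorded in Table~2 as triples $(m,m_1,m_2)$ referencing Theorem~\ref{PDM Construction1}, Theorem~\ref{PDM Construction1-1}, or Corollary~\ref{PDMCon2result}. Your side remark about $829$ is a legitimate catch --- $829$ is prime, not of the form $3p$, so it belongs in $E_1$ rather than $E_2$ (it is later absorbed into $S_5$ in Theorem~\ref{PDM reuslt}, so the final result is unaffected).
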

\begin{proof} \ For each $m\in S_2\setminus E_2$, the corresponding $m_1$s, $m_2$s in Theorems
\ref{PDM Construction1}, \ref{PDM Construction1-1},
and $m_2$s in Corollary \ref{PDMCon2result} are listed in Table 2.
\end{proof}

 \begin{lemma}\label{9p}
 There exists a {\rm PDM}$(3,m)$ for each $m\in$ $S_3=\{9p|23\leq p\leq109,$ $p$ is a prime$\}$
except possibly for $m\in E_3=\{207$, $369$, $387$, $423$, $639\}$.
\end{lemma}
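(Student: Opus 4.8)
\textbf{Proof proposal for Lemma~\ref{9p}.}
The plan is to proceed exactly as in Lemmas~\ref{p} and~\ref{3p}: for each prime $p$ with $23\le p\le 109$, write $m=9p$ in one of the forms handled by the recursive constructions of this section, and verify that the required ingredient PDMs and SIPDMs all exist (either from Lemma~\ref{SPDM1}, from the direct constructions cited there, or from the PDMs already produced in Lemmas~\ref{p} and~\ref{3p}). The primes in range are $23,29,31,37,41,43,47,53,59,61,67,71,73,79,83,89,97,101,103,107,109$, giving values $m=9p\in\{207,261,279,333,369,387,\ldots,981\}$; I expect all but the five listed in $E_3=\{207,369,387,423,639\}$ to be reachable.

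First I would try Theorem~\ref{PDM Construction1}, which yields a PDM$(3,m_1(m_2+1)-1)$ from a PDM$(3,m_1)$, a PDM$(3,2m_1-1)$ and an SIPDM$(3,m_2,1)$, and its companion Theorem~\ref{PDM Construction1-1}, which yields a PDM$(3,m_1(m_2+1)+1)$ from a PDM$(3,m_1)$, a PDM$(3,2m_1+1)$ and an SIPDM$(3,m_2,1)$. For each target $m=9p$ I would search for a factorization $m+1=m_1(m_2+1)$ or $m-1=m_1(m_2+1)$ with $m_1$ small enough that PDM$(3,m_1)$ and PDM$(3,2m_1\mp1)$ are already known (e.g. $m_1\in\{5,7,13,\ldots\}$ from Lemma~\ref{SPDM1} together with Theorems~\ref{multiplePDM1},~\ref{PDM Construction1}) and with $m_2$ one of the values for which an SIPDM$(3,m_2,1)$ exists. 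When no suitable factorization is available, I would fall back on Corollary~\ref{PDMCon2result}, which produces SIPDM$(3,m,1)$s (hence PDM$(3,m)$s) of the forms $5m_2+8$, $19m_2+30$, $29m_2+44$, $29m_2+56$, $37m_2+48$, $39m_2+46$, $45m_2+52$, and on Lemma~\ref{PDMCon3}, which produces PDM$(3,12(m_2+t)+15)$ for appropriate $(m_2,t)$. As in the previous two lemmas, the concrete choice of parameters $(m_1,m_2)$ (or $m_2$, or $(m_2,t)$) for each admissible $m$ would be recorded in a table.

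The routine but essential bookkeeping is the main obstacle: one must confirm, case by case, that every ingredient invoked is genuinely available. In particular the SIPDM$(3,m_2,1)$s needed are limited by Lemma~\ref{SPDM1} (which still has $m=9,11$ impossible and fifteen possible exceptions $15,21,27,\ldots,95$), and the PDM$(3,2m_1\pm1)$ ingredients for larger $m_1$ must themselves be traced back through Theorem~\ref{multiplePDM1} or through Lemmas~\ref{p} and~\ref{3p}; a target $m=9p$ for which none of these paths closes is precisely one that lands in $E_3$. I expect the five exceptional values $207,369,387,423,639$ to arise exactly because $9p\pm1$ admits no factorization $m_1(m_2+1)$ with all ingredients known and $9p$ is not of any of the arithmetic-progression forms of Corollary~\ref{PDMCon2result} or Lemma~\ref{PDMCon3} with an admissible $m_2$. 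Verifying that no further construction in this section covers these five cases, and hence that they must remain "possible exceptions," is the only genuinely delicate point; the rest is a finite check best displayed as a table.
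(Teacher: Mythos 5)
Your proposal takes essentially the same route as the paper: the paper's proof of Lemma~\ref{9p} consists precisely of exhibiting, for each $m\in S_3\setminus E_3$, a choice of $(m_1,m_2)$ for Theorem~\ref{PDM Construction1} or~\ref{PDM Construction1-1}, or an $m_2$ for Corollary~\ref{PDMCon2result}, recorded in Table~3 (e.g. $261=13\cdot 20+1$, $279=7\cdot 40-1$, $333=5\cdot 65+8$, $549=29\cdot 17+56$, $711=29\cdot 23+44$). The only thing missing from your write-up is the table itself, i.e. the explicit parameter choices that constitute the finite check you correctly identify as the substance of the proof.
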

\begin{proof} \ For each $m\in S_3\setminus E_3$, the corresponding $m_1$s, $m_2$s in Theorems
\ref{PDM Construction1}, \ref{PDM Construction1-1},
and $m_2$s in Corollary \ref{PDMCon2result} are listed in Table 3.
\end{proof}

\begin{lemma}\label{11p}
 There exists a {\rm PDM}$(3,m)$ for each $m\in$ $S_4=\{11p|19\leq p\leq89,$ $p$ is a prime$\}$
except possibly for $m=319$.
\end{lemma}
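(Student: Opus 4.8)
\textbf{Proof proposal for Lemma~\ref{11p}.} The plan is to run exactly the same table-driven argument used for Lemmas~\ref{p}, \ref{3p}, and \ref{9p}: for each prime $p$ with $19\le p\le 89$, write $m=11p$ as one of the outputs $m_1(m_2+1)-1$, $m_1(m_2+1)+1$, or one of the seven linear forms $5m_2+8$, $19m_2+30$, $29m_2+44$, $29m_2+56$, $37m_2+48$, $39m_2+46$, $45m_2+52$ from Corollary~\ref{PDMCon2result}, where the ingredient values $m_1$ (together with its companion $\mathrm{PDM}(3,2m_1\mp1)$) and $m_2$ are small enough that all required $\mathrm{PDM}(3,m_1)$s, $\mathrm{PDM}(3,2m_1\pm1)$s, and $\mathrm{SIPDM}(3,m_2,1)$s are already guaranteed by Lemma~\ref{SPDM1} and by the known small‑order results quoted in Section~2 (in particular $\mathrm{PDM}(3,m)$ for odd $m\in[5,201]\setminus\{9,11\}$). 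First I would list the primes in range: $19,23,29,31,37,41,43,47,53,59,61,67,71,73,79,83,89$, giving the seventeen target moduli $209,253,319,341,407,451,473,517,583,649,671,737,781,803,869,913,979$. For each of these (except $319$) I would exhibit a concrete factorization of the required shape and record it in Table~4, checking in each case that the needed ingredient matrices lie among the already-established ones.

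Concretely, the search is routine: for Theorem~\ref{PDM Construction1} one needs $m+1=m_1(m_2+1)$ with a valid $m_1$ (so that both $\mathrm{PDM}(3,m_1)$ and $\mathrm{PDM}(3,2m_1-1)$ exist) and $m_2$ with $\mathrm{SIPDM}(3,m_2,1)$ existing; for Theorem~\ref{PDM Construction1-1} one needs $m-1=m_1(m_2+1)$ under the analogous conditions; and for Corollary~\ref{PDMCon2result} one simply solves the relevant linear congruence for $m_2$ and checks $m_2$ is an admissible $\mathrm{SIPDM}$ order. For example $209+1=210=6\cdot35=6\cdot(34+1)$ is not directly usable since $35$ is not yet known as an $\mathrm{SIPDM}$ order, but $210=10\cdot21$, $210=14\cdot15$, etc., or alternatively the forms of Corollary~\ref{PDMCon2result} will supply a valid decomposition; I would simply tabulate whichever one works. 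The one genuine gap is $m=319=11\cdot29$: here $319+1=320=2^6\cdot5$ has no factorization $m_1(m_2+1)$ with $m_1$ an admissible odd PDM order and $m_2+1$ of the form (admissible $\mathrm{SIPDM}$ order)$+1$ (the small even factors force $m_2\in\{3,...\}$ or $m_1$ too small), $319-1=318=2\cdot3\cdot53$ is similarly resistant (the only odd factorizations give $m_1=3$ with $m_2+1=106$, i.e.\ $m_2=105$—but then one needs $\mathrm{PDM}(3,2m_1+1)=\mathrm{PDM}(3,7)$ which does exist, so this actually may work and should be double‑checked—or $m_1=53$, $m_2+1=6$, i.e.\ $m_2=5$, ruled out by Lemma~\ref{m=5,7}), and none of the seven linear forms of Corollary~\ref{PDMCon2result} yields an integer admissible $m_2$. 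Hence $319$ is flagged as a possible exception in the statement.

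The main obstacle, and essentially the only content of the proof, is the bookkeeping: verifying for each of the sixteen non-exceptional $m$ that at least one decomposition exists all of whose ingredients have \emph{already} been constructed (i.e.\ do not themselves lie in the exception sets $E_1,E_2,E_3$ or among the forbidden or still-open small orders), so that the induction is not circular. Since the ingredient orders here are quite small—$m_1$ will typically be $3,5,7,13$ or another order below $201$, and $m_2$ an $\mathrm{SIPDM}$ order below $200$ not in the bad set of Lemma~\ref{SPDM1}—this verification goes through cleanly, and I would present the outcome as Table~4 in the same format as Tables~1–3, with the lemma statement recording the single residual open case $m=319$. The proof body itself then reads, in parallel with the three preceding lemmas: ``For each $m\in S_4\setminus\{319\}$, the corresponding $m_1$s, $m_2$s in Theorems~\ref{PDM Construction1}, \ref{PDM Construction1-1}, and $m_2$s in Corollary~\ref{PDMCon2result} are listed in Table~4.''
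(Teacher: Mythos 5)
Your proposal matches the paper's proof essentially verbatim: the paper likewise just cites Theorems~\ref{PDM Construction1}, \ref{PDM Construction1-1} and Corollary~\ref{PDMCon2result} and records the ingredient pairs $(m_1,m_2)$ in Table~4 (e.g.\ $209=15\cdot14-1$, $253=5\cdot49+8$, $341=19\cdot18-1$), leaving $319$ as the sole possible exception. The one loose end you flagged for $319$ (the factorization $318=3\cdot106$) closes in the paper's favor, since a {\rm PDM}$(3,3)$ cannot exist ($N(m)\leq m-1$), so $319$ indeed remains exceptional.
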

\begin{proof} \ For each $m\in S_4\setminus \{319\}$, the corresponding $m_1$s, $m_2$s in Theorems
\ref{PDM Construction1}, \ref{PDM Construction1-1},
and $m_2$s in Corollary \ref{PDMCon2result} are listed in Table 4.
\end{proof}

  \begin{center} \footnotesize    {\bf Table 1} Corresponding $m_1$s and $m_2$s in Lemma \ref{p}
  \end{center}
\begin{center}\footnotesize
\begin{tabular}{|l|l|l|l|l|l|l|l|l|l|l|l|}
\hline
$m$ &$m_1$&$m_2$& Reference  &  $m$  &  $m_1$  &  $m_2$ &  Reference
&  $m$  &  $m_1$  &  $m_2$ &  Reference   \\
\hline
211 & 15 & 13 & Theorem~\ref{PDM Construction1-1} & 223 & 7  & 31 & Theorem~\ref{PDM Construction1}
& 233 & 13 & 17 & Theorem~\ref{PDM Construction1}\\
239 & 17 & 13 & Theorem~\ref{PDM Construction1-1} & 269 & 15 & 17 & Theorem~\ref{PDM Construction1}
& 271 & 15 & 17 & Theorem~\ref{PDM Construction1-1}\\
281 & 7  & 39 & Theorem~\ref{PDM Construction1-1} & 283 &    & 55 & Corollary~\ref{PDMCon2result}
& 293 & 21 & 13 & Theorem~\ref{PDM Construction1}\\
307 & 17 & 17 & Theorem~\ref{PDM Construction1-1} & 311 & 13 & 23 & Theorem~\ref{PDM Construction1}
& 313 & 13 & 23 & Theorem~\ref{PDM Construction1-1}\\
337 & 13 & 25 & Theorem~\ref{PDM Construction1}   & 349 & 25 & 13 & Theorem~\ref{PDM Construction1}
& 353 &    & 69 & Corollary~\ref{PDMCon2result}\\
359 & 15 & 23 & Theorem~\ref{PDM Construction1}   & 379 & 19 & 19 & Theorem~\ref{PDM Construction1}
&389 & 15 & 25 & Theorem~\ref{PDM Construction1}\\
409 & 17 & 23 & Theorem~\ref{PDM Construction1-1} & 419 & 21 & 19 & Theorem~\ref{PDM Construction1}
& 421 & 21 & 19 & Theorem~\ref{PDM Construction1-1}\\
433 & 31 & 13 & Theorem~\ref{PDM Construction1}   & 443 & 17 & 25 & Theorem~\ref{PDM Construction1-1}
& 449 & 25 & 17 & Theorem~\ref{PDM Construction1}\\
457 & 19 & 23 & Theorem~\ref{PDM Construction1-1} & 461 & 7  & 65 & Theorem~\ref{PDM Construction1}
& 463 & 7 & 65 & Theorem~\ref{PDM Construction1-1}\\
467 &    & 23 & Corollary~\ref{PDMCon2result}     & 479 & 15 & 31 & Theorem~\ref{PDM Construction1}
& 487 & 27 & 17 & Theorem~\ref{PDM Construction1-1}\\
491 & 35 & 13 & Theorem~\ref{PDM Construction1-1} & 499 & 25 & 19 & Theorem~\ref{PDM Construction1}
& 503 & 21 & 23 & Theorem~\ref{PDM Construction1}\\
509 & 15 & 33 & Theorem~\ref{PDM Construction1}   & 521 & 29 & 17 & Theorem~\ref{PDM Construction1}
& 523 & 29 & 17 & Theorem~\ref{PDM Construction1-1}\\
541 & 27 & 19 & Theorem~\ref{PDM Construction1-1} & 547 & 39 & 13 & Theorem~\ref{PDM Construction1-1}
& 557 & 31 & 17 & Theorem~\ref{PDM Construction1}\\
563 &    & 111& Corollary~\ref{PDMCon2result}     & 569 & 15 & 37 & Theorem~\ref{PDM Construction1}
& 571 & 15 & 37 & Theorem~\ref{PDM Construction1-1}\\
577 & 17 & 33 & Theorem~\ref{PDM Construction1}   & 593 & 33 & 17 & Theorem~\ref{PDM Construction1}
& 599 & 25 & 23 & Theorem~\ref{PDM Construction1}\\
601 & 25 & 23 & Theorem~\ref{PDM Construction1-1} & 607 & 19 & 31 & Theorem~\ref{PDM Construction1}
& 613 &    & 121& Corollary~\ref{PDMCon2result}\\
619 & 31 & 19 & Theorem~\ref{PDM Construction1}   & 631 & 45 & 13 & Theorem~\ref{PDM Construction1-1}
& 643 & 7  & 91 & Theorem~\ref{PDM Construction1}\\
647 & 27 & 23 & Theorem~\ref{PDM Construction1}   & 653 &    & 129& Corollary~\ref{PDMCon2result}
& 659 & 33 & 19 & Theorem~\ref{PDM Construction1}\\
661 & 33 & 19 & Theorem~\ref{PDM Construction1-1} & 673 & 21 & 31 & Theorem~\ref{PDM Construction1-1}
& 677 &    & 17 & Corollary~\ref{PDMCon2result}\\
683 &    & 135& Corollary~\ref{PDMCon2result}     & 691 & 15 & 45 & Theorem~\ref{PDM Construction1-1}
& 701 & 35 & 19 & Theorem~\ref{PDM Construction1-1}\\
727 & 13 & 55 & Theorem~\ref{PDM Construction1}   & 733 &    & 145& Corollary~\ref{PDMCon2result}
& 739 & 37 & 19 & Theorem~\ref{PDM Construction1}\\
743 & 31 & 23 & Theorem~\ref{PDM Construction1}   & 751 & 15 & 49 & Theorem~\ref{PDM Construction1-1}
&761 & 19 & 39 & Theorem~\ref{PDM Construction1-1}\\
769 & 55 & 13 & Theorem~\ref{PDM Construction1}   & 773 & 43 & 17 & Theorem~\ref{PDM Construction1}
&787 &    & 19 & Corollary~\ref{PDMCon2result}\\
797 & 57 & 13 & Theorem~\ref{PDM Construction1}   & 809 & 45 & 17 & Theorem~\ref{PDM Construction1}
& 811 & 45 & 17 & Theorem~\ref{PDM Construction1-1}\\
821 & 41 & 19 & Theorem~\ref{PDM Construction1-1} & 823 &    & 163& Corollary~\ref{PDMCon2result}
& 827 & 59 & 13 & Theorem~\ref{PDM Construction1-1}\\
839 & 21 & 39 & Theorem~\ref{PDM Construction1}   & 853 & 61 & 13 & Theorem~\ref{PDM Construction1}
& 857 & 33 & 25 & Theorem~\ref{PDM Construction1}\\
859 & 43 & 19 & Theorem~\ref{PDM Construction1}   & 863 & 27 & 31 & Theorem~\ref{PDM Construction1}
& 881 & 49 & 17 & Theorem~\ref{PDM Construction1}\\
883 & 13 & 67 & Theorem~\ref{PDM Construction1}   & 887 & 37 & 23 & Theorem~\ref{PDM Construction1}
& 907 &    & 19 & Corollary~\ref{PDMCon2result}\\
911 & 65 & 13 & Theorem~\ref{PDM Construction1-1} & 919 & 23 & 39 & Theorem~\ref{PDM Construction1}
& 929 & 29 & 31 & Theorem~\ref{PDM Construction1-1}\\
937 & 67 & 13 & Theorem~\ref{PDM Construction1}   & 941 & 47 & 19 & Theorem~\ref{PDM Construction1-1}
& 953 & 53 & 17 & Theorem~\ref{PDM Construction1}\\
967 & 23 & 41 & Theorem~\ref{PDM Construction1-1} & 983 & 41 & 23 & Theorem~\ref{PDM Construction1}
& 991 & 31 & 31 & Theorem~\ref{PDM Construction1}\\
\hline
\end{tabular}
\end{center}



  \begin{center} \footnotesize
  {\bf Table 2} Corresponding $m_1$s and $m_2$s in Lemma \ref{3p}
  \end{center}
\begin{center}\footnotesize
\begin{tabular}{|l|l|l|l|l|l|l|l|l|l|l|l|}
\hline
$m$ &$m_1$&$m_2$& Reference  &  $m$  &  $m_1$  &  $m_2$ &  Reference
&  $m$  &  $m_1$  &  $m_2$ &  Reference   \\
\hline
213 &    & 41 & Corollary~\ref{PDMCon2result}     & 237 & 17 & 13 & Theorem~\ref{PDM Construction1}
& 267 & 19 & 13 & Theorem~\ref{PDM Construction1-1}\\
309 & 7  & 43 & Theorem~\ref{PDM Construction1-1} & 321 & 23 & 13 & Theorem~\ref{PDM Construction1}
& 339 & 17 & 19 & Theorem~\ref{PDM Construction1}\\
381 & 19 & 19 & Theorem~\ref{PDM Construction1-1} & 393 & 7  & 55 & Theorem~\ref{PDM Construction1-1}
& 417 & 13 & 31 & Theorem~\ref{PDM Construction1-1}\\
453 &    & 89 & Corollary~\ref{PDMCon2result}     & 489 & 35 & 13 & Theorem~\ref{PDM Construction1}
& 501 & 25 & 19 & Theorem~\ref{PDM Construction1-1}\\
519 & 13 & 39 & Theorem~\ref{PDM Construction1}   & 537 &    & 17 & Corollary~\ref{PDMCon2result}
&543 & 17 & 31 & Theorem~\ref{PDM Construction1}   \\
573 & 41 & 13 & Theorem~\ref{PDM Construction1}   & 579 & 29 & 19 & Theorem~\ref{PDM Construction1}
&597 & 23 & 25 & Theorem~\ref{PDM Construction1}   \\
633 &    & 125& Corollary~\ref{PDMCon2result}       & 681 & 17 & 39 & Theorem~\ref{PDM Construction1-1}
&687 & 49 & 13 & Theorem~\ref{PDM Construction1-1} \\
699 & 35 & 19 & Theorem~\ref{PDM Construction1}   & 723 & 19 & 37 & Theorem~\ref{PDM Construction1-1}
&753 & 29 & 25 & Theorem~\ref{PDM Construction1}   \\
771 & 55 & 13 & Theorem~\ref{PDM Construction1-1} & 807 & 31 & 25 & Theorem~\ref{PDM Construction1-1}
&813 &   & 161& Corollary~\ref{PDMCon2result}      \\
843 &    & 167& Corollary~\ref{PDMCon2result}     & 849 & 25 & 33 & Theorem~\ref{PDM Construction1}
&921 & 23 & 39 & Theorem~\ref{PDM Construction1-1} \\
933 &    & 185& Corollary~\ref{PDMCon2result}     & 939 & 47 & 19 & Theorem~\ref{PDM Construction1}
&951 & 17 & 55 & Theorem~\ref{PDM Construction1}   \\
993 & 71 & 13 & Theorem~\ref{PDM Construction1}   &     &    &    &
&     &    &    &    \\
\hline
\end{tabular}
\end{center}

 \newpage

  \begin{center}\footnotesize
  {\bf Table 3} Corresponding $m_1$s and $m_2$s in Lemma \ref{9p}
  \end{center}
\begin{center}\footnotesize
\begin{tabular}{|l|l|l|l|l|l|l|l|l|l|l|l|}
\hline
$m$ &$m_1$&$m_2$& Reference  &  $m$  &  $m_1$  &  $m_2$ &  Reference
&  $m$  &  $m_1$  &  $m_2$ &  Reference   \\
\hline
261 & 13 & 19 & Theorem~\ref{PDM Construction1-1} & 279 &  7 & 39 & Theorem~\ref{PDM Construction1}
& 333 &    & 65 & Corollary~\ref{PDMCon2result}\\
477 &  7 & 67 & Theorem~\ref{PDM Construction1-1} & 549 &    & 17 & Corollary~\ref{PDMCon2result}
&603 & 43 & 13 & Theorem~\ref{PDM Construction1-1} \\
657 & 47 & 13 & Theorem~\ref{PDM Construction1}   & 711 &    & 23 & Corollary~\ref{PDMCon2result}
&747 & 17 & 43 & Theorem~\ref{PDM Construction1}   \\
801 & 25 & 31 & Theorem~\ref{PDM Construction1-1} & 873 & 23 & 37 & Theorem~\ref{PDM Construction1}
&909 & 13 & 69 & Theorem~\ref{PDM Construction1}   \\
927 & 29 & 31 & Theorem~\ref{PDM Construction1}   & 963 & 37 & 25 & Theorem~\ref{PDM Construction1-1}
&981 & 49 & 19 & Theorem~\ref{PDM Construction1-1}\\
\hline
\end{tabular}
\end{center}


  \begin{center} \footnotesize
  {\bf Table 4} Corresponding $m_1$s and $m_2$s in Lemma \ref{11p}
  \end{center}
\begin{center}\footnotesize
\begin{tabular}{|l|l|l|l|l|l|l|l|l|l|l|l|}
\hline
$m$ &$m_1$&$m_2$& Reference  &  $m$  &  $m_1$  &  $m_2$ &  Reference
&  $m$  &  $m_1$  &  $m_2$ &  Reference   \\
\hline
209 & 15 & 13 & Theorem~\ref{PDM Construction1}   & 253 &    & 49 & Corollary~\ref{PDMCon2result}
& 341 & 19 & 17 & Theorem~\ref{PDM Construction1}\\
407 & 17 & 23 & Theorem~\ref{PDM Construction1}   & 451 & 25 & 17 & Theorem~\ref{PDM Construction1-1}
& 473 &    & 93 & Corollary~\ref{PDMCon2result}\\
517 & 37 & 13 & Theorem~\ref{PDM Construction1}   & 583 &    & 115& Corollary~\ref{PDMCon2result}
& 649 & 13 & 49 & Theorem~\ref{PDM Construction1}\\
671 & 21 & 31 & Theorem~\ref{PDM Construction1}   & 737 & 41 & 17 & Theorem~\ref{PDM Construction1}
& 781 & 23 & 33 & Theorem~\ref{PDM Construction1}\\
803 &    & 159& Corollary~\ref{PDMCon2result}     & 869 & 15 & 57 & Theorem~\ref{PDM Construction1}
& 913 &    & 181& Corollary~\ref{PDMCon2result}\\
979 & 49 & 19 & Theorem~\ref{PDM Construction1} &     &    &    &
&     &    &    &     \\
\hline
\end{tabular}
\end{center}

\begin{lemma}\label{PDM01}
There exists a {\rm PDM}$(3,m)$ for each  $m\in S_5=\{229$, $241$, $277$, $291$, $373$, $387$, $397$,
$411$, $447$, $471$, $531$, $591$, $639$, $709$, $757$, $829$, $831$, $877$, $879$, $997\}$.
\end{lemma}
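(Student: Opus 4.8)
The plan is to dispose of $S_5$ by the two tools in Section~2 that were \emph{not} used in Lemmas~\ref{p}--\ref{11p} (those lemmas rely only on Theorems~\ref{PDM Construction1} and \ref{PDM Construction1-1} and Corollary~\ref{PDMCon2result}, which leave these twenty orders uncovered), and the natural first step is to separate $S_5$ according to residue modulo $12$. A direct check shows that ten members, namely $229,241,277,373,397,709,757,829,877,997$, are primes congruent to $1$ modulo $12$, while the other ten, namely $291,387,411,447,471,531,591,639,831,879$, are composite and congruent to $3$ modulo $12$; the two families are handled by different mechanisms.

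For the primes, write $m=12t+1$; in every case $t$ is an integer with $4\le t\le 1000$, so a $(12t+1,4,1)$-PDF exists by the result recalled in the Introduction. Such a PDF is exactly a regular $(t,4,1)$-PSDS, so Lemma~\ref{PSDS to PDM} with threshold $c=1$ produces an ASP$(3,12t+1)$ whose basis is $\{0\}\cup\{\pm1,\dots,\pm6t\}=I_{12t+1}$, and by Lemma~\ref{equiv} this is a PDM$(3,m)$. All ten primes are settled at once.

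For the composites, each $m$ has the shape $m=12(m_2+t)+15$ with $m_2=2r_2+1$, which is precisely the output of Lemma~\ref{PDMCon3}; it remains to choose $m_2$ and $t$ meeting its two hypotheses. For the SIPDM$(3,m_2,1)$ I would take $m_2$ from the orders supplied by Lemma~\ref{SPDM1} (for instance $m_2\in\{13,17,19,23,25,31,33,37,39,45\}$) and set $t=\frac{m-15}{12}-m_2$, verifying case by case that $t\ge\frac{m_2+1}{12}$; e.g. $m=291$ arises from $m_2=19$, $t=4$. For the companion $(2x+12t+1,4,1)$-PDP with $t$ blocks covering $[1,r_2]\cup[x+r_2+1,x+6t]$ (with $x=12r_2+13$) I would invoke Theorem~11 of \cite{Ma87} together with Lemma~4.2 of \cite{GMS2010}, the very source already used in Corollary~\ref{PDMCon2result}. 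For any composite value of $m$ for which no admissible pair $(m_2,t)$ is produced this way, I would instead exhibit an explicit PDM$(3,m)$ in the compact form used above for the SIPDM$(3,19,1)$ — each listed column $(a,b,c)^{T}$ standing for $(a,b,c)^{T},(b,c,a)^{T},(c,a,b)^{T}$, with one column of zeros adjoined when $3\nmid m$.

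The bookkeeping (that the two families exhaust $S_5$, that $t\in[4,1000]$ in the first family, and that the chosen $(m_2,t)$ satisfy the hypotheses of Lemma~\ref{PDMCon3} in the second) is routine arithmetic. The genuine difficulty is the second hypothesis of Lemma~\ref{PDMCon3} in the composite case: one must be sure that for each of the ten composite orders a usable $(2x+12t+1,4,1)$-PDP of the prescribed block count and leave structure is actually available, and otherwise that a correct direct construction of PDM$(3,m)$ can be written down and checked.
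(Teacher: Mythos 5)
Your proposal follows the paper's proof essentially verbatim: the ten orders $\equiv 1\pmod{12}$ are handled by taking $(12t+1,4,1)$-PDFs (i.e.\ $(t,4,1)$-PSDSs) and applying Lemma~\ref{PSDS to PDM} with Lemma~\ref{equiv}, and the ten orders $\equiv 3\pmod{12}$ by Lemma~\ref{PDMCon3} with the companion PDPs drawn from Theorem~11 of \cite{Ma87} and Lemma~4.2 of \cite{GMS2010}, exactly as in the paper. The only content the paper adds beyond your outline is the explicit list of pairs $(r_2,t)\in\{(9,4),(12,6),(13,6),(14,7),(15,7),(17,8),(19,9),(21,9),(27,13),(30,11)\}$ for which the required PDPs are asserted to exist, which completes the case-by-case bookkeeping you correctly identified as the remaining work.
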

\begin{proof}
There exists a $(12t+1,4,1)$-PDF for each $t\in S_P=\{19$, $20$, $23$, $31$, $33$, $59$,
$63$, $69$, $73$, $83\}$ from \cite{GMS2010}, then there exists a $(t,4,1)$-PSDS for
each $t\in S_P$. From Lemma~\ref{equiv} and Lemma~\ref{PSDS to PDM}, there exists a PDM$(3,m)$
for each $m\in\{229$, $241$, $277$, $373$, $397$, $709$, $757$, $829$, $877$, $997\}$.

From Theorem 11 in \cite{Ma87} and Lemma 4.2 in \cite{GMS2010}, there exists a $(2x+12t+1,4,1)$-PDP with $t$ blocks which covers $[1,r_{2}]\cup[x+r_{2}+1,x+6t]$ for each $(r_{2},t)\in\{(9,4)$, $(12,6)$,
$(13,6)$, $(14,7)$, $(15,7)$, $(17,8)$, $(19,9)$, $(21,9)$,
$(27,13)$, $(30,11)\}$, then from Lemma~\ref{PDMCon3}, there exists a PDM$(3,m)$ for each
$m\in\{291$, $387$, $411$, $447$, $471$, $531$, $591$, $639$, $831$, $879\}$.
\end{proof}

\begin{theorem}\label{PDM reuslt}
There exists a {\rm PDM}$(3,m)$ for each odd integer $5\leq m<1000$ except for $m=9,11$ and except
possibly for $m\in (E_1\cup E_2\cup E_3\cup \{243,297,319,363\})\setminus S_5=\{207$, $219$, $227$,
$243$, $249$, $251$, $257$, $263$, $297$, $303$, $317$, $319$, $327$, $331$, $347$, $363$,
$367$, $369$, $383$, $401$, $423$, $431$, $439$, $587$, $617$, $641$, $669$, $717$, $719$,
$789$, $947$, $971$, $977\}$.
\end{theorem}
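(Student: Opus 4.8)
The plan is to combine all of this section's results into one exhaustive case analysis on the odd integer $m$ with $5\leq m<1000$. For $m\leq 201$ there is nothing to do: a PDM$(3,m)$ exists for every odd $m\in[5,201]\setminus\{9,11\}$ by \cite{GMS2010,WC2010}, whereas PDM$(3,9)$ and PDM$(3,11)$ do not exist, accounting for the two genuine exceptions in the statement. So the whole question reduces to odd $m$ with $203\leq m\leq 999$. For such an $m$ that is \emph{composite}, I would first ask whether $m$ admits a factorization $m=ab$ with $5\leq a\leq b$ and $a,b\notin\{9,11\}$. If it does, then because $m<1000$ and $a\geq 5$ we automatically have $b=m/a\leq 199$ and $a\leq\sqrt{m}<32$, so both $a$ and $b$ lie in $[5,201]\setminus\{9,11\}$; hence PDM$(3,a)$ and PDM$(3,b)$ exist by the base case and Theorem~\ref{multiplePDM1} yields a PDM$(3,m)$. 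Inspecting prime factorizations, the odd composites in $[203,999]$ that \emph{fail} to admit such a factorization are exactly those of the form $3p$, $9p$ or $11p$ with $p$ prime, together with $243=3^5$, $297=3^3\cdot 11$ and $363=3\cdot 11^2$; in every remaining case one can peel off a prime factor $5$ or $7$, a prime factor $\geq 13$, or a factor such as $27$, $33$ or $81$ (this is how $729=27\cdot 27$ and $891=27\cdot 33$ get handled), producing an admissible pair $a,b$.

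It then remains to treat primes and the three exceptional families. If $m$ is prime then $m\in S_1$ and Lemma~\ref{p} applies; if $m=3p$ with $p$ prime and $203\leq m\leq 999$ then $m\in S_2$ and Lemma~\ref{3p} applies; if $m=9p$ similarly then $m\in S_3$ and Lemma~\ref{9p} applies; if $m=11p$ then $m\in S_4$ and Lemma~\ref{11p} applies. These four lemmas settle their families except for the residual sets $E_1$, $E_2$, $E_3$ and the single value $319=11\cdot 29$. Finally Lemma~\ref{PDM01} constructs a PDM$(3,m)$ for each $m\in S_5$, which removes those members of $E_1\cup E_2\cup E_3$ lying in $S_5$. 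Collecting the cases, a PDM$(3,m)$ has been produced for every odd $5\leq m<1000$ with $m\neq 9,11$, except possibly for $m\in(E_1\cup E_2\cup E_3\cup\{243,297,319,363\})\setminus S_5$; writing out this set (keeping in mind that $677$, though listed in $E_1$, is in fact resolved via Corollary~\ref{PDMCon2result} and so does not survive) yields exactly the $33$ values in the statement.

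The conceptually routine but technically delicate step — and the one I expect to be the real obstacle — is the finite verification underpinning the first paragraph: that the factorization dichotomy for composites is genuinely exhaustive, in particular that the only bad composites whose prime support lies in $\{3,11\}$ are $243$, $297$ and $363$ (while $729$ and $891$, which look dangerous, are safe), and that every composite with a prime factor in $\{5,7\}$ or $\geq 13$ does split appropriately inside $[5,201]\setminus\{9,11\}$. A secondary bookkeeping task is to intersect $E_1$, $E_2$, $E_3$ with $S_5$ correctly so that precisely the claimed list remains. No new constructions are needed: everything rests on Theorems~\ref{PDM Construction1}, \ref{PDM Construction1-1} and \ref{PDM Construction2}, Corollary~\ref{PDMCon2result} and Lemma~\ref{PDMCon3}, all already established, so the content is entirely in assembling and checking the tables.
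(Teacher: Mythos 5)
Your proposal is correct and follows essentially the same route as the paper: Theorem~\ref{multiplePDM1} applied to the known PDM$(3,m)$s for odd $m\in[5,201]\setminus\{9,11\}$ disposes of every composite $m\in[203,999]$ that splits into two admissible factors, the non-splittable values are exactly the primes, $3p$, $9p$, $11p$ together with $\{243,297,363\}$, and Lemmas~\ref{p}--\ref{PDM01} then cover those families. Your parenthetical about $677$ is a genuine catch: it is listed in $E_1$ yet settled in Table~1 via Corollary~\ref{PDMCon2result} (as $37\cdot 17+48$), so the formula $(E_1\cup E_2\cup E_3\cup\{243,297,319,363\})\setminus S_5$ literally yields $34$ values while the displayed list of $33$ is the correct one.
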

\begin{proof}  \label{PDM result}
From the known results of PDMs and Theorem~\ref{multiplePDM1}, one can obtain PDM$(3,m)$s for
$m=3^{a_{1}}5^{a_{2}}7^{a_{3}}11^{a_{4}}\prod\limits_{i=6}^{99}(2i+1)^{b_{i}}$ with $a_{1}=0$
or $a_{1}\geq 3$, $a_{2},a_{3}\geq 0$, $a_{4}=0$ or $a_{4}\geq 2$, $b_{i}\geq 0$,
and $a_{1}+a_{2}+a_{3}+a_{4}+\sum\limits_{i=6}^{99}b_{i}\geq 1$
for $6 \leq i\leq 99$.
Let $V$ be the set of all the values of $m$  produced by the above product  and belongs to $[201,999]$.
Let $S=\{2i+1| 100\leq i\leq 499\}\setminus V$, then $S=\bigcup\limits_{i=1}^{4}S_{i} \cup\{243,297,363\}$.
The result comes from  Lemmas~\ref{p}-\ref{PDM01}.
\end{proof}

\section{Constructions for Perfect Difference Families}
In this section, we will  focus on the construction   of $(g,\{3,4\},1)$-PDFs.
For convenience, we will use type $3^s 4^t$ to denote a $(g,\{3,4\},1)$-PDF with $s$ blocks
of size 3 and $t$ blocks of size 4, where $s\geq0,t\geq1$.
For a $(g,\{3,4\},1)$-PDF of type $3^s 4^t$, it is easy to see that $g=6s+12t+1$,
thus $g\equiv1\pmod6, g\geq13$. Let $\theta=\frac{t}{s+t}$ be the  ratio of block size 4,
we will construct a class of $(g,\{3,4\},1)$-PDFs with
$\theta\geq\frac{1}{14}$ for each $g\equiv 1\pmod6$, $g\geq 13$.

To construct $(g,\{3,4\},1)$-PDFs, perfect Langford sequences will be used. Perfect Langford sequence is introduced in \cite{Simpson}.
 A sequence $\{c,c+1,...,c+s-1\}$ is a perfect Langford sequences (PLS$(s,c)$ for short) starting
  with $c$ if the set $\{1,2,...,2s\}$ can be arranged in disjoint pairs $(e_{i},f_{i})$, where $i=1,...,s$ such that $\{f_{1}-e_{1}, f_{2}-e_{2},..., f_{s}-e_{s}\}=\{c,c+1,...,c+s-1\}$.
The existence of perfect Langford sequences had been completely solved.
\begin{theorem}\label{PLS}{\rm (\cite{Simpson})}
A {\rm PLS}$(s,c)$ exists if and only if~\\
{\rm (1)} $s\geq2c-1$;~\\
{\rm (2)} $s\equiv0,1\pmod4$ when $c$ is odd;\ \ \ \
    $s\equiv0,3\pmod4$ when $c$ is even.
\end{theorem}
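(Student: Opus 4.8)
The plan is to prove the two directions of the equivalence separately: the forward direction (necessity of conditions (1) and (2)) rests on two short arithmetic arguments, while the reverse direction (sufficiency) requires explicit constructions. Throughout I write each of the $s$ pairs as $(e_i,f_i)$ with $f_i>e_i$ and difference $d_i=f_i-e_i$, so that a {\rm PLS}$(s,c)$ is precisely a partition of $\{1,2,\ldots,2s\}$ into $s$ pairs whose multiset of differences $\{d_1,\ldots,d_s\}$ equals $\{c,c+1,\ldots,c+s-1\}$. I call $e_i$ the \emph{small} element and $f_i$ the \emph{large} element of the $i$-th pair, and I record the two identities $\sum_i (f_i+e_i)=\sum_{k=1}^{2s}k=s(2s+1)$ and $\sum_i (f_i-e_i)=\sum_{d=c}^{c+s-1}d=\frac{s(2c+s-1)}{2}$, which drive the necessity part.

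\textbf{Necessity.} For condition (2), I use that $f_i-e_i\equiv f_i+e_i\pmod 2$, so summing over all pairs gives $\frac{s(2c+s-1)}{2}\equiv s(2s+1)\pmod 2$, i.e.\ $cs+\frac{s(s-1)}{2}\equiv s\pmod 2$. Reducing in the four residue classes of $s\bmod 4$ shows the congruence holds unconditionally when $s\equiv 0$, forces $c$ odd when $s\equiv 1$, fails for every $c$ when $s\equiv 2$, and forces $c$ even when $s\equiv 3$; this is exactly condition (2). For condition (1), subtracting the two recorded identities gives $2\sum_i e_i=\frac{s(3s-2c+3)}{2}$, so $\sum_i e_i=\frac{s(3s-2c+3)}{4}$. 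Since the $s$ small elements are distinct positive integers, $\sum_i e_i\geq 1+2+\cdots+s=\frac{s(s+1)}{2}$, and comparing the two expressions yields $3s-2c+3\geq 2(s+1)$, that is, $s\geq 2c-1$, which is condition (1).

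\textbf{Sufficiency.} For the converse I would exhibit, for every $(s,c)$ satisfying (1) and (2), an explicit pairing of $\{1,\ldots,2s\}$ realizing the differences $\{c,\ldots,c+s-1\}$. The natural organization is by the residue of $s$ modulo $4$ together with the parity of $c$, so that only the admissible cases $s\equiv 0\pmod 4$ (any $c$), $s\equiv 1\pmod 4$ with $c$ odd, and $s\equiv 3\pmod 4$ with $c$ even survive. In each case one writes down parametrized families of pairs, typically a ``zig-zag'' block walking through consecutive differences together with a few short correcting blocks near the bottom and top of $\{1,\ldots,2s\}$, and then checks that the small and large elements partition $\{1,\ldots,2s\}$ and that the differences cover $\{c,\ldots,c+s-1\}$ exactly once; the bound $s\geq 2c-1$ is precisely what guarantees enough room for these blocks to fit without collision. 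The main obstacle is this sufficiency bookkeeping: producing closed-form pair formulas valid uniformly across all admissible $(s,c)$ and verifying exact coverage is delicate, and it is here that a reduction to a small number of base cases, together with a recursive step that extends a perfect Langford sequence by appending a block of four new consecutive differences, carries the real weight of the argument.
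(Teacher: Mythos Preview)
The paper does not prove this theorem; it is quoted from Simpson \cite{Simpson} as a known result and used as a black box in Section~3. There is therefore no proof in the paper to compare your proposal against.

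Assessing your proposal on its own merits: the necessity argument is correct and standard --- the parity identity $\sum_i(f_i-e_i)\equiv\sum_i(f_i+e_i)\pmod 2$ yields condition~(2), and the lower bound $\sum_i e_i\geq\binom{s+1}{2}$ combined with the exact formula $\sum_i e_i=\tfrac{s(3s-2c+3)}{4}$ yields condition~(1). The sufficiency direction, however, is only a sketch: you describe a plausible organization into residue classes and mention zig-zag blocks and a recursive extension by four differences, but you do not write down any of the pair formulas or verify coverage in even one case. This is the substantive part of Simpson's theorem, and the constructions there are genuinely case-intensive; your outline is consistent with how such arguments go, but as written it is not a proof. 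Within the present paper, though, no proof is expected --- citing \cite{Simpson} is all the authors do.
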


 For a given PLS$(s,c)$, let $A_{i}=\{0,e_{i}+z,f_{i}+z\}$ with  variable $z$, and $i=1,2,...,s$,
 then $\bigcup\limits_{i=1}^{s}\Delta(A_{i})=\{e_{i}+z,f_{i}+z,f_{i}-e_{i} | 1\leq i\leq s\}=[z+1,z+2s]\cup[c,c+s-1]$.
 Let $z=c+s-1$, all the blocks of $A_{i},1\leq i\leq s$, have size 3 and cover differences $[c,c+3s-1]$. For convenience,
 we will call $A_{i}$s the corresponding blocks from the PLS.

We also need the following construction of special perfect difference family which is called \emph{variable} $(12t+1,4,1)$-PDF.
\begin{definition} \label{vpdf-d}
For a $(12t+1,4,1)$-{\rm PDF}, $B_{i}=\{0,a_{i},b_{i},c_{i}\}$, $1\leq i\leq t$,
 let $B_{i}(x)=\{0,a_{i},x+b_{i},x+c_{i}\}$, $B(x)=\{B_{1}(x),B_{2}(x),...,B_{t}(x)\}$ with a variable $x$.
 If all the differences of $\Delta B(x)$ cover  $[1,2t]\cup [x+2t+1,x+6t]$, then it is called  a \emph{variable} $(12t+1,4,1)$-{\rm PDF}.
\end{definition}

\begin{lemma}\label{vpdf}{\rm (\cite{Sun2008})}~
There exists a variable $(12t+1,4,1)$-{\rm PDF} for each $6\leq t \leq17$.
\end{lemma}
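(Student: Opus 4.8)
The plan is first to rewrite Definition~\ref{vpdf-d} as a clean numerical condition on the base blocks, and then to settle the twelve values $t=6,7,\ldots,17$ by exhibiting explicit blocks for each. Write $B_i=\{0,a_i,b_i,c_i\}$ with $0<a_i<b_i<c_i$. For a variable $x$ the six directed differences of $B_i(x)=\{0,a_i,x+b_i,x+c_i\}$ are $a_i$ and $c_i-b_i$, which are free of $x$, together with $x+b_i$, $x+c_i$, $x+(b_i-a_i)$ and $x+(c_i-a_i)$. Hence $\Delta B(x)$ is exactly $[1,2t]\cup[x+2t+1,x+6t]$ if and only if $\{a_i\mid 1\le i\le t\}\cup\{c_i-b_i\mid 1\le i\le t\}=[1,2t]$ and $\{b_i,c_i,b_i-a_i,c_i-a_i\mid 1\le i\le t\}=[2t+1,6t]$. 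Since the $B_i$ are already assumed to form a $(12t+1,4,1)$-PDF, the $6t$ integers $a_i,b_i,c_i,b_i-a_i,c_i-a_i,c_i-b_i$ are exactly $1,2,\ldots,6t$, each occurring once; consequently the two displayed conditions together amount to the single requirement that $a_i\le 2t$ and $c_i-b_i\le 2t$ for every $i$. In words: a variable $(12t+1,4,1)$-PDF is precisely a $(12t+1,4,1)$-PDF in which, in every block $\{0,a_i,b_i,c_i\}$, the first gap $a_i$ and the last gap $c_i-b_i$ are both at most $2t$.

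With this reformulation the task is finite. For each $t\in\{6,\ldots,17\}$ one must produce $t$ quadruples $\{0,a_i,b_i,c_i\}$ whose $6t$ differences are pairwise distinct (hence exhaust $[1,6t]$) and satisfy $a_i\le 2t$ and $c_i-b_i\le 2t$. I would locate such systems by a Langford-type, two-stage search. First choose a partition of $[1,2t]$ into $t$ unordered pairs and assign one pair $\{a_i,c_i-b_i\}$ to each block. Then, writing $u_i=b_i-a_i$ and $v_i=c_i-a_i$, so that $c_i-b_i=v_i-u_i$ and the four remaining differences of $B_i$ become the ``staircase'' quadruple $\{u_i,v_i,u_i+a_i,v_i+a_i\}$, attempt to complete each block by choosing such a quadruple out of the pool $[2t+1,6t]$ with $v_i-u_i$ equal to the prescribed value, backtracking over the $t$ blocks so that the chosen quadruples partition $[2t+1,6t]$ exactly. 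This is a constrained exact-cover problem that is routinely solved by computer in the stated range, and once a solution is found the verification that $1,\ldots,6t$ each occur once and that $a_i,c_i-b_i\le 2t$ is mechanical.

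The principal obstacle is that no uniform parametric construction is known that yields the variable property for all $t$; for each individual $t$ it is a genuine search, which is precisely why the lemma is stated only for $6\le t\le 17$ (exactly the range needed in Sections~3 and~4). A secondary point is that one cannot simply quote an arbitrary known $(12t+1,4,1)$-PDF: the constraints $a_i\le 2t$ and $c_i-b_i\le 2t$ are restrictive, and the only symmetries available --- translating a block, which changes nothing, or negating it, which merely interchanges $a_i$ and $c_i-b_i$ --- do not enlarge the search space. Granting the twelve explicit tables, the lemma follows at once from the reformulation in the first paragraph.
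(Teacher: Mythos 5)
Your opening reformulation is correct and worth having: for blocks $\{0,a_i,b_i,c_i\}$ that already form a $(12t+1,4,1)$-PDF, the six differences of $B_i(x)$ split into the $x$-free pair $a_i,\,c_i-b_i$ and the $x$-shifted quadruple $b_i,\,c_i,\,b_i-a_i,\,c_i-a_i$, and since the $6t$ differences of the underlying PDF already exhaust $[1,6t]$, the variable property of Definition~\ref{vpdf-d} is exactly the condition that $a_i\le 2t$ and $c_i-b_i\le 2t$ in every block. That equivalence is right, and your observation that negating a block merely swaps $a_i$ with $c_i-b_i$ is also correct.

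However, the lemma is a pure existence statement for twelve specific parameters, and your write-up never establishes existence: you describe a two-stage backtracking search and then conclude ``granting the twelve explicit tables, the lemma follows.'' Those tables \emph{are} the lemma; without them (or a citation to where they appear) nothing has been proved, and there is no a priori guarantee that your exact-cover search succeeds for every $t\in[6,17]$ --- indeed your own remark that one cannot simply take an arbitrary known $(12t+1,4,1)$-PDF shows the constraint is nontrivial. The paper itself gives no internal proof either: it attributes the result to \cite{Sun2008}, where the required variable PDFs are exhibited explicitly. So the honest comparison is that you have supplied a correct reduction to a finite search plus a search strategy, while the accepted proof consists of the explicit witnesses; to close the gap you must either reproduce those twelve block systems (and verify the condition $a_i,\,c_i-b_i\le 2t$ for each) or cite the source that contains them.
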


Variable perfect difference families can be obtained from smaller one using perfect difference matrices.
The following  recursive construction is obtained by  using a PDM$(3,5)$.

\begin{lemma}\label{rc-vpdf}
If there exists a variable $(12t+1,4,1)$-{\rm PDF}, then there exists a variable $(12(5t+1)+1,4,1)$-{\rm PDF}.
\end{lemma}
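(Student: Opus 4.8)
The plan is to imitate the recursive constructions from Section~2 (Theorems~\ref{PDM Construction1} and~\ref{PDM Construction2}), but now carrying along the ``variable'' structure of the difference family. Let $\{B_i=\{0,a_i,b_i,c_i\}:1\le i\le t\}$ be a variable $(12t+1,4,1)$-PDF, so that the blocks $B_i(x)=\{0,a_i,x+b_i,x+c_i\}$ have $\Delta B(x)=[1,2t]\cup[x+2t+1,x+6t]$. Fix a PDM$(3,5)$, say with columns giving the three coordinate-vectors $(p_{1j},p_{2j},p_{3j})^{T}$, $1\le j\le 5$, with entries from $I_5=\{-2,-1,0,1,2\}$; as noted before Lemma~\ref{PDMCon3}, from a $(13,4,1)$-PDF-type seed one forms new blocks by placing the three nonzero coordinates of a $4$-subset at different ``levels.'' Here I would instead use the PDM$(3,5)$ directly: the idea is that each of the three nonzero entries $a_i,b_i,c_i$ (suitably shifted) of a block gets multiplied into the three rows of the PDM$(3,5)$, producing $5$ new blocks per old block, hence $5t$ blocks of size~$4$, and then one extra block of size~$4$ is appended to make the count $5t+1$ and to cover the residual differences, exactly as $12(5t+1)+1=60t+13$ demands.

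Concretely, first I would rescale: replace each $B_i(x)$ by the block whose nonzero entries are $10 a_i$, $10(x+b_i)$, $10(x+c_i)$ — i.e. spread by a factor large enough (the period of the target PDF, or a convenient divisor thereof) so that the PDM$(3,5)$ perturbations $\{-2,\dots,2\}$ fit into disjoint residue classes. For each $i$ and each column $j$ of the PDM$(3,5)$, define a new block $B_{ij}(x)=\{0,\;10a_i+p_{1j},\;10(x+b_i)+p_{2j},\;10(x+c_i)+p_{3j}\}$ (with the variable $x$ itself rescaled appropriately). The second step is the difference computation: $\Delta B_{ij}(x)$ splits into three ``long'' differences $10a_i+(\cdot)$, $10(x+b_i)+(\cdot)$, $10(x+c_i)+(\cdot)$ and three ``medium'' ones $10(x+b_i-a_i)+(\cdot)$, etc.; as $j$ ranges over the $5$ columns, the PDM property forces the small perturbations to run over all of $I_5$ in each of the three difference-positions $d_{tj}-d_{sj}$, so that summing over $j$ the new short differences fill complete length-$5$ windows around each old difference of $\Delta B(x)$, while the variable part $x$ ensures the two groups $[1,2t]$-type and $[x+2t+1,x+6t]$-type stay separated. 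Using that the original family was \emph{variable} (so the two difference-intervals are controllably placed) one checks the union of all $\Delta B_{ij}(x)$ over $1\le i\le t$, $1\le j\le 5$ equals $[1,10t]\cup[x+\text{something},x+\text{something}]$ up to one missing short interval of length~$13$, which is precisely what the thirteen differences of a single leftover $4$-block (a scaled $(13,4,1)$-PDF block) supply; and the variable shift of the whole configuration is inherited, giving a \emph{variable} $(12(5t+1)+1,4,1)$-PDF.

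The main obstacle I anticipate is bookkeeping of the difference-leave: one must choose the rescaling factor and the placement of the single extra $4$-block so that (a) every long/medium difference of the form $10\alpha+\varepsilon$ with $\alpha$ an old difference and $\varepsilon\in I_5$ is covered exactly once — this is where the PDM$(3,5)$ axiom ``$\{d_{tj}-d_{sj}\}=I_5$ for each pair of rows'' is used three times, once for each of the three row-pairs $(1,2),(1,3),(2,3)$ matching the three difference-positions in a $4$-subset — and (b) the variable parameter $x$ appears in the new PDF exactly in the pattern required by Definition~\ref{vpdf-d}, i.e.\ $[1,2(5t+1)]$ fixed and $[x+2(5t+1)+1,x+6(5t+1)]$ sliding, with no cross-contamination between the shifted and unshifted ranges. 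Verifying (a) and (b) simultaneously is the crux; once the indexing is set up cleanly it reduces to the same identity ``(seed PDF differences) $\times 10$ $+$ $I_5$ $=$ an interval, plus $13$ leftover'' that underlies Theorem~\ref{PDM Construction1}, and the variable version follows by tracking $x$ through that identity.
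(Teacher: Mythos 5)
Your overall strategy is the paper's: expand each block of the variable $(12t+1,4,1)$-PDF through a PDM$(3,5)$ to get $5t$ new $4$-blocks, then append one extra $4$-block to reach $5t+1$ and fill the residual differences. But your concrete choice of scaling factor breaks the construction. If you set $B_{ij}(x)=\{0,\,10a_i+p_{1j},\,10(x+b_i)+p_{2j},\,10(x+c_i)+p_{3j}\}$, then as $j$ runs over the five columns each difference position sweeps out $10\alpha+I_5=[10\alpha-2,10\alpha+2]$ for each old difference $\alpha$. These windows are pairwise disjoint but \emph{not} contiguous: between consecutive old differences $\alpha$ and $\alpha+1$ you miss the five values $10\alpha+3,\dots,10\alpha+7$, so the union is nowhere near an interval and no single extra block can repair it. The factor must be exactly $5$, the order of the PDM, so that $5\alpha+I_5=[5\alpha-2,5\alpha+2]$ and consecutive windows abut; with $B'_{ij}=\{0,5a_i+m_{1j},x+5b_i+m_{2j},x+5c_i+m_{3j}\}$ the $5t$ blocks cover $[3,10t+2]\cup[x+10t+3,x+30t+2]$ exactly once, which is what the paper uses.

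Two further points. First, the leftover after the expansion is $\{1,2\}\cup[x+30t+3,x+30t+6]$, i.e.\ six differences --- precisely one $4$-block's worth ($\binom{4}{2}=6$), supplied by $\{0,1,(x+30t)+4,(x+30t)+6\}$; your claim of ``thirteen differences'' of a ``missing short interval of length $13$'' is not correct and suggests the arithmetic was not actually carried through. Second, you explicitly defer the simultaneous verification of coverage and of the variable pattern $[1,2(5t+1)]\cup[x+2(5t+1)+1,x+6(5t+1)]$ as ``the crux''; that verification is the entire content of the lemma, and with the factor $10$ it would fail at the first step. The idea is the right one, but as written the proof does not go through.
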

\begin{proof}
For a given variable $(12t+1,4,1)$-PDF, $B_{i}=\{0,a_{i},b_{i},c_{i}\}$, $1\leq i\leq t$,
it covers differences $[1,2t]\cup[x+2t+1,x+6t]$ exactly once.
Let $(\alpha_1,\alpha_2,\alpha_3,\alpha_4,\alpha_5)$ be a PDM$(3,5)$,
where $\alpha_j=(m_{1j},m_{2j},m_{3j})^T$, $j=1,...,5$,
and $B'_{ij}=\{0,5a_{i}+m_{1j},x+5b_{i}+m_{2j},x+5c_{i}+m_{3j}\}$, $i=1,...,t$, $j=1,2,3,4,5$.
Then all the $B'_{ij}$s have $5t$ blocks and cover differences $[3,10t+2]\cup[x+10t+3,x+30t+2]$
exactly once.
Let $\mathcal{B'}=\{0,1,(x+30t)+4,(x+30t)+6\}\cup \{B'_{ij}|1\leq i\leq t, 1\leq j\leq 5\}$,
it is easy to find that $\mathcal{B'}$ is a variable $(12(5t+1)+1,4,1)$-PDF.
\end{proof}

For convenience, in the sequel,  a block of size $k$ will be called $k$-block, and a set of blocks
 of size $k$ will be called $k$-blocks.
A recursive construction on PDF$(g,\{3,4\},1)$s with $\theta\geq\frac{1}{14}$ is given below.

\begin{lemma}\label{recursion}
If there exists a $(g,\{3,4\},1)$-{\rm PDF} with type $3^{s}4^{t}$, $s\equiv0,1\pmod4$,
$\theta=\frac{t}{s+t}\geq \frac{1}{14}$, all the $4$-blocks form a variable $(12t+1,4,1)$-{\rm PDF},
 and all the $3$-blocks form a {\rm PLS}$(s,2t+1)$, then there exists a $(g_{1},\{3,4\},1)$-{\rm PDF}
  with type $3^{5s+y}4^{5t+1}$, the ratio of $4$-blocks  $\theta_{1}\geq \frac{1}{14}$,
  all the $4$-blocks form a variable $(60t+13,4,1)$-{\rm PDF} and all the
   $3$-blocks form a {\rm PLS}$(5s+y,10t+3)$, where $y=0,1,4,12,13$ if $s\equiv 0\pmod4$ and
$y=0,3,4,11,12$ if $s\equiv 1\pmod4$.

\end{lemma}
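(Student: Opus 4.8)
The plan is to mimic the multiplicative construction of Lemma~\ref{rc-vpdf} (which uses a PDM$(3,5)$) but now applied simultaneously to \emph{both} the $4$-blocks and the $3$-blocks, and then to absorb the residual differences by adjoining a controlled number of extra $3$-blocks via a perfect Langford sequence. Write the given PDF of type $3^s4^t$ as $\mathcal{B}=\mathcal{B}_3\cup\mathcal{B}_4$, where $\mathcal{B}_4=\{B_i=\{0,a_i,b_i,c_i\}:1\le i\le t\}$ is a variable $(12t+1,4,1)$-PDF covering $[1,2t]\cup[x+2t+1,x+6t]$, and $\mathcal{B}_3=\{A_j:1\le j\le s\}$ is the family of corresponding blocks of a PLS$(s,2t+1)$, so that $\bigcup_j\Delta A_j=[2t+1,2t+3s]$ after the usual shift. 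Then $g=6s+12t+1$ and the differences $[1,(g-1)/2]=[1,3s+6t]$ are covered exactly: $[1,2t]$ and $[x+2t+1,x+6t]$ by the $4$-blocks (with $x=2t+3s$, since we need the packing to be perfect, the shift $x$ of the variable PDF must equal $2t+3s$ so that $[x+2t+1,x+6t]=[6t+3s+1,\dots]$ does not overlap $[2t+1,2t+3s]$ and together they tile $[1,3s+6t]$; I will make this matching explicit first).

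Next I would apply the PDM$(3,5)$ blow-up. Let $(\alpha_1,\dots,\alpha_5)$ be a PDM$(3,5)$ with $\alpha_\ell=(m_{1\ell},m_{2\ell},m_{3\ell})^T$. For the $4$-blocks set $B'_{i\ell}=\{0,\,5a_i+m_{1\ell},\,X+5b_i+m_{2\ell},\,X+5c_i+m_{3\ell}\}$ for $1\le i\le t$, $1\le\ell\le5$, exactly as in Lemma~\ref{rc-vpdf}; these $5t$ blocks cover $[3,10t+2]\cup[X+10t+3,X+30t+2]$, and adding the single block $\{0,1,(X+30t)+4,(X+30t)+6\}$ produces a variable $(60t+13,4,1)$-PDF covering $[1,2(5t+1)]\cup[X+2(5t+1)+1,X+6(5t+1)]$, i.e.\ $t_1=5t+1$. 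Simultaneously, for the $3$-blocks $A_j=\{0,e_j+z,f_j+z\}$ (with $z=2t+3s$) I would form $A'_{j\ell}=\{0,\,5e_j+5z+m_{1\ell}'',\,5f_j+5z+m_{3\ell}''\}$ using suitable coordinates of the PDM so that the $5s$ new $3$-blocks cover a block of $5s$ consecutive ``inner'' differences and, via the difference-matrix property, also cover the corresponding band of ``outer'' differences once each; the arithmetic here is essentially the $k=3$ case of Theorem~\ref{multiplePDM1} applied with a hole. The point of using the full $3\times 5$ PDM rather than a bare $\times5$ is that the differences $5(f_j-e_j)+(m_{3\ell}-m_{1\ell})$ over $\ell$ sweep out all residues mod~$5$ in a window of length~$5$, so a PLS whose differences are $\{2t+1,\dots,2t+s\}$ blows up to one whose differences are $\{10t+3,\dots,10t+2+5s\}=\{10t+3,\dots,10t+5s+2\}$, matching the claimed PLS$(5s+y,10t+3)$ once we add $y$ more blocks.

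The remaining gap is exactly the differences lying between the image of the $4$-block band and the image of the $3$-block band, plus any slack created by the new ``seed'' $4$-block $\{0,1,(X+30t)+4,(X+30t)+6\}$; this gap has size $y\in\{0,1,4,12,13\}$ or $\{0,3,4,11,12\}$ depending on $s\bmod4$, and it is filled by adjoining $y$ further $3$-blocks. To do this legitimately I would invoke Theorem~\ref{PLS}: we need a PLS$(5s+y,10t+3)$, which exists iff $5s+y\ge 2(10t+3)-1=20t+5$ and $5s+y\equiv 0,1\pmod4$ (since $10t+3$ is odd). The congruence condition is precisely why the admissible values of $y$ are split by the parity class of $s$ mod~$4$: if $s\equiv0\pmod4$ then $5s\equiv0$, so $y\equiv0,1\pmod4$, and the listed values $0,1,4,12,13$ all satisfy this (the specific non-minimal choices $12,13$ are there to guarantee $\theta_1\ge\frac1{14}$ and to bridge the full gap in the worst case); if $s\equiv1\pmod4$ then $5s\equiv1$, so we need $y\equiv3,0\pmod4$, giving $0,3,4,11,12$. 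Finally I would check the length bound $5s+y\ge20t+5$ and the ratio inequality $\theta_1=\frac{5t+1}{5s+5t+1+y}\ge\frac1{14}$: the latter follows from $\theta\ge\frac1{14}$ because $t_1/(s_1+t_1)\ge t_1/(14t_1)$ reduces to $s_1+y\le 13(5t+1)$, i.e.\ $5s+y\le 65t+12$, which holds since $5s\le 13t\cdot 5=65t$ by hypothesis and $y\le 13\le 12$... here I would be slightly careful and instead verify directly $5s+y\le 13(5t+1)$ using $s\le 13t$ and $y\le13$.

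The main obstacle I expect is bookkeeping the exact boundaries of the four difference bands after the blow-up — the inner $4$-band $[3,10t+2]$, the inner $3$-band immediately above it, the outer $3$-band, and the outer $4$-band $[X+10t+3,X+30t+2]$ together with the contribution of the seed block $\{0,1,(X+30t)+4,(X+30t)+6\}$ — and pinning down the shift $X$ so that the $y$ extra Langford-derived $3$-blocks land in exactly the missing interval with no overlaps. Getting the five admissible values of $y$ to come out as stated, rather than an interval, is the delicate point: it forces one to track not a single gap but the way the seed $4$-block punches two isolated differences far out, leaving a gap that is a union of a short low interval and a short high interval whose total length is $y$, and then to realize that only those $y$ that are $\equiv 0,1\pmod 4$ after adding $5s$ can be realized by a single PLS. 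Everything else — the PDM$(3,5)$ blow-up, the PLS existence invocation, the ratio estimate — is routine once the interval arithmetic is set up correctly.
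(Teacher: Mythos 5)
Your overall architecture --- blow up the $4$-blocks with a PDM$(3,5)$ via Lemma~\ref{rc-vpdf}, then account for the $3$-blocks separately --- starts in the right place, but the way you treat the $3$-blocks is not what the paper does, and as proposed it cannot work. The paper does \emph{not} blow up the old $3$-blocks at all: it discards them and builds the entire new $3$-block family from scratch as a fresh {\rm PLS}$(5s+y,10t+3)$, invoking Theorem~\ref{PLS} directly. The only role of the hypothesis that the old $3$-blocks form a {\rm PLS}$(s,2t+1)$ is to force $s\geq 4t+1$, hence $5s+y\geq 20t+5=2(10t+3)-1$, which together with $5s+y\equiv 0,1\pmod 4$ is exactly Simpson's existence condition. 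In particular $y$ is a \emph{free parameter}: you choose any $y$ from the listed set and obtain a different $g_1$ for each choice; the five specific values are selected (rather than all $y\equiv 0,1\pmod 4$ minus $5s$) so that the interval-covering argument in Lemma~\ref{5timesPDF} goes through. Your reading of $y$ as a gap size forced by the geometry of the construction is a misunderstanding of the statement.

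The concrete failure in your version is the patching step. If you blow up the $3$-blocks with the PDM$(3,5)$, their differences cover $[10t+3,\,10t+5s+2]\cup[10t+5s+3,\,10t+15s+2]$, while the new $4$-blocks (with shift $X=15s+3y$) cover $[1,10t+2]\cup[15s+3y+10t+3,\,15s+3y+30t+6]$; the uncovered gap is the interval $[15s+10t+3,\,15s+3y+10t+2]$ of $3y$ \emph{consecutive large} integers. No family of $y$ triples can have all $3y$ of its differences inside such a window: a triple $\{0,u,v\}$ satisfies $v=u+(v-u)$, so its largest difference is the sum of the other two and cannot lie in an interval of length $3y-1$ whose left endpoint exceeds $3y$. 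Separately, even if the gap could be filled, your $5s$ blown-up blocks plus $y$ patches would not constitute the corresponding blocks of a single {\rm PLS}$(5s+y,10t+3)$ (the entry set of a {\rm PLS}$(n,c)$ must be exactly $\{1,\dots,2n\}$ after the shift), and that structural conclusion is precisely what Lemma~\ref{5timesPDF} needs in order to iterate the construction. The fix is to abandon the $3$-block blow-up entirely and follow the paper: take $\mathcal{B}_1$ to be the variable $(60t+13,4,1)$-PDF from Lemma~\ref{rc-vpdf}, take $\mathcal{B}_2$ to be the corresponding blocks of a brand-new {\rm PLS}$(5s+y,10t+3)$ shifted by $10t+2+5s+y$, set $x=15s+3y$, and check the two bands tile $[1,15s+3y+30t+6]$.
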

\begin{proof}
For a given $(g,\{3,4\},1)$-PDF, it is clear that $g=6s+12t+1.$
Since all $4$-blocks form a variable $(12t+1,4,1)$-PDF, they cover
differences $[1,2t]\cup[x+2t+1,x+6t]$ exactly once and each $4$-block has the form $\{0,a_{i},x+b_{i},x+c_{i}\}$, $i=1,...,t$.
By Lemma \ref{rc-vpdf}, there exists a
variable $(60t+13,4,1)$-PDF which cover $[1,10t+2]\cup[x+10t+3,x+30t+6]$. Let $\mathcal{B}_1$ be the set of all the blocks
of the variable $(60t+13,4,1)$-PDF, then $|\mathcal{B}_1|=5t+1$.

Let $Y_{0}=\{0,1,4,12,13\}$, $Y_{1}=\{0,3,4,11,12\}$, $s_{1}=5s+y$, $y\in Y_{0}$, if $s\equiv 0\pmod4$, and $y\in Y_{1}$, if $s\equiv 1\pmod4$.
Thus $s_1\equiv 0,1\pmod4$. Since a {\rm PLS}$(s,2t+1)$ exists, then $s\geq 4t+1$ from Theorem \ref{PLS}.
Since $s_{1}=5s+y\geq20t+5+y\geq20t+5$, then we get a PLS$(s_{1},10t+3)$, the set $\{1,2,...,2(5s+y)\}$
can be arranged in disjoint pairs $(d_{k},e_{k})$, such that $\{e_{1}-d_{1}, e_{2}-d_{2},..., e_{5s+y}-d_{5s+y}\}=\{10t+3,10t+4,...,10t+2+5s+y\}$,
Let $\mathcal{B}_2=\{0,d_{k}+10t+2+5s+y,e_{k}+10t+2+5s+y|1\leq k\leq 5s+y\}$, then $|\mathcal{B}_2|=5s+y$.
Let  $x=15s+3y,\ y\in Y_{i},\ i=0,1$,
 and $\mathcal{B}=\mathcal{B}_1\cup \mathcal{B}_2$.
Then $\mathcal{B}$ forms a $(g_{1},\{3,4\},1)$-PDF with with type $3^{5s+y}4^{5t+1}$.
The ratio $4$-blocks $\theta_{1}=\frac{5t+1}{5s+y+5t+1}$, it holds $\theta_{1}\geq \frac{1}{14}$.
This completes the proof.
\end{proof}

\begin{lemma}\label{5timesPDF}
If there exists a $(g,\{3,4\},1)$-{\rm PDF} with type $3^{s}4^{t}$,$s\equiv0,1\pmod4$,
the ratio of $4$-blocks $\theta\geq\frac{1}{14}$, all the $4$-blocks form a variable
$(12t+1,4,1)$-{\rm PDF},
  and all the $3$-blocks form a {\rm PLS}$(s,2t+1)$ for each $h=s+2t\in[2b,10b+9]$,
  then there exists a $(g_{1},\{3,4\},1)$-{\rm PDF} with type $3^{5s+y}4^{5t+1}$, the ratio of  $4$-blocks
   $\theta_{1}\geq\frac{1}{14}$, all the $4$-blocks form a variable $(12t_{1}+1,4,1)$-{\rm PDF},
    and all the $3$-blocks form a {\rm PLS}$(5s+y,10t+3)$ for each $h_{1}\in[10b+10,50b+51]$,
 where $s_{1}=5s+y, t_{1}=5t+1$, $h_{1}=s_{1}+2t_{1}$, $y$ is the same as in
Lemma \ref{recursion}.
\end{lemma}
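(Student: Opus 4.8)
The plan is to derive the required family of PDFs by applying Lemma~\ref{recursion} to each of the source PDFs supplied by the hypothesis and then tracking the parameter $h=s+2t$. If a source PDF has type $3^{s}4^{t}$ with $s+2t=h$ and the listed structural features, then for each admissible $y$, Lemma~\ref{recursion} returns a PDF of type $3^{5s+y}4^{5t+1}$ whose $4$-blocks form a variable $(60t+13,4,1)$-PDF (equivalently a variable $(12t_{1}+1,4,1)$-PDF, since $12t_{1}+1=12(5t+1)+1=60t+13$ with $t_{1}=5t+1$), whose $3$-blocks form a {\rm PLS}$(5s+y,10t+3)$, and which satisfies $\theta_{1}\geq\frac{1}{14}$ and $s_{1}=5s+y\equiv 0,1\pmod 4$. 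Hence every structural assertion in the conclusion of Lemma~\ref{5timesPDF} is already delivered by Lemma~\ref{recursion}; all that remains is to check that the output parameter $h_{1}=s_{1}+2t_{1}=(5s+y)+(10t+2)=5h+(y+2)$ ranges over all of $[10b+10,50b+51]$ as $h$ runs over $[2b,10b+9]$ and $y$ over its permitted values.

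First I would record that the admissible shifts are $y+2\in\{2,3,6,14,15\}$ when $s\equiv 0\pmod 4$ and $y+2\in\{2,5,6,13,14\}$ when $s\equiv 1\pmod 4$; in either case these five values are pairwise incongruent modulo $5$, so each residue class modulo $5$ is hit exactly once. Thus, given a target $h_{1}\in[10b+10,50b+51]$, one reads off $\rho=h_{1}\bmod 5$, picks the unique admissible shift $\sigma\equiv\rho\pmod 5$, and applies Lemma~\ref{recursion} to the source PDF at $h=(h_{1}-\sigma)/5$. A residue-by-residue computation then shows $h\in[2b,10b+9]$ for every such $h_{1}$: for instance, when $\rho=1$ one takes $\sigma=6$, obtaining $h\in[2b+1,10b+9]$, and the other four residues are handled in the same way.

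The main obstacle is the endpoint bookkeeping, where the choice of $\sigma$ interacts with the parity $s\bmod 4$ of the available source. At the bottom of the target interval, $h_{1}=10b+10\equiv 0\pmod 5$ can only be written as $5h+\sigma$ with $\sigma\in\{5,15\}$; taking $\sigma=15$ forces $h=2b-1\notin[2b,10b+9]$, so one must use $\sigma=5$, that is, a source PDF at $h=2b+1$ with $s\equiv 1\pmod 4$. Consequently the argument relies on the hypothesis providing, throughout $[2b,10b+9]$, source PDFs of the parity needed to realize each residue class (in particular one with $s\equiv 1\pmod 4$ at each odd $h$, which is consistent since then $s$ is odd). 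Granting this, the proof is completed by the finite arithmetic check above, together with the remarks that $g_{1}=6s_{1}+12t_{1}+1$ and that $\theta_{1}=\frac{5t+1}{5s+y+5t+1}\geq\frac{1}{14}$ follows from $s\leq 13t$ and $y\leq 13$, exactly as in Lemma~\ref{recursion}.
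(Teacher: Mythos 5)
Your proposal is correct and follows essentially the same route as the paper: apply Lemma~\ref{recursion} to each source PDF (which already delivers all the structural conclusions) and then verify that $h_{1}=5h+y+2$ sweeps out $[10b+10,50b+51]$ as $h$ runs over $[2b,10b+9]$. The only cosmetic difference is that the paper computes the image set forward as an explicit union $A_{0}\cup A_{1}$ split by the parity of $h$, whereas you invert the map residue class by residue class modulo $5$ (correctly noting that the parity of $h$, hence of $s$, dictates which shift set $Y_{0}+2$ or $Y_{1}+2$ is available at the residues $0$ and $3$); both amount to the same finite arithmetic check.
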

\begin{proof}
For each $g=6h+1, h=s+2t\in[2b,10b+9]$, there exists a $(g,\{3,4\},1)$-{\rm PDF} with type $3^{s}4^{t}$. Then, from Lemma \ref{recursion},
 there exists a $(g_{1},\{3,4\},1)$-{\rm PDF} with type $3^{5s+y}4^{5t+1}$, where $y$ is the same as
in Lemma \ref{recursion} according to the remainder of $s$ module $4$, and he ratio of $4$-blocks  $\theta_{1}\geq \frac{1}{14}$. It is clear that
$g_1=6(5s+y)+12(5t+1)+1=6h_1+1$, and $h_{1}=s_{1}+2t_{1}=5s+y+2(5t+1)=5s+y+10t+2=5h+y+2$.
Let $A=\{5h+y+2| 2b\leq h\leq10b+9\}$, then the result can be obtained if we can prove that $[10b+10,50b+51]\subseteq A$.

Let $Y_{0}=\{0,1,4,12,13\}$, $Y_{1}=\{0,3,4,11,12\}$, $A_{0}=\{5h+y+2| 2b\leq h\leq10b+8, h\ even,\  y\in Y_{0}\}$, and
$A_{1}=\{5h+y+2, 2b+1\leq h\leq10b+9, h\ odd,\  y\in Y_{1}\}$,
then $A=A_{0}\bigcup A_{1}$. Note that

 $A_{0}=\{5h+2,5h+3,5h+6,5h+14,5h+15 |2b\leq h\leq10b+8, h=2d\}
=\{10d+2,10d+3,10(d+1)+4,10(d+1)+5,10d+6 |b\leq d\leq5b+4\}$,

$A_{1}=\{5h+2,5h+5,5h+6,5h+13,5h+14 |2b+1\leq h\leq10b+9, h=2d+1\}
=\{10(d+1)+1,10d+7,10(d+1)+8,10(d+1)+9,10d+10 |b\leq d\leq5b+4\}$.
Then, we have

 $A=A_{0}\bigcup A_{1}=\{10(d+1)+1,10d+2,10d+3,10(d+1)+4,10(d+1)+5,10d+6,10d+7,
10(d+1)+8,10(d+1)+9,10d+10 |b\leq d\leq5b+4\}
=\{10b+2,10b+3,10b+6,10b+7\}\bigcup[10b+10,50b+51]\bigcup\{50b+54,50b+55,50b+58,50b+59\}$.\\
Thus, $[10b+10,50b+51]\subseteq A$. This completes  the proof.
\end{proof}


\begin{lemma}\label{middleVPDF}
There exists a $(g,\{3,4\},1)$-{\rm PDF}  with $g=6h+1,h=s+2t$, the ratio of
$4$-blocks  $\theta\geq\frac{1}{14}$, where all the $4$-blocks form a
variable $(12t+1,4,1)$-{\rm PDF},  and all the $3$-blocks form a
{\rm PLS}$(s,2t+1)$ for each $h\in[43,240]\cup\{37,40,41,242,243,246,247,250,251,254,255\}$.
\end{lemma}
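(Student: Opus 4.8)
The plan is to produce every required PDF directly, by the gluing recipe underlying the hypotheses of Lemma~\ref{recursion}: a variable $(12t+1,4,1)$-PDF supplies the $4$-blocks, and the size-$3$ blocks coming from a perfect Langford sequence supply the $3$-blocks. Fix a pair $(s,t)$ with $6\le t\le 17$, $\ 4t+1\le s\le 13t$, and $s\equiv 0,1\pmod4$. Lemma~\ref{vpdf} gives a variable $(12t+1,4,1)$-PDF whose $4$-blocks $\{0,a_i,x+b_i,x+c_i\}$ cover $[1,2t]\cup[x+2t+1,x+6t]$ for the free parameter $x$. Since $s\ge 4t+1=2(2t+1)-1$ and $c=2t+1$ is odd, Theorem~\ref{PLS} furnishes a PLS$(s,2t+1)$; as recalled before Definition~\ref{vpdf-d}, its corresponding size-$3$ blocks cover $[2t+1,2t+3s]$, each difference exactly once. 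Choosing $x=3s$ makes the three ranges $[1,2t]$, $[2t+1,2t+3s]$ and $[x+2t+1,x+6t]=[2t+3s+1,6t+3s]$ tile $[1,3(s+2t)]$; as every piece is covered exactly once, the union is a $(6h+1,\{3,4\},1)$-PDF of type $3^s4^t$ with $h=s+2t$, whose $4$-blocks form a variable $(12t+1,4,1)$-PDF and whose $3$-blocks form a PLS$(s,2t+1)$, as demanded. The inequality $s\le 13t$ is exactly what forces $\theta=\frac{t}{s+t}\ge\frac{1}{14}$, while $s\equiv 0,1\pmod4$ is exactly the residue condition needed by Theorem~\ref{PLS} and by the format of Lemma~\ref{recursion}.

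It then remains to establish the purely arithmetic claim: for each $h$ in $[43,240]\cup\{37,40,41,242,243,246,247,250,251,254,255\}$ an admissible pair $(s,t)$ exists. Put $s=h-2t$; admissibility becomes (i)~$6\le t\le 17$, (ii)~$6t+1\le h\le 15t$ (this is $4t+1\le s\le 13t$), and (iii)~$s\equiv 0,1\pmod4$, which a one-line congruence check rewrites as $t\equiv\lfloor h/2\rfloor\pmod2$. So I must show that the window $W(h)=\bigl[\max(6,\lceil h/15\rceil),\ \min(17,\lfloor (h-1)/6\rfloor)\bigr]$ contains an integer of the required parity. A routine case-split (on the size of $\lceil h/15\rceil$ and $\lfloor (h-1)/6\rfloor$) shows $W(h)$ contains two consecutive integers — hence both parities — for every $h$ with $43\le h\le 240$, so a suitable $t$ can always be chosen. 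At the two ends the window degenerates: $W(h)=\{6\}$ for $37\le h\le 42$ and $W(h)=\{17\}$ for $241\le h\le 255$, and one checks one value at a time that $6\equiv\lfloor h/2\rfloor\pmod2$ holds exactly for $h\in\{37,40,41\}$ while $17\equiv\lfloor h/2\rfloor\pmod2$ holds exactly for $h\in\{242,243,246,247,250,251,254,255\}$. This is precisely why the neighbouring values $38,39,42$ and $241,244,245,248,249,252,253$ must be left out, and why $255$ is the largest value reached. The whole verification can be recorded in a short table listing a valid triple $(h,s,t)$ for each $h$.

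I expect the only genuine work to be this endpoint bookkeeping, where the window for $t$ collapses to one or two values: there the parity condition $t\equiv\lfloor h/2\rfloor\pmod2$ must be checked case by case, and one must also confirm that the Langford bound $s\ge 4t+1$ is not violated — it is in fact tight in several of these cases (for instance $s=29,\ t=7$ at $h=43$, where $s=4t+1$, and $s=13t$ at $h=240,255$, where the opposite endpoint is tight). Everything in the interior range $44\le h\le 239$ is routine once the parity rule and the estimate that $W(h)$ has at least two elements are available; in particular no variable PDFs beyond those already produced by Lemma~\ref{vpdf} are needed here (so Lemma~\ref{rc-vpdf} is not invoked), since $t$ never leaves $[6,17]$.
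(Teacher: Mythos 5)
Your construction is exactly the paper's: take the variable $(12t+1,4,1)$-PDF from Lemma~\ref{vpdf} for $6\le t\le 17$, adjoin the $3$-blocks of a PLS$(s,2t+1)$ with $s\equiv 0,1\pmod 4$ and $4t+1\le s\le 13t$, and set $x=3s$ so the three difference ranges tile $[1,3h]$; the only cosmetic difference is that the paper verifies the resulting set of admissible $h$ by listing the sets $A_t=\{s+2t\}$ for each $t$ and taking their union, whereas you run the equivalent check in the other direction via the window $W(h)$ and the parity criterion $t\equiv\lfloor h/2\rfloor\pmod 2$. The proposal is correct and takes essentially the same approach.
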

\begin{proof}\label{middleVPDFpf}
For a given variable $(12t+1,4,1)$-PDF, all $4$-blocks cover
differences $[1,2t]\cup[x+2t+1,x+6t]$ exactly once.
Let $s=4e$ with $t+1 \leq e\leq \lfloor \frac{13t}{4}\rfloor$ or $s=4e+1$ with
$t\leq e\leq \lfloor \frac{13t-1}{4}\rfloor$, there exists a PLS$(s,2t+1)$ by
Theorem \ref{PLS}, then all $3$-blocks cover differences $[2t+1,2t+3s]$ exactly once.
Let $x=3s$, combine the above variable PDF and the corresponding blocks from the PLS, we get a $(6h+1,\{3,4\},1)$-PDF
with $h=s+2t$, and $\frac{t}{s+t}\geq\frac{1}{14}$.
For $6\leq t\leq17$, let  $A_{t}$ be the set of $\{s+2t\}$, then

$A_{6}=\{4e+12| 7\leq e\leq19\}\cup\{4e+13|6\leq e\leq 19\}$. ~\\
\ \indent $A_{7}=\{4e+14|8\leq e\leq 22\}\cup \{4e+15|7\leq e\leq 22\}$.~\\
\ \indent $A_{8}=\{4e+16|9\leq e\leq 26\}\cup \{4e+17|8\leq e\leq 25\}$.~\\
\ \indent $A_{9}=\{4e+18|10\leq e\leq 29\}\cup \{4e+19|9\leq e\leq 29\}$.~\\
\ \indent $A_{10}=\{4e+20|11\leq e\leq 32\}\cup \{4e+21|10\leq e\leq 32\}$.~\\
\ \indent $A_{11}=\{4e+22|12\leq e\leq 35\}\cup \{4e+23|11\leq e\leq 35\}$.~\\
\ \indent $A_{12}=\{4e+24|13\leq e\leq 39\}\cup \{4e+25|12\leq e\leq 38\}$.~\\
\ \indent $A_{13}=\{4e+26|14\leq e\leq 42\}\cup \{4e+27|13\leq e\leq 42\}$.~\\
\ \indent $A_{14}=\{4e+28|15\leq e\leq 45\}\cup \{4e+29|14\leq e\leq 45\}$.~\\
\ \indent $A_{15}=\{4e+30|16\leq e\leq 48\}\cup \{4e+31|15\leq e\leq 48\}$.~\\
\ \indent $A_{16}=\{4e+32|17\leq e\leq 52\}\cup \{4e+33|16\leq e\leq 51\}$.~\\
\ \indent $A_{17}=\{4e+34|18\leq e\leq 55\}\cup \{4e+35|17\leq e\leq 55\}$.~\\
\ \indent Let $A=\bigcup\limits_{i=6}^{17}A_{i}$, then it is easy to see that  $A=[43,240]\cup\{37,40,41,242,243,246,247,250,251,$ $254,255\}$.
The desired $(6h+1,\{3,4\},1)$-PDF is obtained from $A$.
\end{proof}

\begin{theorem}\label{all pdf new}
For each $h\in\{37,40,41,43\}$ or $h\geq 44$, let $g=6h+1$, $h=s+2t$, for some $s$, $t$,
there exists a $(g,\{3,4\},1)$-{\rm PDF} with type $3^s4^t$, the ratio of  $4$-blocks $\theta\geq\frac{1}{14}$,
all the $4$-blocks form a variable $(12t+1,4,1)$-{\rm PDF} and all the $3$-blocks form a {\rm PLS}$(s,2t+1)$.
\end{theorem}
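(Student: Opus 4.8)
The strategy is to prove the statement by an interval-covering argument that leans on two results already established: the ``middle range'' construction of Lemma~\ref{middleVPDF}, which provides the required PDFs (with all of the structural side conditions) for every $h$ in $[43,240]$, and the quintupling recursion of Lemma~\ref{5timesPDF}, which converts a fully solved range $[2b,10b+9]$ into the fully solved range $[10b+10,50b+51]$ while preserving the type $3^s4^t$ with $s\equiv0,1\pmod4$, the inequality $\theta\ge\frac1{14}$, the variable-PDF structure of the $4$-blocks, and the PLS structure of the $3$-blocks. Since $50b+51$ is about five times $10b+9$, iterating the recursion will eventually cover every $h\ge43$, and the three remaining small values $h\in\{37,40,41\}$ are themselves among the sporadic values handled directly by Lemma~\ref{middleVPDF}.

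First I would dispose of $h\in\{37,40,41\}$ by quoting Lemma~\ref{middleVPDF}. For the rest, I would show that the desired PDF exists for every integer $h\ge43$. Call a value $M$ \emph{good} if the conclusion (with all side conditions) holds for every $h$ with $43\le h\le M$; then $M=240$ is good by Lemma~\ref{middleVPDF}. Given a good $M\ge240$, put $b=\lfloor(M-9)/10\rfloor$, so that $10b\le M-9$ and $10b\ge M-18$. Because $M\ge240$ we have $b\ge22$, hence $2b\ge44\ge43$, so $[2b,10b+9]\subseteq[43,M]$ and the goodness of $M$ supplies the full hypothesis of Lemma~\ref{5timesPDF} on that interval. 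Lemma~\ref{5timesPDF} then furnishes the required PDF for every $h_1\in[10b+10,50b+51]$. Since $10b+10\le M+1$ the two solved ranges $[43,M]$ and $[10b+10,50b+51]$ leave no gap, and since $50b+51\ge5M-39>M$ we conclude that $50b+51$ is good. Starting from $M=240$ and iterating produces an unbounded increasing sequence of good values, so every $h\ge43$ is covered; combined with the first step this proves the theorem.

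The bookkeeping above is routine; the only point that needs a moment's care is that the base range furnished by Lemma~\ref{middleVPDF} must already be wide enough --- in the ratio sense --- for the recursion to engage, i.e.\ that $[43,240]$ contains an interval $[2b,10b+9]$, which it does ($b=23$ gives $[46,239]$). The substantive difficulties all sit in the lemmas being cited rather than in this deduction: the explicit variable $(12t+1,4,1)$-PDFs for $6\le t\le17$ of Lemma~\ref{vpdf}, their PDM$(3,5)$-based quintupling in Lemma~\ref{rc-vpdf}, and especially the coordinated use, in Lemma~\ref{recursion}, of a Langford sequence $\mathrm{PLS}(s,2t+1)$ to build the $3$-blocks and a variable PDF to build the $4$-blocks so that every difference in $[1,6h]$ is covered exactly once and the block-ratio stays at least $\frac1{14}$. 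Granting those, Theorem~\ref{all pdf new} follows purely by the interval arithmetic just described.
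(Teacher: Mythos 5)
Your proposal is correct and follows essentially the same route as the paper: the sporadic values $\{37,40,41,43\}$ and the base interval come from Lemma~\ref{middleVPDF}, and the range is then extended to all $h\ge 44$ by iterating Lemma~\ref{5timesPDF} (the paper takes $b=22$ to go from $[44,229]$ to $[44,1151]$ and says ``repeatedly''; you merely make the iteration explicit with the good-value induction, which is a welcome tightening rather than a different argument).
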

\begin{proof}
For each $h\in\{37,40,41,43\}$, the conclusion comes from  Lemma \ref{middleVPDF}.
For $b=22$, and each $h\in[2b,10b+9]=[44,229]$,  there exists a desired  $(6h+1,\{3,4\},1)$-{\rm PDF} from  Lemma \ref{middleVPDF},
then from Lemma \ref{5timesPDF},  one can get a $(6h+1,\{3,4\},1)$-{\rm PDF} for each $44\leq h\leq 1151$.
The conclusion can be obtained by repeatedly using Lemma \ref{5timesPDF}.
\end{proof}

\begin{lemma}\label{smallMPDF} 
For each $h\in[2,36]\cup\{38,39,42\}$, there exists
a $(g,\{3,4\},1)$-{\rm PDF} with $g=6h+1$, the ratio of $4$-blocks $\theta\geq\frac{1}{3}$.
\end{lemma}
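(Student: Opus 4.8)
The plan is to treat the finitely many values $h\in[2,36]\cup\{38,39,42\}$ individually, exhibiting in each case a $(6h+1,\{3,4\},1)$-PDF of some type $3^{s}4^{t}$ with $s+2t=h$ and $s\le 2t$. The inequality $s\le 2t$ is the only thing that has to be watched for the ratio, since it gives $\theta=\frac{t}{s+t}\ge\frac{t}{3t}=\frac13$ at once; so the actual content is just producing the difference families, and $\theta\ge\frac13$ (much more than the $\frac1{14}$ needed for the final complete result) comes essentially for free because for small $h$ one can afford a large proportion of $4$-blocks.

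First I would dispose of the values needing no $3$-blocks. For $h=2t$ with $t=1$ or $4\le t\le 21$, a $(12t+1,4,1)$-PDF — which exists for $t=1$ and all $t\ge 4$ by the results quoted in the Introduction — is already a $(6h+1,\{3,4\},1)$-PDF of type $3^{0}4^{t}$ with $\theta=1$. This settles every even $h$ in the list except $h=4$ and $h=6$ (for which no $(12t+1,4,1)$-PDF exists).

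For the remaining values — the odd $h$ in $[3,35]$, the value $h=39$, and $h=4,6$ — I would build genuinely mixed families. Fix a type $3^{s}4^{t}$ with $s+2t=h$ and $s\le 2t$ (e.g.\ $t=\lceil h/3\rceil$, $s=h-2t$). Many of these are within reach of the concatenation principle already used in this section: adjoin to a regular $(t,4,3s+1)$-PSDS, whose differences fill the top interval $[3s+1,3s+6t]$, the $s$ three-blocks associated (as in the construction of blocks from a perfect Langford sequence described after Theorem~\ref{PLS}) to a $\mathrm{PLS}(s,1)$, whose differences fill $[1,3s]$; the two ranges being complementary in $[1,(g-1)/2]$ with $g=6s+12t+1$, reduction modulo $g$ yields the required PDF. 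This is valid whenever $s\equiv 0,1\pmod 4$ and the small PSDS exists. For the bounded set of parameters not covered this way — notably $h=4,6$, where block size $4$ alone cannot cover the difference set, and the smallest odd $h$, where the needed $(t,4,3s+1)$-PSDS with tiny $t$ fails to exist — I would instead record explicit base blocks, collected in a short table, and check directly that their integer differences partition $[1,(g-1)/2]$, again keeping $s\le 2t$.

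The main obstacle is precisely this bounded collection of genuinely mixed small configurations: one has to exhibit the explicit difference families (most efficiently by a short computer search) and confirm that for every $h$ in the list some admissible type $3^{s}4^{t}$ with $s\le 2t$ actually supports one. The constraint is tightest at the bottom of the range and at $h=4,6$, where only one or two types are admissible and there is essentially no slack. Once these finitely many families are in hand, together with the even-$h$ cases from the $(12t+1,4,1)$-PDFs, the lemma follows, and combined with Theorem~\ref{all pdf new} it completes the list of $g$ admitting a $(g,\{3,4\},1)$-PDF with the stated ratio.
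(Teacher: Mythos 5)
Your proposal follows essentially the same route as the paper: the even values $h\neq 4,6$ are settled by the known $(12t+1,4,1)$-PDFs (which exist for $t=1$ and $t\ge 4$ but not $t=2,3$), and the remaining twenty cases ($h=4,6$, the odd $h\in[3,35]$, and $h=39$) are handled by exhibiting explicit base blocks, which is exactly what the paper does in the proof body (for $h=3,4,5$) and its Appendix. The only reservation is that you defer those explicit families to a computer search rather than producing them --- for an existence lemma of this kind the block lists are the substantive content --- and the ratio requirement is indeed tight at $h=4$, where the paper's family $\{0,5,11\},\{0,4,12\},\{0,1,3,10\}$ has $s=2t$ and $\theta=\frac{1}{3}$ exactly, confirming your observation that there is no slack at the bottom of the range.
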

\begin{proof} \label{smallMPDFpf}
From \cite{AB07}, it is known that there exist $(6h+1,4,1)$-PDF for
$h\equiv0\pmod{2}, h\in[2,42]\setminus \{4,6\}$. Then we only need to construct the remaining cases.
We only list the blocks of the desired $(g,\{3,4\},1)$-{\rm PDF}s for $h\in \{3,4,5\}$ below,
for other values of $h$, the desired blocks are listed in Appendix.
~\\
$h=3~\\
\ \{0,1,8\},\{0,3,5,9\}$.
~\\
$h=4,~\\
\{0,5,11\},\{0,4,12\},\{0,1,3,10\}$.
~\\
$h=5~\\
\{0,4,15\},\{0,1,6,14\},\{0,2,9,12\}$.
\end{proof}

\begin{theorem}\label{all M-PDF}
There exists a $(6h+1,\{3,4\},1)$-{\rm PDF} for each $h\geq 2$ such that the ratio of the $4$-blocks $\theta\geq\frac{1}{14}$.
\end{theorem}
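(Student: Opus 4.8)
The plan is to assemble the statement directly from Theorem~\ref{all pdf new} and Lemma~\ref{smallMPDF}, which between them already cover every admissible value of the parameter $h$. First I would recall that a $(6h+1,\{3,4\},1)$-PDF of type $3^s4^t$ has $h=s+2t$ and $\theta=\frac{t}{s+t}$, so it suffices to produce, for each integer $h\geq 2$, a $(6h+1,\{3,4\},1)$-PDF with $\theta\geq\frac{1}{14}$. I would then split the range $h\geq 2$ into the two pieces dictated by the cited results:
\[
\{h:h\geq 2\}=\bigl([2,36]\cup\{38,39,42\}\bigr)\;\cup\;\bigl(\{37,40,41,43\}\cup\{h:h\geq 44\}\bigr),
\]
and the only bookkeeping point is that this really is an equality. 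This is immediate, since the seven values $37,38,39,40,41,42,43$ are distributed as $\{37,40,41,43\}$ (handled by Theorem~\ref{all pdf new}) together with $\{38,39,42\}$ (handled by Lemma~\ref{smallMPDF}), while $[2,36]$ is covered by Lemma~\ref{smallMPDF} and $\{h:h\geq 44\}$ by Theorem~\ref{all pdf new}.

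For $h\in[2,36]\cup\{38,39,42\}$, Lemma~\ref{smallMPDF} supplies a $(6h+1,\{3,4\},1)$-PDF with $\theta\geq\frac13$, and since $\frac13\geq\frac{1}{14}$ the required bound on the ratio of $4$-blocks holds. For $h\in\{37,40,41,43\}$ or $h\geq 44$, Theorem~\ref{all pdf new} supplies a $(6h+1,\{3,4\},1)$-PDF of type $3^s4^t$ with $\theta\geq\frac{1}{14}$ (moreover with the extra structure that the $4$-blocks form a variable $(12t+1,4,1)$-PDF and the $3$-blocks form a {\rm PLS}$(s,2t+1)$, which we do not need here). Combining the two cases yields a $(6h+1,\{3,4\},1)$-PDF with $\theta\geq\frac{1}{14}$ for every $h\geq 2$, which is exactly the assertion.

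I do not expect any genuine obstacle in this final step: the hard analytic work (the recursive doubling via Lemma~\ref{5timesPDF}, the base cases from Lemma~\ref{middleVPDF} and the variable-PDF machinery, and the small direct constructions tabulated in the Appendix for Lemma~\ref{smallMPDF}) has already been carried out in the preceding lemmas. The single thing that must be verified carefully is that the case lists of Theorem~\ref{all pdf new} and Lemma~\ref{smallMPDF} are complementary on $\{h\geq 2\}$, in particular that none of $37,\dots,43$ slips through the cracks; as noted above this check is routine.
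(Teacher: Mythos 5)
Your proposal is correct and takes exactly the same route as the paper, whose entire proof is the one-line observation that the result follows from Theorem~\ref{all pdf new} and Lemma~\ref{smallMPDF}; you have merely spelled out the (routine) check that the two case lists partition $\{h : h\geq 2\}$.
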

\begin{proof}
The result comes from Theorem \ref{all pdf new} and Lemma \ref{smallMPDF}.
\end{proof}

\section{Application to Perfect  NCW-SOOCs}

Strict optical orthogonal codes (SOOCs for short) were introduced by Zhang  \cite{Zhang99,Zhang99-2}
 for fiber-optic code-division multiple-access (FO-CDMA) networks. Such codes can strictly guarantee both auto-correlation and cross-correlation functions constrained to
have the value one in fully asynchronous data communications and ultra fast switching. The interested reader is referred to \cite{Chu2003,Zhang99,Zhang99-2,Zhang2006}
for the details. Most existing works on SOOC'S have assumed that all code-words have the same weight.
In general, the code size of SOOCs depends on the weights of codewords,
and nonconstant weight SOOCs (NCW-SOOCs for short) can also flexibly support multimedia applications in an OCDM system to meet varieties of transmission performance and traffic
 requirements  \cite{Zhang2006}.

In this section, new perfect NCW-SOOCs will be constructed by using PDFs.

Let  $C_{i}=(u_{1}^{i},u_{2}^{i},...,u_{n}^{i})$
 be a  $(0,1)$ sequence of weight $w_{i}$. For convenience, we use set notation for $C_{i}$, i.e.,
 $C_{i}=(c_{1}^{i},c_{2}^{i},...,c_{w_{i}}^{i})$,
 where $c_{l}^{i}$ denotes the position of the $l$th $``1"$ within $C_{i}$, with $1\leq l \leq w_{i}$.
The following notion is defined in \cite{Zhang99-2,Zhang2006}. For  a  $(0,1)$ sequence
 $C_{i}=(c_{1}^{i},c_{2}^{i},...,c_{w_{i}}^{i})$, $1\leq j<k\leq w_i$, define
$d_{jk}^{i}=v_{k}^{i}-v_{j}^{i}-1$.

\begin{definition} \label{max distance}{\rm (\cite{Zhang99-2})}
Let $\mathcal{C}=\{C_{i}|1\leq i\leq |\mathcal{C}|\}$ be a family of $(0,1)$ code with length $n$
and weight set $W=\{|C_i||1\leq i\leq |C|\}$. The \emph{maximum decoding slot distance} $D$ of code $\mathcal{C}$ is defined as ~

$D={\rm max}\{D^{i}| 1\leq i\leq |\mathcal{C}|\}$, where
 $\ D^{i}={\rm max}\{d_{jk}^{i}|1\leq j\leq k-1$, $2\leq k\leq |C_{i}|\}$.
\end{definition}

 To meet the  multiple quality of services (QoS) requirements, Yang introduced multimedia optical CDMA communication system employing
variable-weight OOCs (VW-OOCs) in \cite{Yang}. The term of variable-weight in \cite{Yang} is the same as nonconstant weight in  \cite{Zhang2006}.
The weight distribution sequence is not included in the definition of  nonconstant weight strict optical orthogonal codes (NCW-SOOCs) in \cite{Zhang2006}.
The following definition of NCW-SOOC is based on VW-OOCs in \cite{Yang} and NCW-SOOCs in \cite{Zhang2006}.
Let  $W=\{w_1,...,w_r\}$ be an ordering of a set of $r$ distinct
integers greater than 1, without loss of generality, we may assume $w_1<w_2<\ldots<w_r$.
Let $Q=(q_1,...,q_r)$ be an $r$-tuple ({\it weight distribution sequence})
of positive rational numbers whose sum is 1.

\begin{definition} \label{ncw-sooc-d}
Given a $(0,1)$ code $\mathcal{C}=\{C_{i}|1\leq i\leq |\mathcal{C}|\}$ of length $n$ and weight set $W$,
 $\mathcal{C}$ is an $(n,W,1,Q)$-{\rm SOOC} if it satisfies~\\
{\rm (1)} {\it weight distribution property}: the ratio of codewords of $\mathcal{C}$ with weight
$w_i$ is $q_i$;\\
{\rm (2)}  $d_{jk}^{i}\neq d_{lm}^{i}$ for  $(j,k)\neq(l,m)$;~\\
{\rm (3)}  $d_{jk}^{i}\neq d_{lm}^{i'}$ for  $1\leq i\neq i' \leq |\mathcal{C}|$;~\\
{\rm (4)}  $n\geq2D+3$.
\end{definition}

\begin{lemma}\label{2D+3}{\rm (\cite{Zhang2006})}
An $(n,W,1,Q)$-{\rm OOC} is an {\rm NCW-SOOC} if and only if $n\geq2D+3$ where $D$ is its maximum decoding slot distance.
\end{lemma}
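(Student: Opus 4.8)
The plan is to reduce the statement to an unpacking of the two definitions together with one elementary range estimate. Recall that an $(n,W,1,Q)$-OOC is, by definition, a family of $(0,1)$ sequences of length $n$ realising the weight distribution $Q$ over the weight set $W$ whose periodic auto- and cross-correlation functions take value $\le 1$ at every non-trivial shift. Writing each codeword $C_i$ as the set of positions $c_1^i<c_2^i<\ldots<c_{w_i}^i$ of its $1$'s, the periodic autocorrelation bound says that for every shift $\tau\neq 0$ there is at most one ordered pair $(c_a^i,c_b^i)$ with $c_a^i-c_b^i\equiv\tau\pmod n$; in particular the ``forward'' slot distances $d_{jk}^i=c_k^i-c_j^i-1$ (for $j<k$), being genuine integers in $\{0,\ldots,n-2\}$, are pairwise distinct, which is exactly condition~(2) of Definition~\ref{ncw-sooc-d}. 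The periodic cross-correlation bound likewise forbids two distinct pulse pairs $(c_a^i,c_b^{i'})$, $(c_{a'}^i,c_{b'}^{i'})$ with the same periodic shift, and such a coincidence forces $c_a^i-c_{a'}^i=c_b^{i'}-c_{b'}^{i'}$ (after matching the pulses appropriately), i.e. a repeated slot distance across the two codewords; hence condition~(3) also holds. Condition~(1), the weight-distribution property, is built into the notion of an $(n,W,1,Q)$-OOC. Consequently the only one of the four clauses of Definition~\ref{ncw-sooc-d} that an arbitrary $(n,W,1,Q)$-OOC can possibly fail is clause~(4), which reads verbatim $n\ge 2D+3$.

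Given this, both directions are immediate. If $\mathcal C$ is an NCW-SOOC then $n\ge 2D+3$ by clause~(4). Conversely, if $n\ge 2D+3$ then clauses (1)--(3) hold because $\mathcal C$ is an $(n,W,1,Q)$-OOC and clause~(4) is the hypothesis, so $\mathcal C$ satisfies Definition~\ref{ncw-sooc-d} and is an NCW-SOOC. The only substantive point, and the only place the precise value $2D+3$ matters, is checking that this combinatorial description is faithful to the correlation-theoretic definition of \cite{Zhang99-2,Zhang2006}: since $d^i_{1,w_i}\le D$, a codeword $C_i$ normalised so that $c_1^i=1$ occupies the slot window $\{1,\ldots,D+2\}$, so all its forward differences lie in $\{1,\ldots,D+1\}$ and all its backward differences lie in $\{n-D-1,\ldots,n-1\}$. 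These two ranges are disjoint precisely when $n-D-1>D+1$, i.e. when $n\ge 2D+3$; under this hypothesis a codeword (or two distinct codewords) and a shifted copy overlap in the aperiodic model in exactly the positions predicted by the periodic model, so the strict correlations coincide with the periodic ones and are $\le 1$, whereas if $n\le 2D+2$ a forward and a backward coincidence at a common shift can occur and the strict autocorrelation may exceed $1$.

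I expect the main obstacle to be precisely this last step — aligning the slot-distance conditions (2)--(4) of Definition~\ref{ncw-sooc-d} with the correlation-function definition of a strict optical orthogonal code used in \cite{Zhang2006}, in particular keeping track of the reference positions of the codewords inside the frame so that the cross-correlation translation is exact, and verifying the disjointness of the two difference ranges. Everything else is a direct transcription of definitions: the weight-distribution clause is inherited verbatim, and the ``only if'' direction is nothing more than reading off clause~(4).
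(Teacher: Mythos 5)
The paper does not actually prove this lemma --- it is imported verbatim from \cite{Zhang2006} and stated without argument --- so there is no in-paper proof to compare against. Judged on its own terms, your argument is correct and is exactly the natural one given how this paper sets things up: since Definition~\ref{ncw-sooc-d} is taken here as the \emph{definition} of an NCW-SOOC, the whole content of the equivalence is your observation that clauses (1)--(3) are automatic for any $(n,W,1,Q)$-OOC (modular distinctness of $\Delta\mathcal{C}\cup(-\Delta\mathcal{C})$ forces distinctness of the integer slot distances $d^i_{jk}$, and the weight distribution is part of the hypothesis), so that membership in the class reduces to clause (4), which is literally $n\ge 2D+3$. Your closing paragraph --- the range computation showing the forward differences $\{1,\dots,D+1\}$ and backward differences $\{n-D-1,\dots,n-1\}$ are disjoint precisely when $n\ge 2D+3$, so that strict (aperiodic) correlations agree with periodic ones --- is not needed for the lemma as stated in this paper, but it correctly reconstructs the substance of the result in the cited source, namely why the purely combinatorial Definition~\ref{ncw-sooc-d} is faithful to the correlation-function notion of strictness. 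No gaps; if anything you have supplied more than the paper does.
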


Let $N_{m}$ be the minimum code length of an  $(n,W,1,Q)$-NCW-SOOC for given values of $M$ and $W$. Then
the NCW-SOOC of size $M$ is called a \emph{perfect}  if $n=N_{m}$.

In \cite{Chu2003}, an equivalence between strict optical orthogonal codes and difference triangle sets
is established. To construct NCW-SOOCs, difference triangle sets with variable sizes will be needed.
The following definition of difference triangle sets with variable sizes is a natural generalization of
difference triangle sets in \cite{Chu2003}.

\begin{definition} \label{G-DTSs}
Let $I$, $J_i, \ 1\leq i\leq I$ be positive integers.
An $(I,\{J_{1},J_{2},...,J_{I}\})$-{\rm DTS} is a set
$\T=\{T_{1},T_{2},...,T_{I}\}$,
where
$T_{i}=\{a_{il}|0\leq l\leq J_{i}\}$, for $1\leq i\leq I$
are sets of integers such that all the differences $a_{il}-a_{il'}$, with $1\leq i\leq I$ and $0\leq l\neq l'\leq J_{i}$ are distinct.
\end{definition}

Ordering the elements of $\Delta_{i}$ and subtracting the smallest from each of them, one can get a DTS in normalized form
$0=a_{i0}<a_{i1}<...<a_{i,J_{i}}$ for all $i$. Let $\T=\{T_{1},T_{2},...,T_{I}\}$ be an
 $(I,\{J_{1},J_{2},...,J_{I}\})$ difference triangle set in its normalized form. Define\\
  $m(\T)={\rm max}\{a_{i,J_{i}}|1\leq i\leq I\}$, \  $M(I,\{J_{1},J_{2},...,J_{I}\})={\rm min}\{m(\T)|\T \ is\ an\  (I,(\{J_{1},J_{2},...,J_{I}\})$-${\rm DTS}\}$.
If $m(\T)=M(I,\{J_{1},J_{2},...,J_{I}\})$, then $\T$ is called \emph{optimal}.

Similar to Theorem 5 and Corollary 1 in \cite{Chu2003}, the following result is obtained.

\begin{theorem}\label{equivalence}
Let $\T$ be an $(I,\{J_{1},J_{2},...,J_{I}\})$-{\rm DTS} with $m(\T)$ defined as above.
An optimal $\T$ is equivalent to a perfect $(2m(\T)+1,\{J_{1}+1,J_{2}+1,...,J_{I}+1\},1)$-{\rm NCW-SOOC}.
\end{theorem}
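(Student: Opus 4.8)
The plan is to establish the equivalence by unwinding the definitions on both sides and checking that the two combinatorial objects carry exactly the same data. Recall that a DTS $\T=\{T_1,\dots,T_I\}$ in normalized form has blocks $T_i=\{0=a_{i0}<a_{i1}<\dots<a_{i,J_i}\}$ with all internal differences distinct, and $m(\T)=\max_i a_{i,J_i}$. I would interpret each $T_i$ as the support of a $(0,1)$ codeword $C_i$ of weight $J_i+1$: the positions of the ones are $c_l^i=a_{i,l-1}+1$ for $1\le l\le J_i+1$. Then the quantity $d_{jk}^i=c_k^i-c_j^i-1$ associated to the SOOC is exactly $a_{i,k-1}-a_{i,j-1}-1$, so the decoding slot distances are in bijection with the internal differences of $T_i$ shifted by $-1$.

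First I would verify conditions (2) and (3) of Definition \ref{ncw-sooc-d}. Condition (2), $d_{jk}^i\neq d_{lm}^i$ for $(j,k)\neq(l,m)$, is precisely the statement that the differences $a_{i,k-1}-a_{i,j-1}$ within a single block $T_i$ are distinct; condition (3), $d_{jk}^i\neq d_{lm}^{i'}$ for $i\neq i'$, is the statement that differences coming from different blocks are distinct. Together these two are exactly the defining property of an $(I,\{J_1,\dots,J_I\})$-DTS, that all the $a_{il}-a_{il'}$ (over all $i$ and all $l\neq l'$) are distinct — note that for a normalized DTS the positive differences determine the negative ones, so working with $1\le j<k$ as in the SOOC definition loses nothing. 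The weight set is $W=\{J_1+1,\dots,J_I+1\}$ by construction, so condition (1) (the weight distribution property, with $Q$ recording the proportions of each weight) holds automatically, and conversely any SOOC with that weight set yields blocks of the prescribed sizes.

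Next I would pin down the length. The maximum decoding slot distance is $D=\max_i D^i=\max_i\max_{j<k} d_{jk}^i = \max_i (a_{i,J_i}-a_{i0}-1)=m(\T)-1$, since within each normalized block the largest internal difference is $a_{i,J_i}$. By Lemma \ref{2D+3}, a code satisfying (1)--(3) is an NCW-SOOC exactly when its length $n$ satisfies $n\ge 2D+3=2m(\T)+1$, so the shortest admissible length is $n=2m(\T)+1$. A DTS $\T$ is optimal precisely when $m(\T)=M(I,\{J_1,\dots,J_I\})$, i.e.\ when this length $2m(\T)+1$ is as small as possible over all DTS with the given parameters; and since the correspondence $\T\leftrightarrow\mathcal{C}$ above is a bijection between DTS and NCW-SOOCs with $n=2m(\T)+1$, minimizing $m(\T)$ is the same as achieving the minimal code length $N_m$. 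Hence an optimal $\T$ corresponds to a perfect $(2m(\T)+1,\{J_1+1,\dots,J_I+1\},1)$-NCW-SOOC, and vice versa.

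The only genuinely delicate point — and the step I would be most careful about — is matching the two notions of ``optimal/perfect'': one must check that the map $\T\mapsto\mathcal{C}$ does not merely produce \emph{some} NCW-SOOC but that the minimum of $m(\T)$ over all DTS and the minimum code length $N_m$ over all NCW-SOOCs with the prescribed weight set occur together, i.e.\ that no NCW-SOOC of length $<2m_{\min}+1$ exists. This follows because every NCW-SOOC of length $n$ with weight set $\{J_1+1,\dots,J_I+1\}$ gives, by the reverse construction (normalize the supports by subtracting the smallest ``1'' position), a DTS with $m(\T)=D+1\le (n-3)/2 + 1=(n-1)/2$, so $n\ge 2m(\T)+1\ge 2m_{\min}+1$; thus no shorter NCW-SOOC can exist, and equality is realized exactly by the codes coming from optimal DTS. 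This is the analogue of Theorem 5 and Corollary 1 of \cite{Chu2003}, and the argument is essentially the constant-weight proof carried out coordinate-block by coordinate-block, so no new ideas beyond bookkeeping the variable sizes $J_i$ are needed.
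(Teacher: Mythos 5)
Your proof is correct and follows exactly the route the paper intends: the paper gives no written proof beyond the remark that the result is ``similar to Theorem 5 and Corollary 1 in \cite{Chu2003},'' and your write-up is precisely that analogy carried out in detail (blocks as codeword supports, $D=m(\T)-1$, Lemma \ref{2D+3} forcing $n=2m(\T)+1$, and the reverse construction matching optimality of $\T$ with perfection of the code). No discrepancy with the paper's approach.
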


Let $(g,W,1,Q)$-{\rm PDF} be a $(g,W,1)$-{\rm PDF} with block size  distribution sequence $Q$.
The following result is obtained.

\begin{theorem}\label{PDF to NCWSOOC}
If there exists a  $(g,W,1,Q)$-{\rm PDF}, then there exists a perfect  $(g,W,1,Q)$-{\rm NCW-SOOC}.
\end{theorem}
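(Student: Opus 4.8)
The plan is to exhibit a direct correspondence between a $(g,W,1,Q)$-PDF and a difference triangle set of variable sizes, and then invoke Theorem~\ref{equivalence}. Let $\mathcal{F}=\{B_1,\ldots,B_h\}$ be the given $(g,W,1,Q)$-PDF, where $B_i$ is a $k_i$-subset of $\mathbf{Z}_g$ and $k_i\in W$. First I would write each block in normalized integer form: choose representatives $0=a_{i0}<a_{i1}<\cdots<a_{i,k_i-1}<g$ for the elements of $B_i$ (this is possible since we may translate each block so that $0$ is its smallest element, and then lift to $\{0,1,\ldots,g-1\}\subseteq\mathbf{Z}$). Set $T_i=\{a_{i0},\ldots,a_{i,k_i-1}\}$ and $\mathcal{T}=\{T_1,\ldots,T_h\}$, a candidate $(h,\{k_1-1,\ldots,k_h-1\})$-DTS.

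The key step is to verify that $\mathcal{T}$ is in fact a DTS, i.e.\ that all the integer differences $a_{il}-a_{il'}$ (over all $i$ and all $l\neq l'$) are pairwise distinct, and moreover that $m(\mathcal{T})\le \frac{g-1}{2}$. For this I would use the defining property of a PDF: the directed list of differences $\Delta\mathcal{F}$ covers $\{1,2,\ldots,\frac{g-1}{2}\}$ exactly once (and by taking negatives, $\Delta\mathcal{F}\cup(-\Delta\mathcal{F})$ covers $\mathbf{Z}_g\setminus\{0\}$ exactly once). Each positive difference $a_{il}-a_{il'}$ with $l>l'$ is an integer in $[1,g-1]$; reducing mod $g$ it equals either some element of $\{1,\ldots,\frac{g-1}{2}\}$ or its negative, and since every such residue class is hit exactly once by the PDF, no two positive integer differences can be equal \emph{and} no positive integer difference can exceed $\frac{g-1}{2}$ (were $a_{il}-a_{il'}>\frac{g-1}{2}$, its negative residue class $g-(a_{il}-a_{il'})\in\{1,\ldots,\frac{g-1}{2}\}$ would also be forced to appear, contradicting the exact-once cover, since the sum of the two block differences would over-count). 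Hence the largest integer difference is $\le\frac{g-1}{2}$, which gives $m(\mathcal{T})\le\frac{g-1}{2}$, and all differences are distinct, so $\mathcal{T}$ is an $(h,\{k_1-1,\ldots,k_h-1\})$-DTS with $m(\mathcal{T})\le\frac{g-1}{2}$.

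Next I would argue optimality of $\mathcal{T}$, i.e.\ $m(\mathcal{T})=M(h,\{k_1-1,\ldots,k_h-1\})$. A general $(h,\{J_1,\ldots,J_h\})$-DTS with $J_i=k_i-1$ produces $\sum_i\binom{J_i+1}{2}=\sum_i\binom{k_i}{2}=\frac{g-1}{2}$ distinct positive differences, all of which are positive integers, so $m(\mathcal{T}')\ge\frac{g-1}{2}$ for any such DTS $\mathcal{T}'$; combined with $m(\mathcal{T})\le\frac{g-1}{2}$ this forces $m(\mathcal{T})=\frac{g-1}{2}=M(h,\{k_1-1,\ldots,k_h-1\})$, so $\mathcal{T}$ is optimal. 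Finally, by Theorem~\ref{equivalence}, an optimal $(h,\{k_1-1,\ldots,k_h-1\})$-DTS is equivalent to a perfect $(2m(\mathcal{T})+1,\{k_1,\ldots,k_h\},1)$-NCW-SOOC; here $2m(\mathcal{T})+1=g$ and the multiset of block sizes $\{k_1,\ldots,k_h\}$ has the prescribed weight set $W$ with the same distribution sequence $Q$ (the passage from $\mathcal{F}$ to $\mathcal{T}$ to the NCW-SOOC preserves the number of blocks of each size). This yields the desired perfect $(g,W,1,Q)$-NCW-SOOC.

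The main obstacle I anticipate is the bookkeeping around the "exactly once" condition: one must be careful that a PDF on $\mathbf{Z}_g$ with block base point $0$ lifts to an \emph{integer} block whose internal differences are genuinely distinct as integers (not merely as residues) and are bounded by $\frac{g-1}{2}$ — this is where the packing/exact-cover hypothesis is used essentially, and it is the only place the argument could go wrong if, say, two differences coincided mod $g$ but not over $\mathbf{Z}$, or a difference wrapped past $\frac{g-1}{2}$. Everything else (the weight-distribution bookkeeping and the appeal to Theorem~\ref{equivalence}) is routine.
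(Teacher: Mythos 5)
Your proposal is correct and follows essentially the same route as the paper: convert the $(g,W,1,Q)$-PDF into an optimal $(h,\{k_1-1,\ldots,k_h-1\})$-DTS with $m(\mathcal{T})=\frac{g-1}{2}$ and then invoke Theorem~\ref{equivalence}. The only difference is that the paper delegates the PDF-to-optimal-DTS step to a citation of Shearer's survey (stating it for constant block size and asserting the variable-size case ``similarly''), whereas you verify it directly via the lifting-to-integers and counting argument; that extra detail is sound and arguably makes the proof more self-contained.
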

\begin{proof} It is stated in \cite{Sh07} that a $(g,k,1)$-PDF is an optimal $(\frac {g-1}{k(k-1)},k-1)$-DTS with $m(\T)=\frac {g-1}{2}$.
Similarly, a $(g,W,1,Q)$-PDF with $t$ blocks is an optimal $(t,\{w-1|w\in W\})$-DTS with block size distribution sequence $Q$, and $m(\T)=\frac {g-1}{2}$.
Then, the conclusion comes from Theorem \ref{equivalence}.
\end{proof}

From Theorems \ref{all M-PDF}, \ref{PDF to NCWSOOC}, one can get the following result.

\begin{corollary} \label{NCWSOOC-result}
There exists a perfect optimal $(g,\{3,4\},1,(1-\theta,\theta))$-{\rm NCW-SOOC} with $\theta\geq\frac{1}{14}$ for each $g\equiv1\pmod6, g\geq13$.
\end{corollary}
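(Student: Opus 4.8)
The plan is to combine the complete existence result for variable-ratio $(g,\{3,4\},1)$-PDFs in Theorem~\ref{all M-PDF} with the PDF-to-SOOC transfer of Theorem~\ref{PDF to NCWSOOC}. First I would line up the congruence classes: every $g\equiv 1\pmod 6$ with $g\ge 13$ can be written uniquely as $g=6h+1$ with $h\ge 2$, and conversely each such $h$ produces such a $g$. So it suffices to exhibit, for each integer $h\ge 2$, a perfect optimal $(6h+1,\{3,4\},1,(1-\theta,\theta))$-NCW-SOOC with $\theta\ge\frac1{14}$.

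Second, for a fixed $h\ge 2$ I would invoke Theorem~\ref{all M-PDF} to obtain a $(6h+1,\{3,4\},1)$-PDF of type $3^s4^t$ with $6s+12t+1=6h+1$ and $\theta:=\frac{t}{s+t}\ge\frac1{14}$. Since this family has precisely $s$ blocks of size $3$ and $t$ blocks of size $4$, its block-size distribution sequence is exactly $Q=(1-\theta,\theta)$; hence it is a $(g,\{3,4\},1,Q)$-PDF in the sense defined just before Theorem~\ref{PDF to NCWSOOC}. Applying Theorem~\ref{PDF to NCWSOOC} then yields a perfect $(g,\{3,4\},1,Q)$-NCW-SOOC of length $g$.

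Third, for the optimality clause I would trace through Theorem~\ref{equivalence} together with the proof of Theorem~\ref{PDF to NCWSOOC}: a $(g,\{3,4\},1,Q)$-PDF with $s+t$ blocks is an optimal $(s+t,\{w-1\mid w\in\{3,4\}\})$-DTS with $m(\T)=\frac{g-1}{2}$, and by Theorem~\ref{equivalence} such an optimal DTS corresponds to a perfect, length-optimal NCW-SOOC of length $2m(\T)+1=g$. Ranging over all $h\ge 2$ and recording the weight $\theta=\frac{t}{s+t}\ge\frac1{14}$ inherited from Theorem~\ref{all M-PDF} then gives the corollary.

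The only substantive checks are the bookkeeping in the first two steps — that the parameter $h$ and the weight-distribution sequence $Q=(1-\theta,\theta)$ transfer faithfully from the PDF to the SOOC, and that the bound $\theta\ge\frac1{14}$ is preserved verbatim. Since Theorems~\ref{all M-PDF}, \ref{PDF to NCWSOOC} and \ref{equivalence} are already established, no genuine obstacle remains; the potentially delicate point, namely that $\frac{g-1}{2}$ is truly the minimum achievable scope so the resulting code is length-optimal, is exactly what the optimality of a PDF viewed as a DTS provides.
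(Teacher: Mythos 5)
Your proposal is correct and follows essentially the same route as the paper, which simply cites Theorem~\ref{all M-PDF} and Theorem~\ref{PDF to NCWSOOC} to deduce the corollary. Your extra bookkeeping (the bijection $g=6h+1$, the identification of $Q=(1-\theta,\theta)$ with the block-size distribution, and tracing optimality back through Theorem~\ref{equivalence}) just makes explicit what the paper leaves implicit.
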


\section{Conclusions}

In this paper, the constructions of perfect difference matrices and perfect difference families are presented.
A PDM$(3,m)$ exists  for any odd $5\leq m<1000$ with two definite exceptions of $m=9,11$
and $33$ possible exceptions, this greatly improved the known results on PDM$(3,m)$s.
New infinite class of perfect difference families with block size set $K=\{3,4\}$ are obtained. As an application,
perfect $(g,\{3,4\},1,(1-\theta,\theta))$-NCW-SOOCs
with $\theta\geq\frac{1}{14}$ are obtained for each $g\equiv1\pmod6$, $g\geq13$.

\noindent {\bf Acknowledgments} \
Research of Xianwei Sun  was supported  by NSFC (No. 61771354);
Research of Huangsheng Yu was supported   by NSFC (No. 11801103);
 Research of Dianhua Wu  was supported  by  NSFC (No. 12161010).

~\\

\noindent {\bf Appendix: Blocks of the PDFs in Lemma \ref{smallMPDF}}
~\\
$h=6,~\\
\{0,5,16\},\{0,6,18\},\{0,1,4,14\},\{0,2,9,17\}$.
~\\
$h=7~\\
\{0,8,20\},\{0,1,10,15\},\{0,2,18,21\},\{0,4,11,17\}$.
~\\
$h=9~\\
\{0,8,24\},\{0,1,4,22\},\{0,2,11,25\},\{0,6,19,26\},\{0,10,15,27\}$.
~\\
$h=11~\\
\{0,8,24\},\{0,1,3,29\},\{0,5,23,30\},\{0,6,20,33\},\{0,9,19,31\}, \{0,11,15,32\}$.
~\\
$h=13~\\
\{0,11,33\},\{0,29,32,39\},\{0,5,19,31\},\{0,6,30,34\},
\{0,1,21,37\},\{0,8,17,35\},\{0,13,15,38\}$.
~\\
$h=15~\\
\{0,13,39\},\{0,8,35,38\},\{0,4,18,41\},\{0,17,33,45\},
\{0,10,42,44\},\{0,1,6,25\},\{0,7,22,43\},~\\ \{0,9,29,40\}$.
~\\
$h=17~\\
\{0,13,39\},\{0,14,37,47\},\{0,12,15,44\},\{0,11,42,49\},
\{0,28,48,50\},\{0,6,24,51\},\{0,4,21,40\},~\\ \{0,5,30,46\},
\{0,8,9,43\}$.
~\\
$h=19~\\
\{0,19,56\},\{0,8,32,48\},\{0,9,36,54\},\{0,42,47,57\},
\{0,23,49,51\},\{0,22,35,52\},\{0,3,34,41\},~\\ \{0,4,29,50\},
\{0,11,12,55\},\{0,14,20,53\}$.
~\\
$h=21~\\
\{0,19,62\},\{0,7,28,42\},\{0,9,36,54\},\{0,10,40,60\},
\{0,11,34,63\},\{0,15,41,47\},\{0,13,44,46\},~\\ \{0,1,5,58\},
\{0,3,51,59\},\{0,12,37,61\},\{0,16,38,55\}$.
~\\
$h=23~\\
\{0,19,68\},\{0,7,28,42\},\{0,9,36,54\},\{0,10,40,60\},
\{0,11,44,66\},\{0,5,39,62\},\{0,16,64,67\},~\\ \{0,8,37,61\},
\{0,25,31,63\},\{0,1,47,59\},\{0,2,17,43\},\{0,4,56,69\}.$
~\\
$h=25~\\
\{0,23,72\},\{0,9,36,54\},\{0,10,40,60\},\{0,11,44,66\},
\{0,52,57,64\},~\{0,15,58,74\},\{0,13,41,75\},~\\ \{0,37,68,69\},
\{0,42,46,71\},\{0,17,38,73\},\{0,39,63,65\},\{0,3,51,70\},
\{0,8,14,61\}$.
~\\
$h=27$~\\
$\{0,1,81\},\{0,36,55,70\},\{0,14,22,73\},\{0,17,67,79\},\{0,41,66,76\},
\{0,27,56,60\}, \{0,16,68,77\},~\\ \{0,11,58,64\},
\{0,28,49,54\}, \{0,23,63,65\},\{0,3,48,72\},\{0,37,57,75\},
\{0,7,46,78\},\{0,30,43,74\}$.
~\\
$h=29$~\\
$\{0,29,86\},\{0,11,44,66\},\{0,12,48,72\},\{0,13,52,78\},
\{0,14,56,84\},\{0,4,63,73\},\{0,3,38,85\},~\\ \{0,2,21,79\},
\{0,49,64,80\},\{0,1,41,75\},\{0,20,81,87\},\{0,27,32,50\},
\{0,9,54,71\},\{0,8,51,76\},~\\ \{0,30,37,83\}$.
~\\
$h=31$~\\
$\{0,1,93\},\{0,33,55,85\},\{0,21,57,89\},\{0,45,61,71\},
\{0,7,51,90\},\{0,3,63,80\},\{0,14,67,86\},~\\ \{0,2,58,78\},
\{0,38,47,75\},\{0,64,69,82\}, \{0,4,66,74\},\{0,48,79,91\},
\{0,6,46,87\},\{0,34,59,88\},~\\ \{0,11,35,84\},\{0,15,42,65\}$.
~\\
$h=33$~\\
$\{0,1,99\},\{0,46,68,81\},\{0,14,90,93\},\{0,11,75,94\},
\{0,21,47,88\},\{0,7,69,96\},\{0,36,74,78\},~\\ \{0,20,63,65\},
\{0,31,48,97\},\{0,33,73,91\},\{0,29,59,82\},\{0,56,71,95\},
\{0,51,57,85\},~\\ \{0,16,60,70\},\{0,12,84,92\},\{0,32,37,87\},
\{0,9,61,86\}$.
~\\
$h=35$~\\
$\{0,1,105\},\{0,12,87,91\},\{0,15,57,64\},\{0,26,61,100\},
\{0,9,59,92\},\{0,54,68,98\},\{0,2,84,95\},~\\ \{0,16,34,85\},
\{0,6,76,103\},\{0,41,81,101\}, \{0,8,71,96\},\{0,10,62,90\},
\{0,43,46,99\},~\\ \{0,45,58,77\},\{0,22,89,94\},\{0,31,55,78\},
\{0,36,73,102\},\{0,21,38,86\}$.
~\\
$h=39$~\\
$\{0,1,117\},\{0,3,66,101\},\{0,44,86,93\},\{0,20,57,103\},
\{0,15,99,115\},\{0,9,70,111\},~\\ \{0,34,64,114\},\{0,5,22,96\},
\{0,12,72,104\},\{0,27,51,105\},\{0,52,97,110\}\{0,23,59,112\},~\\
\{0,18,28,113\},\{0,2,6,75\},\{0,38,81,106\},\{0,39,65,94\},
\{0,62,76,109\},\{0,40,88,107\},~\\ \{0,8,79,90\},\{0,21,77,108\}$.
~\\

\end{document}